\newif\iflong
\lstdefinelanguage{program}{%
  keywords={%
    let,pass,function,%
    var,const,bool,int,void,atomic,%
    while,do,if,then,else,assume,assert,call,return,rule,forall,with,new,choose,skip,%
    task,async,yield,for,wait,%
    type,relation,init, action, safety, invariant, axiom, input,repeat
  },
  morecomment=[l]{//},
  morecomment=[s]{/*}{*/},
  morecomment=[n]{(*}{*)},
  mathescape=true,
  escapeinside=`',
}
\newcounter{algorithmicH}%
\let\oldalgorithmic\algorithmic
\renewcommand{\algorithmic}{%
  \stepcounter{algorithmicH}%
  \oldalgorithmic}%
\renewcommand{\theHALG@line}{ALG@line.\thealgorithmicH.\arabic{ALG@line}}
\newif\ifnitpick
\newif\ifproofs
\newcommand{\refappendix}[1]{\Cref{#1}}
\newcommand{\refappendix}[1]{the extended version~\cite{extendedVersion}}
\newcommand{\toolong}[1]{#1}
\newcommand{\toolong}[1]{}
\Crefname{conjecture}{Conjecture}{Conjectures}
\Crefname{proposition}{Proposition}{Propositions}
\Crefname{lemma}{Lemma}{Lemmas}
\Crefname{corollary}{Corollary}{Corollaries}
\Crefname{example}{Example}{Examples}
\Crefname{definition}{Def.}{Defs.}
\Crefname{algorithm}{Alg.}{Alg.}
\Crefname{theorem}{Thm.}{Thm.}
\Crefname{figure}{Fig.}{Fig.}
\crefname{line}{line}{lines}
\newtheorem{theo}{Theo}[section] %
\newtheorem{claim}[theo]{Claim}
\newtheorem{remark}[theo]{Remark}
\newcommand{\para}[1]{\vspace{2pt}\noindent\textbf{\textit{#1.}}}
\newcommand{\ov}{\overline}
\newcommand{\A}{\mathcal{A}}
\newcommand{\card}[1]{{\left\vert{#1}\right\vert}} %
\renewcommand{\implies}{\Longrightarrow}
\newcommand{\true}{{\textit{true}}}
\newcommand{\false}{{\textit{false}}}
\newcommand{\vocabulary}{\Sigma}
\newcommand{\voc}{\vocabulary}
\newcommand{\Init}{{\textit{Init}}}
\newcommand{\Bad}{\textit{Bad}}
\newcommand{\TS}{{\textit{TS}}}
\newcommand{\tr}{{\delta}}
\newcommand{\cnf}[2]{\text{\rm CNF}_{#1}^{#2}}
\newcommand{\dnf}[2]{\text{\rm DNF}_{#1}^{#2}}
\newcommand{\moncnf}[2]{\text{\rm Mon-}\text{\rm CNF}_{#1}^{#2}}
\newcommand{\mondnf}[2]{\text{\rm Mon-}\text{\rm DNF}_{#1}^{#2}}
\newcommand{\NP}{\textbf{NP}}
\renewcommand{\vec}{\ov}
\newcommand{\project}[2]{\pi_{#2}({#1})}
\newcommand{\set}[1]{\{{#1}\}}
\newcommand{\cube}[1]{\textit{cube}({#1})}
\newcommand{\prop}{x}
\newcommand{\postimage}[2]{{#1}({#2})}
\newcommand{\bmcunroll}[2]{{#1}^{\leq {#2}}}
\newcommand{\bmc}[3]{\bmcunroll{#1}{#3}({#2})}
\newcommand{\bmcback}[3]{\bmc{({#1}^{-1})}{{#2}}{{#3}}}
\newcommand{\backreachable}[2]{\bmcback{{#1}}{{#2}}{\infty}}
\newcommand{\cubemon}[2]{\textit{cube}_{{#2}}({#1})}
\newcommand{\monox}[2]{\mathcal{M}_{#2}({#1})}
\newcommand{\interpolant}{\chi}
\newcommand{\neighborhood}[1]{N({#1})}
\newcommand{\boundarypos}[1]{\partial^{+}({#1})}
\newcommand{\boundaryneg}[1]{\partial^{-}({#1})}
\newcommand{\bigO}{O}
\newcommand{\instr}[1]{\widehat{#1}}
\newcommand{\code}[1]{\ensuremath{\mathtt{#1}}}
\newcommand{\equivalencequery}[1]{{\rm EquivalenceQuery}\left({#1}\right)}
\newcommand{\membershipquery}[1]{{\rm MembershipQuery}\left({#1}\right)}
\newcommand{\ourinterpolationalgname}{\mbox{\rm ITP-Inference-TermMin}}
\newcommand{\dualourinterpolationalgname}{\mbox{\rm Dual-ITP-Inference-ClauseMin}}
\newcommand{\bshoutyinferencealgplain}{$\Lambda$-Inference}
\newcommand{\bshoutyinferencealg}{\mbox{\rm \bshoutyinferencealgplain}}
\begin{document}

\newif\ifcomments
\commentsfalse
\nochangebars
\definecolor{dg}{cmyk}{0.60,0,0.88,0.27}

\newcommand{\sharonnew}[1]{\sharon{#1}}
\newcommand{\yotamnew}[1]{\yotamsmall{#1}}

\ifcomments
\newcommand{\artem}[1]{{\footnotesize\color{olive}[{\bf Artem}: #1]}}
\newcommand{\yotamsmall}[1]{{\footnotesize\color{magenta}[{\bf Yotam}: #1]}}

\newcommand{\sharon}[1]{{\textcolor{purple}{SS: {\em #1}}}}
\newcommand{\adam}[1]{{\textcolor{teal}{AM: {\em #1}}}}
\newcommand{\mooly}[1]{{\textcolor{cyan}{MS: {\em #1}}}}
\newcommand{\noam}[1]{{\textcolor{dg}{NR: {\em #1}}}}
\newcommand{\yotam}[1]{{\textcolor{magenta}{{\bf #1}}}}
\newcommand{\TODO}[1]{{\textcolor{red}{TODO: {\em #1}}}}

\else
\newcommand{\sharon}[1]{}
\newcommand{\adam}[1]{}
\newcommand{\mooly}[1]{}
\newcommand{\neil}[1]{}
\newcommand{\yotam}[1]{}
\newcommand{\TODO}[1]{}
\newcommand{\artem}[1]{}
\newcommand{\yotamsmall}[1]{}

\fi

\newcommand{\commentout}[1]{}
\newcommand{\OMIT}[1]{}  
\title{Learning the Boundary of Inductive Invariants}

\author{Yotam M. Y. Feldman}
\affiliation{
  \institution{Tel Aviv University}
  \country{Israel}
}
\email{yotam.feldman@gmail.com}

\author{Mooly Sagiv}
\affiliation{
  \institution{Tel Aviv University}
  \country{Israel}
}
\email{msagiv@acm.org}

\author{Sharon Shoham}
\affiliation{
  \institution{Tel Aviv University}
  \country{Israel}
}
\email{sharon.shoham@gmail.com}

\author{James R. Wilcox}
\affiliation{
  \institution{Certora}
  \country{USA}
}
\email{james@certora.com}

\begin{abstract}
We study the \emph{complexity of invariant inference} and its connections to \emph{exact concept learning}.
We define a condition on invariants and their geometry, called the \emph{fence} condition, which permits applying theoretical results from exact concept learning to answer open problems in invariant inference theory.
The condition requires the invariant's \emph{boundary}---the states whose Hamming distance from the invariant is one---to be backwards reachable from the bad states in a small number of steps.
Using this condition, we obtain the first polynomial complexity result for an interpolation-based invariant inference algorithm, efficiently inferring monotone DNF invariants with access to a SAT solver as an oracle.
We further harness Bshouty's seminal result in concept learning to efficiently infer invariants of a larger syntactic class of invariants beyond monotone DNF.
Lastly, we consider the robustness of inference under program transformations. We show that some simple transformations preserve the fence condition, and that it is sensitive to more complex transformations.
\end{abstract}  

 \begin{CCSXML}
<ccs2012>
<concept>
<concept_id>10003752.10010070</concept_id>
<concept_desc>Theory of computation~Theory and algorithms for application domains</concept_desc>
<concept_significance>500</concept_significance>
</concept>
<concept>
<concept_id>10003752.10010124.10010138.10010142</concept_id>
<concept_desc>Theory of computation~Program verification</concept_desc>
<concept_significance>500</concept_significance>
</concept>
<concept>
<concept_id>10011007.10010940.10010992.10010998</concept_id>
<concept_desc>Software and its engineering~Formal methods</concept_desc>
<concept_significance>500</concept_significance>
</concept>
</ccs2012>
\end{CCSXML}

\ccsdesc[500]{Theory of computation~Theory and algorithms for application domains}
\ccsdesc[500]{Theory of computation~Program verification}
\ccsdesc[500]{Software and its engineering~Formal methods}
\keywords{invariant inference, complexity, exact learning, interpolation, Hamming geometry}

\maketitle

\section{Introduction}
\label{sec:intro}
This paper addresses the problem of inferring inductive invariants to verify the safety of systems, which lies at the foundation of software and hardware verification. This problem has high complexity, even in finite-state systems~\cite{DBLP:conf/cade/LahiriQ09}\sharon{this sentence (judging by the refs) already considers the SAT-based case. Move to after next sentence?}\yotamsmall{perhaps remove this sentence? I need this not to clash with ``the complexity is not well understood''}\sharon{suggestion (if you don't like it, just keep current version; don't remove the sentence):\\
We are particularly interested in SAT-based inductive invariant inference, which harnesses the SAT solver in order to check if candidate invariant is inductive and to perform bounded model checking. SAT-based invariant inference has high complexity, even in finite-state systems~\cite{DBLP:conf/cade/LahiriQ09,DBLP:journals/pacmpl/FeldmanISS20}
Yet, many techniques tackle this challenge....}.
We are particularly interested in SAT-based inductive invariant inference, which harnesses the SAT solver in order to check if candidate invariant is inductive and to perform bounded model checking.
Many techniques tackle this challenge~\cite[e.g.][]{DBLP:conf/cav/McMillan03,ic3,pdr,DBLP:journals/sttt/SrivastavaGF13,DBLP:series/natosec/AlurBDF0JKMMRSSSSTU15,DBLP:conf/tacas/FedyukovichB18,DBLP:conf/oopsla/DilligDLM13}.
Despite their practical achievements, the theoretical understanding of the complexity of these algorithms is lacking.

This work harnesses the theory of exact concept learning to broaden our understanding of the complexity of invariant inference. We focus on the fundamental setting of propositional systems, which is useful also for infinite-state systems through predicate abstraction~\cite{DBLP:conf/cav/GrafS97,DBLP:conf/popl/FlanaganQ02}.

Our results shed light on the cases in which interpolation-based invariant inference~\cite[e.g.][]{DBLP:conf/cav/McMillan03,DBLP:conf/cav/McMillan06} behaves well.
We analyze an interpolation-based algorithm in which interpolants are computed in a model-based fashion~\cite{DBLP:conf/hvc/ChocklerIM12,DBLP:conf/lpar/BjornerGKL13}, which is inspired by IC3/PDR~\cite{ic3,pdr}, %
and prove an upper bound on the number of iterations until convergence to an invariant.
This upper bound is formulated using a condition, which we call the \emph{fence} condition, on the relationship between reachability in the transition system %
and the geometric notion of the \emph{boundary} of the invariant.
Recall that invariants denote sets of states.
The boundary of the invariant $I$ is the set of states whose Hamming distance from $I$ is one---states in $\neg I$ where flipping just a single bit results in a state that belongs to $I$. (See~\Cref{fig:boundary} for an illustration.)
\begin{figure}[t]
\centering
\begin{minipage}{.5\textwidth}
  \centering
  	 \captionsetup{width=.9\textwidth}
    \includegraphics[width=0.8\textwidth]{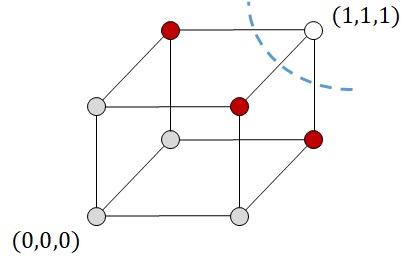}
  	\caption{\footnotesize The (outer) boundary of an invariant $I = x \land y \land z$, denoting the singleton set containing the far-top-right vertex of the 3-dimensional Boolean hypercube, $\set{(1,1,1)}$. Its neighbors are $I$'s boundary (depicted in red): $\set{(1,1,0),(1,0,1),(0,1,1)}$. The rest of the vertices are in $\neg I$ but not in the boundary (depicted in gray). (Illustration inspired by~\cite[][Fig.\ 2.1]{DBLP:books/daglib/0033652}.)
  }
  \label{fig:boundary}
\end{minipage}%
\begin{minipage}{.5\textwidth}
  \centering
  	\captionsetup{width=.9\textwidth}
    \includegraphics[width=0.8\textwidth]{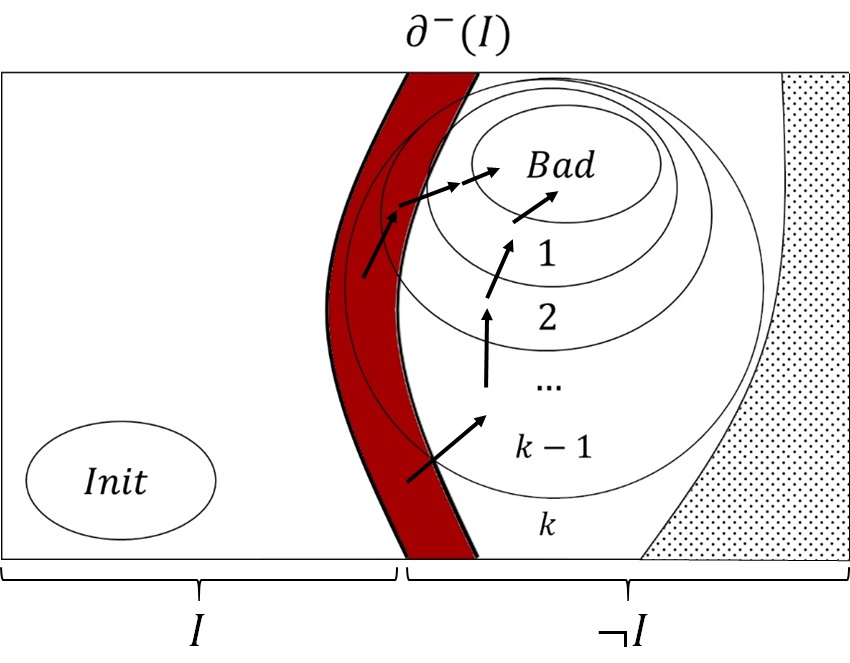}
  	\caption{\footnotesize An illustration of the fence condition. The boundary $\boundaryneg{I}$ of the invariant (the states in $\neg I$ nearest to $I$, in red) are backwards $k$-reachable (reach a bad state in $k$ steps, for example by the transitions depicted by the arrows), but not all states in $\neg I$ are backwards $k$-reachable (or even backwards reachable at all, in the dotted area).
  }
  \label{fig:fence}
\end{minipage}
\vspace{-0.5cm}
\end{figure} The $k$-fence condition requires that all the states in $I$'s boundary reach a bad state in at most $k$ steps. (See~\Cref{fig:fence}.)
Intuitively, the states in the boundary are important, because ``perturbing'' a single bit is the difference between being included in $I$ and being excluded by it (somewhat analogously to error-correcting codes). Through their backwards reachability in $k$ steps,
the fence condition guarantees that the interpolation algorithm, when it uses $k$ for the model checking bound, never ``overshoots'' to invariants that are weaker than $I$ (even though $I$ itself is unknown to the algorithm).
Every safe propositional system admits an invariant that is $k$-fenced for some finite $k$, but this is most useful when (1) $k$ is not prohibitively large, and (2) the invariant has a short representation.

The fence condition opens up the possibility to obtain efficiency results for invariant inference.
We prove that if the fence condition holds and the invariant has a short disjunctive normal form (DNF) representation that is monotone, i.e., all variables appear only positively, then the interpolation algorithm converges in a linear number of iterations.

We are able to transcend this result using a more complex algorithm, in a sense combining multiple instances of the interpolation-based algorithm, based on work in machine learning by~\citet{DBLP:journals/iandc/Bshouty95}. We prove that when the fence condition holds the algorithm
achieves efficient inference for the wider class of \emph{almost-monotone} DNF invariants (DNF invariants that have a constant number of terms that include negative literals).

These theorems
are rooted in results on efficient \emph{exact concept learning} of monotone and almost-monotone DNF formulas by~\citet{DBLP:journals/ml/Angluin87} and~\citet{DBLP:journals/iandc/Bshouty95}, respectively.
In exact concept learning, a learner poses queries to a teacher in order to identify an unknown formula. Unfortunately, invariant inference is harder in general, because the classical queries cannot be implemented~\cite{ICELearning,DBLP:journals/pacmpl/FeldmanISS20}. We show that the fence condition can alleviate these problems when the exact learning algorithm is guaranteed to query only on examples that are positive or have Hamming distance one from a positive example.
In fact, we show a general transformation from algorithms for exact concept learning into invariant inference algorithms that obtain analogous correctness results when the fence condition holds. Such transformations are surprising, because in invariant inference, unlike classical concept learning, the teacher---the SAT solver in our case---does not know the target ``concept'' $I$, and yet it is still able to ``teach'' $I$ to the algorithm, thanks to the fence condition.
A similar transformation, applied to the CDNF algorithm by~\citet{DBLP:journals/iandc/Bshouty95}, provides an algorithm that efficiently infers any invariant that has a short representation in both conjunctive- and disjunctive-normal form (including all small Boolean decision trees) if it satisfies a stronger, two-sided version of the fence condition.
Overall, the fence condition bridges the study of efficient exact concept learning and the complexity analysis of invariant inference.

One of the most significant open problems in formal verification is the \emph{robustness} of the inference method with respect to revisions in the code. For example, suppose the inference algorithm succeeded in inferring an invariant, will this hold after a simple change? This problem is very difficult in general; we address it in the context of the interpolation algorithm, and study whether the fence condition holds after a transformation of the program. We show that very simple transformations preserve the fence condition: variable renaming, and taking conjunctions and disjunctions of properties. In contrast, after a new variable is introduced to capture some meaning over the existing variables (in the hope of aiding inference), an invariant that does not refer to the new variable cannot satisfy the fence condition. Similarly, a natural transformation that attempts to reduce non-monotone invariants to monotone ones is unsuccessful. %

Our results provide an important step towards understanding the complexity of invariant inference algorithms.
It is known that restricting invariants to conjunctions enables efficient inference~\cite{DBLP:conf/fm/FlanaganL01,DBLP:conf/cade/LahiriQ09}.
This work shows that efficient invariant inference can be made possible for invariants of more complex syntactic forms through properties of the transition system and the invariant. %
Our results go beyond the recent upper bound of~\citet{DBLP:journals/pacmpl/FeldmanISS20} for maximal systems in that in our results, the diameter need not be one, not all states in the invariant need to be $k$-reachable, and the invariant can be richer than monotone DNF.

\iflong
\else

\begin{figure*}[t]
  \centering
\begin{minipage}{\textwidth}
  \begin{minipage}{0.55\textwidth}
  \begin{lstlisting}[numbersep=5pt, escapeinside={(*}{*)}, xleftmargin=3.0ex]
init $x_1 = \ldots = x_n = \true$

repeat:
      | hot_potato()
      | turn_two_on()
      | havoc_others()
   
   assert $\neg(x_1 = \false, x_2 = \ldots = x_n = \true)$

havoc_others():
   forall $x_i \not\in J$:
      $x_i$ := *
  \end{lstlisting}
\end{minipage}
  \begin{minipage}{0.35\textwidth}
  \begin{lstlisting}[numbersep=5pt, escapeinside={(*}{*)}, xleftmargin=3.0ex]
hot_potato():
   let $x_i \neq x_j \in J$
   if $x_i=\false \land x_j=\true$:
      $x_i$ := $\true$
      $x_j$ := $\false$

turn_two_on():
   let $x_i \neq x_j \in J$
   if $x_i=\false \land x_j=\false$:
      $x_i$ := $\true$
      $x_j$ := $\true$
\end{lstlisting}
\end{minipage}
\iflong
\else
\vspace{-0.4cm}
\fi
\captionof{figure}{
\footnotesize An example propositional transition system for which we would like to infer an inductive invariant. The state is over $x_1,\ldots,x_n$. $J \subseteq \set{x_1,\ldots,x_n}$, and we assume $x_1 \in J$. In each iteration, one of the actions (after repeat) is invoked, chosen nondeterministically. $\code{forall}$ is executed in a single step of the system. \code{x := *} means that \code{x} is updated to either true or false nondeterministically.}
  \label{fig:parity}
  \end{minipage}
\iflong
\else
\vspace{-1.0cm}
\fi
\end{figure*}  \fi

Overall, this paper makes the following contributions:
\begin{itemize}
	\item We introduce the fence condition, relating a geometric aspect of the invariant to reachability in the system (\Cref{sec:boundary}).

	\item We prove that when the fence condition holds, an interpolation-based algorithm can efficiently infer short monotone DNF invariants (\Cref{sec:itp-all}). A dual algorithm infers short (anti)monotone CNF invariants.
	This is the first result \begin{changebar}showing a polynomial upper bound on the complexity of an interpolation-based invariant inference algorithm with access to a SAT solver as an oracle\end{changebar}.

	\item We show that an algorithm, based on the work by~\citet{DBLP:journals/iandc/Bshouty95}, can efficiently infer short almost-monotone DNF invariants when the fence condition holds (\Cref{sec:beyond-monotone}).
	A dual algorithm infers short almost-(anti)monotone CNF invariants.

	\item We show that these results can be obtained by a translation of algorithms from exact concept learning admitting special conditions, made possible by the fence condition (\Cref{sec:exact-invariant-learning}).

	\item We study the robustness of the fence condition under program transformations (\Cref{sec:robustness}).
\end{itemize}  %
\section{Overview}
\label{sec:overview}

\TODO{we care about worst case}

\iflong

\begin{figure*}[t]
  \centering
\begin{minipage}{\textwidth}
  \begin{minipage}{0.55\textwidth}
  \begin{lstlisting}[numbersep=5pt, escapeinside={(*}{*)}, xleftmargin=3.0ex]
init $x_1 = \ldots = x_n = \true$

repeat:
      | hot_potato()
      | turn_two_on()
      | havoc_others()
   
   assert $\neg(x_1 = \false, x_2 = \ldots = x_n = \true)$

havoc_others():
   forall $x_i \not\in J$:
      $x_i$ := *
  \end{lstlisting}
\end{minipage}
  \begin{minipage}{0.35\textwidth}
  \begin{lstlisting}[numbersep=5pt, escapeinside={(*}{*)}, xleftmargin=3.0ex]
hot_potato():
   let $x_i \neq x_j \in J$
   if $x_i=\false \land x_j=\true$:
      $x_i$ := $\true$
      $x_j$ := $\false$

turn_two_on():
   let $x_i \neq x_j \in J$
   if $x_i=\false \land x_j=\false$:
      $x_i$ := $\true$
      $x_j$ := $\true$
\end{lstlisting}
\end{minipage}
\iflong
\else
\vspace{-0.4cm}
\fi
\captionof{figure}{
\footnotesize An example propositional transition system for which we would like to infer an inductive invariant. The state is over $x_1,\ldots,x_n$. $J \subseteq \set{x_1,\ldots,x_n}$, and we assume $x_1 \in J$. In each iteration, one of the actions (after repeat) is invoked, chosen nondeterministically. $\code{forall}$ is executed in a single step of the system. \code{x := *} means that \code{x} is updated to either true or false nondeterministically.}
  \label{fig:parity}
  \end{minipage}
\iflong
\else
\vspace{-1.0cm}
\fi
\end{figure*}  \fi
Suppose we would like to automatically prove the safety of the transition system in~\Cref{fig:parity}.
In this simple example, there are $n$ propositional variables $x_1,\ldots,x_n$, and $J$ is a subset of the variables so that $x_1 \in J$. The transition system starts with all variables $\true$, and in each iteration either
\begin{inparaenum}
	\item moves a $\false$ value from one $x_i \in J$ to a different $x_j \in J$ that is not already $\false$,
	\item turns two $x_i,x_j \in J$ that are $\false$ to $\true$, or
	\item assigns arbitrary $\true$/$\false$ values to all variables not in $J$.
\end{inparaenum}
The safety property is that the system can never reach the state where $x_1=\false$ and $x_2,\ldots,x_n$ are $\true$.
Given the transition system and safety property, we would like to find an inductive invariant that establishes it.
One inductive invariant here is that the variables in $J$ are always $\true$:
\begin{equation}
\label{eq:overview-parity-example-inv}
I = \bigwedge_{x_i \in J}{x_i}.
\end{equation}
(For other variables, $x_i \not\in J$, this is not true, as $x_i$ can become $\false$ in~\code{havoc\_others}.)

In invariant inference, the goal is to find such inductive invariants completely automatically.
The main motivation of our investigation is the \textbf{\textit{theoretical understanding of when invariant inference algorithms are successful}}.
To this end, we study the complexity of invariant inference algorithms stemming from the seminal \textbf{\textit{interpolation-based approach to inference}} by~\citet{DBLP:conf/cav/McMillan03}. Our formal setting is the problem of finding invariants for propositional transition systems, a fundamental setting which is also relevant for infinite-state systems through predicate abstraction~\cite{DBLP:conf/popl/FlanaganQ02,DBLP:conf/cav/GrafS97}. %
\TODO{say somewhere we don't care about bugs. only weak people have bugs.}

\subsection{Interpolation-Based Invariant Inference}
We start our investigation with the pioneering interpolation-based algorithm~\cite{DBLP:conf/cav/McMillan03}, depicted in~\Cref{alg:interpolation-abstract}.
The algorithm operates by forward-exploration, in each iteration weakening (adding more states) to the current candidate $I$, starting from $I = \Init$.
The algorithm chooses an unrolling bound $k$, and asserts that $I$ cannot reach $\Bad$ in $k$ steps (\cref{ln:itp-abs:check-no-overapprox}). If the check fails, $I$ is too weak---it includes states that cannot be part of any inductive invariant---and the algorithm restarts, with a larger unrolling bound. \begin{changebar}(Unless this happens already when $I = \Init$, indicating that the system is unsafe; we omit this case here for brevity.)\end{changebar}
Otherwise, the algorithm computes an \emph{interpolant} $\interpolant$: a formula ranging over program states that
\begin{inparaenum}
	\item overapproximates the post-image of $I$, namely, includes all states that are reachable in one step from $I$; and
	\item does not include any state that can reach $\Bad$ in $k-1$ steps. %
\end{inparaenum}
In the original work, $\interpolant$ is obtained from a Craig interpolant~\cite{craig1957} of a bounded model checking (BMC) query, which can be computed from the SAT solver's proof~\cite{DBLP:conf/cav/McMillan03}.
The candidate invariant is weakened by taking a disjunction with the interpolant.
In this way, each iteration adds to the invariant at least all the states that are reachable in one step from the current candidate but are not part of it (which are counterexamples to its induction), guided by BMC.
\noindent
\begin{footnotesize}
\begin{minipage}[t]{0.55\textwidth}
\vspace{-0.5cm}
\begin{algorithm}[H]
\caption{\\Interpolation-based invariant inference\\\cite{DBLP:conf/cav/McMillan03}}
\label{alg:interpolation-abstract}
\begin{algorithmic}[1]
\Procedure{Itp-Inference}{$\Init$, $\tr$, $\Bad$, $k$}
	\State $I \gets \Init$
	\While{$I$ not inductive}
		\If{$\bmc{\tr}{I}{k} \cap \Bad \neq \emptyset$} $\label{ln:itp-abs:check-no-overapprox}$
			\State \textbf{restart} with larger $k$
		\EndIf
		\State find $\interpolant$ such that $\postimage{\tr}{I} \implies \interpolant$, $\bmc{\tr}{\interpolant}{k-1} \cap \Bad = \emptyset$
		\State $I \gets I \lor \interpolant$
	\EndWhile
	\State \Return $I$
\EndProcedure
\end{algorithmic}%
\end{algorithm}%
\end{minipage}%
\begin{minipage}[t]{0.45\textwidth}
\vspace{-0.5cm}
\begin{algorithm}[H]
\caption{\\Interpolation by term minimization\\\cite{DBLP:conf/hvc/ChocklerIM12,DBLP:conf/lpar/BjornerGKL13}}
\label{alg:interpolant-from-sample-minimization}
\begin{algorithmic}[1]
\Procedure{TermMinItp}{$I$, $\tr$, $\Bad$, $k-1$}
	\State $\interpolant$ $\gets$ $\false$ %
	\While{$\postimage{\tr}{I} \not\implies \interpolant$}
		\State \textbf{let} $\sigma'$ such that $\sigma' \in \postimage{\tr}{I}$, $\sigma' \not\models \interpolant$
		\If{$\bmc{\tr}{\sigma'}{k-1} \cap \Bad \neq \emptyset$} $\label{ln:abs-itp:check-no-overapprox}$
			\textbf{fail}
		\EndIf
		$d$ $\gets$ $\cube{\sigma'}$
		\For{$\ell$ in $d$}
			\If{$\bmc{\tr}{d \setminus \set{\ell}}{k-1} \cap \Bad = \emptyset$} $\label{ln:abs-itp:bmc-gen}$
				\State $d$ $\gets$ $d \setminus \set{\ell}$
			\EndIf
		\EndFor
		\State $\interpolant \gets \interpolant \lor d$
	\EndWhile
	\State \Return $\interpolant$
\EndProcedure
\end{algorithmic}
\end{algorithm}
\end{minipage}%
\end{footnotesize}

As the original paper shows, with a sufficiently large $k$, the algorithm is guaranteed to converge to an inductive invariant. However, this may take exponentially many iterations~\cite{DBLP:conf/cav/McMillan03}.
From a complexity-theoretic perspective, this guarantee can also be achieved by a simple naive search.
Clearly, the answer as to why interpolation-based inference is better lies with the virtue of interpolants.
However, each bounded proof may allow several interpolants, all satisfying the requirements from $\interpolant$ above, which nonetheless greatly differ from the perspective of invariant inference. Some may be desired parts of the inductive invariant, others might include also states that can reach $\Bad$ and should not be used. Still other interpolants are safe, but using them would lead to very slow convergence to an inductive invariant.
Choosing ``good'' interpolants is the problem of \emph{generalization}: how should the algorithm choose interpolants so that it finds an invariant quickly?
The present view is that generalization strives to abstract away from irrelevant aspects, which is ``heuristic in nature''~\cite{DBLP:reference/mc/McMillan18}. Perhaps for this reason, there is currently no theoretical understanding of the efficiency of interpolation-based algorithms.
In contrast, we identify certain conditions that facilitate a theoretical complexity analysis of this algorithmic approach.

\label{sec:overview-itp-original-analysis}
The challenges in a theoretical complexity analysis of this approach are best understood by considering the (termination) analysis of~\Cref{alg:interpolation-abstract} by~\citet{DBLP:conf/cav/McMillan03}, which is essentially as follows.
Suppose $k$ was increased sufficiently to match the \emph{co-reachability-diameter} of the system: the maximal number of steps required for a state to reach a bad state, if it can reach any bad state.\footnote{\citet{DBLP:conf/cav/McMillan03} calls this number the ``reverse-depth''.}
Then \emph{any} choice of interpolant $\interpolant$ as above is a safe overapproximation in the sense that no state in $\interpolant$ can reach $\Bad$ in any number of steps (even greater than $k$). The candidate $I$ thus never includes states that can reach $\Bad$; put differently, it is below the greatest fixed-point (gfp)---the weakest invariant, consisting of all states that cannot reach $\Bad$. \begin{changebar}Because $I$ is always strictly increasing (becoming strictly weaker) and there is only a finite (albeit exponential) number of strictly increasing formulas\end{changebar}, \Cref{alg:interpolation-abstract} must converge, and at that point $I$ is an inductive invariant.

From the perspective of \emph{complexity}, this termination analysis has several shortcomings:\sharon{maybe instead of "shortcomings", "unanswered questions"?}: %

\begin{enumerate}
	\item \label{it:overview-question-1} \textbf{Short invariants}:
	Are there cases where the algorithm is guaranteed to find invariants that are not the gfp?
	The gfp captures the set of backwards-reachable state in an exact way, but often this is too costly, and unnecessary.
	For example, in the system of~\Cref{fig:parity}, the states that can reach $\Bad$ are the states where there is an odd number of $\false$ variables in $J$. The gfp is thus the invariant saying that the number of $\false$ variables in $J$ is even, whose minimal DNF representation is exponentially long~\cite[][Theorem 3.19]{DBLP:books/daglib/0028067}.
	Indeed, invariant inference typically strives to achieve an invariant which is ``just right'' for establishing the safety property of interest, but this is not reflected by the existing theoretical analysis. Can the algorithm benefit from the existence of an invariant that has a short representation (whether or not it is the gfp)? %

	\item \label{it:overview-question-2} \textbf{Number of iterations}: If the invariant the algorithm finds has a short representation, does the algorithm find it in a small number of iterations? As an illustration, the algorithm could, in principle, resort to inefficiently enumerating the states in the invariant one by one, even though there are more compact ways to represent the invariant. %

	\item \label{it:overview-question-3} \textbf{Unrolling depth}: Is it actually necessary to use the BMC bound $k$ that is as large as the co-diameter?
	For example, in the system of~\Cref{fig:parity}, the co-diameter is $\Theta\left(n\right)$.\footnote{The state with a maximal odd number of variables in $J$ except $x_1$ having value $\false$ requires $\left\lfloor \frac{\card{J}-2}{2} \right\rfloor$ iterations of $\code{turn\_two\_off}$ to turn two such variables to $\true$ until only one is left $\false$, an iteration of $\code{hot\_potato}$ to move it to $x_1$, and another to turn the variables not in $J$ to $\true$.}
	This BMC bound could be prohibitively large for the SAT solver. Worse still, the co-diameter can be exponential in general~\cite{DBLP:conf/cav/McMillan03}.
	Can the algorithm always succeed even when using smaller $k$?
\end{enumerate}

In this work we develop a \textbf{\textit{theoretical analysis of an interpolation-based algorithm}} that addresses all these points.
The analysis of this algorithm characterizes the invariants that it finds, the number of iterations until convergence, and how large the unrolling bound must be. %
In this analysis, the inference problem is ``well-behaved'' if there exists a \emph{short invariant} of \emph{certain syntactic forms} whose \emph{Hamming-geometric boundary} is backwards $k$-reachable \begin{changebar}from $\Bad$\end{changebar}, where $k$ is
a bound which can be \emph{smaller than the co-diameter}. If this is the case, we show algorithms that construct and maintain generalizations in specific ways---connected to \emph{exact learning theory}---which are guaranteed to \begin{changebar}\emph{find inductive invariants efficiently} with a SAT oracle\end{changebar}.

\subsubsection{Interpolation by covering and term minimization}
The algorithmic scheme of~\Cref{alg:interpolation-abstract} itself is not amenable to such an analysis, because
valid choices for $\interpolant$ range between the exact post-image $\postimage{\tr}{I}$ and the set of states that cannot reach $\Bad$ in $k$ steps, potentially leading to very different outcomes over the algorithm's run.
We thus examine a specific method of interpolant construction displayed in~\Cref{alg:interpolant-from-sample-minimization}, due to~\citet{DBLP:conf/hvc/ChocklerIM12,DBLP:conf/lpar/BjornerGKL13} and inspired by IC3/PDR~\cite{ic3,pdr}. The procedure iteratively samples states from the post-image of $I$ that should be added to the interpolant $\interpolant$. Adding a single state exactly, by disjoining the cube of the state $\cube{\sigma'}$---the conjunction of all literals that hold in the state---would converge slowly. Instead, the procedure drops literals from $\cube{\sigma'}$ (thereby including more states) as long as no state that satisfies the remaining conjunction can reach $\Bad$ in $k-1$ steps,\footnote{
	In practice, this can be made more efficient by first obtaining an \emph{unsat core} of the unsat BMC query, and then explicitly dropping literals one by one. Empirically, \citet{DBLP:conf/hvc/ChocklerIM12} found that the extra time spent in explicit minimization after the unsat core is compensated by fewer iterations. %
}
and then disjoins the result to the interpolant. %
By construction, all the states in $\interpolant$ cannot reach $\Bad$ in $k-1$ steps, and the procedure terminates when all of $\postimage{\tr}{I}$ is covered by $\interpolant$. It may seem inefficient to overapproximate from a single state in each iteration, compared to proof-based methods; %
however, our results show that under certain conditions each such iteration makes significant progress.

\subsection{The Complexity of Interpolation-Based Inference}
The interpolation-based inference of~\Cref{alg:interpolation-abstract} with the interpolation procedure of~\Cref{alg:interpolant-from-sample-minimization} produce invariants in disjunctive normal form (DNF). (For example, the invariant in~\Cref{eq:overview-parity-example-inv} is a DNF formula with one term.)
If the shortest invariant in DNF is exponentially long, clearly an exponential number of iterations is necessary.
{\textbf{\textit{Suppose that the system admits a short DNF invariant $I$. When is the interpolation-based inference algorithm guaranteed to be efficient?}}
By efficient we mean that the number of \begin{changebar}steps the algorithms performs is polynomial in the number of variables and the length of $I$, where each $k$-BMC query is counted as a single step of calling a SAT oracle\end{changebar}.
This is a difficult question, because even if an invariant with a short representation exists, the algorithm might miss it or learn a longer representation. %
Our solution is based on two ingredients:
\begin{inparaenum}
	\item the \emph{boundary} of the invariant, which a condition we call the \emph{fence} condition ties to reachability, and
	\item utilizing the syntactic shape of the invariant, an aspect in which, as we show, ideas from \emph{exact learning} are extremely relevant. %
\end{inparaenum}

\subsubsection{The Boundary of Inductive Invariants}
Our algorithm makes decisions using BMC: whether or not a literal is to be dropped, whether or not some states are to be added to the invariant, is a choice made on the basis of bounded reachability information (because unbounded reachability is unknown). This is why the algorithm might fail and restart with a larger $k$ (\cref{ln:abs-itp:check-no-overapprox}).
The fence condition's guarantees that this would not happen.
The idea is to require that BMC finds useful information when it is invoked on sets that have a special role: the states on the \textbf{\textit{{boundary} of the inductive invariant $I$}}.
The (outer) boundary $\boundaryneg{I}$ is the set of states $\sigma$ that do \emph{not} belong to $I$, but are ``almost'' in $I$: flipping just one bit in $\sigma$ yields a state that \emph{does} belong to $I$. Put differently, the boundary is the set of states in $\neg I$ that have a Hamming neighbor (Hamming distance $1$) in $I$.
For example, the boundary of the invariant in~\Cref{eq:overview-parity-example-inv} consists of the states where $\card{\set{i \in J \, | \, x_i=\false}}=1$: such states are not in $I$, but flipping the single $\false$ variable in $J$ results in a state that \emph{does} belong to $I$.
(See~\Cref{fig:boundary} for an illustration of the boundary of this invariant with $n=3$ and $J = \set{x_1,x_2,x_3}$.)
Note that not all states in $\neg I$ are on the boundary: states with more than one $\false$ variable in $J$ also belong to $\neg I$, but their Hamming distance from $I$ is larger than 1.

\label{sec:overview-example-fence-holds}
The \textbf{\textit{fence condition}} requires that the \textbf{\textit{states in the boundary $\boundaryneg{I}$ reach $\Bad$ in at most $k$ steps}}.
For example, in the system of~\Cref{fig:parity}, the invariant in~\Cref{eq:overview-parity-example-inv} satisfies the fence condition with $k=2$; in at most two steps every state in its boundary reaches $\Bad$ (one step to move the $\false$ value to $x_1$, and another to turn off variables not in $J$).
This property is key for the interpolation algorithm to successfully and consistently find an invariant.
We formally define the boundary and the fence condition in~\Cref{sec:boundary}.
(See~\Cref{fig:fence} for an illustration.)

It should be noted that the fence condition is a property of a specific invariant. %
Some invariants may not satisfy the fence condition for a given $k$ or even for any $k$, depending on whether and how quickly states outside the invariant can reach $\Bad$.
Intuitively, this reflects natural differences between invariants: some invariants are well suited to be discovered by the algorithm, while others are such that the algorithm might overshoot and miss them in some of its executions.
\yotam{discussion of weakening/strengthening invariants in comment}

Every safe system admits an invariant that is $k$-fenced, because the gfp if $k$-fenced for $k$ that is the co-diameter (\Cref{lem:boundary-exists-inv-k}).
The power of the fence condition is in that it allows to prove convergence even with $k$ that is \begin{changebar}strictly\end{changebar} smaller than the co-diameter, \begin{changebar}and\end{changebar} based on invariants that need not be the gfp, which is an important ingredient of our \begin{changebar}efficiency\end{changebar} results.\begin{changebar}\footnote{
	Even for the co-diameter, the existing analysis by~\citet{DBLP:conf/cav/McMillan03} does not derive complexity bounds, while we are able to do so using techniques from exact concept learning.
}\end{changebar}

\subsubsection{Convergence With the Fence Condition and a Sufficient BMC Bound}
Intuitively, when the fence condition does not hold, if \Cref{alg:interpolant-from-sample-minimization} discovers a state $\sigma^{+} \models I$ and checks whether it can drop a literal on which its Hamming neighbor $\sigma^{-} \models \neg I$ disagrees, if $\sigma^{-}$ does not reach $\Bad$ in $k$ steps we could include $\sigma^{-}$ in the candidate even though it is not in $I$.
In~\Cref{sec:itp-fence-basic}, we prove that the fence condition ensures that the candidate invariant our algorithm constructs is always below $I$, and therefore converges to $I$ itself or a stronger invariant, \emph{whichever} counterexamples to induction the solver returns, and \emph{regardless} of the order in which \Cref{alg:interpolant-from-sample-minimization} attempts to drop literals (\Cref{lem:itp-underapprox})---although, without further assumptions which we explore next, this convergence may happen only after many iterations of the algorithm.
But before we elaborate on this, two points merit emphasis:

First, not all the states in $\neg I$ need to reach $\Bad$ in $k$ steps, and $k$ can be smaller than the co-diameter (which addresses question~(\ref{it:overview-question-3}) above).
For example, in~\Cref{fig:parity}, our results show that $k=2$ suffices, whereas the co-diameter is linear in the number of variables.
Further, this deviates from the original termination argument above, and facilitates an analysis where the invariant to which the algorithm converges is not the gfp.
(The latter point already addresses question~(\ref{it:overview-question-1}) above.)

Second, the algorithm does not posses any a-priori knowledge of the invariant $I$; the existence of an invariant as mandated by the fence condition suffices to guarantee convergence.

We are now ready to tackle the question of \emph{efficient convergence} when $I$ is not only $k$-fenced but also belongs to syntactic class of ``manageable'' formulas.

\subsubsection{Efficient Inference of Short Monotone Invariants}
The fence condition guarantees that the generated interpolants underapproximate an invariant $I$. How many interpolants must the algorithm find before it converges to an invariant?
We prove in~\Cref{sec:itp-monotone} that if $I$ is in monotone DNF, that is, all variables appear positively, then the number of iterations of~\Cref{alg:interpolation-abstract} is bounded by the number of terms in $I$ if $I$ is $k$-fenced (\Cref{thm:monotone-inference-efficient})---thus addressing question~(\ref{it:overview-question-2}) above.
\textbf{\textit{Overall, this is a theorem of efficient invariant inference by~\Cref{alg:interpolation-abstract}.}}
For example, the system in~\Cref{fig:parity} admits the short monotone DNF invariant in~\Cref{eq:overview-parity-example-inv}, hence by our results the algorithm efficiently infers an invariant for this system.

\TODO{also add to intro?}
An intriguing consequence of formal efficiency results for an algorithm is that when the algorithm fails to converge, this is a \textbf{\textit{witness that an invariant of a certain type does not exist}}. Thus, if the algorithm continues to execute beyond the number of steps mandated by our upper bound, this means that there is no monotone DNF $k$-fenced invariant with a specified number of terms. This may indicate a bug rendering the system unsafe, or perhaps that an invariant exists but it is not $k$-fenced, not monotone, or too long.

We also derive a dual efficiency result: a \emph{dual-interpolation} algorithm achieves efficient inference for short (anti)monotone CNF (\Cref{thm:monotone-inference-efficient-cnf})---which are CNF invariants where all variables appear negatively---when a \emph{dual fence condition} holds: the inner boundary $\boundarypos{I}$ is the set of states in $I$ that have a Hamming neighbor in $\neg I$, and it must be $k$-reachable from $\Init$.

\subsection{Efficient Inference Beyond Monotone Invariants and Exact Learning Theory}
The interpolation-based algorithm is not guaranteed to perform a small number of iterations when $I$ is not monotone.
\textbf{\textit{Is provably efficient inference beyond monotone invariants possible?}}
In~\Cref{sec:beyond-monotone},
we obtain an efficient inference algorithm for the wider class of \emph{almost-monotone} DNF invariants, which are DNF formulas that have at most $\bigO(1)$ terms that include negative literals, provided that the fence condition is satisfied (\Cref{cor:almost-monotone-dnf}).
This upper bound is achieved by a different, yet related algorithm, which is based on the celebrated work of~\citet{DBLP:journals/iandc/Bshouty95} in machine learning.
Roughly, the algorithm uses several instances of~\Cref{alg:interpolation-abstract} that learn several overapproximations of $I$, each of them monotone under a different translation of the variables. A dual result holds for almost-(anti)monotone CNF invariants and the dual fence condition (\Cref{cor:almost-monotone-cnf}).

Underlying this development is the realization that efficient inference based on the fence condition has close connections to efficient \emph{exact concept learning}~\cite{DBLP:journals/ml/Angluin87}.
In exact concept learning, the goal is to learn an unknown formula $\varphi$ through a sequence of queries to a teacher. The most prominent types of queries are (1) equivalence, in which the learner suggests a candidate $\theta$, and the teacher returns whether $\theta \equiv \varphi$, or a differentiating counterexample; and (2) membership, in which the learner chooses a valuation $v$ and the teacher responds whether $v \models \varphi$.
These queries are hard to implement in an invariant inference setting~\cite{ICELearning,DBLP:journals/pacmpl/FeldmanISS20}.
Nevertheless, we show that the algorithm which achieves~\Cref{cor:almost-monotone-dnf} can be obtained directly from the exact concept learning algorithm by~\citet{DBLP:journals/iandc/Bshouty95} through a transformation which, \emph{when the fence condition holds}, can implement \emph{certain} equivalence and membership queries using BMC (\Cref{sec:restricted-queries}). This is surprising because in invariant inference, unlike classical concept learning, the ``teacher'' (SAT solver) does not know the target ``concept'' $I$. (We also provide a self-contained analysis of this algorithm, based on the monotone theory~\cite{DBLP:journals/iandc/Bshouty95}, in~\Cref{sec:bshouty-direct}.)
It is interesting to note that applying this transformation to the exact concept learning algorithm for monotone DNF formulas by~\citet{DBLP:journals/ml/Angluin87} produces an algorithm that essentially matches~\Cref{alg:interpolation-abstract}+\Cref{alg:interpolant-from-sample-minimization}.

Lastly, we extend our translation from exact learning algorithms to show that \emph{every} exact learning algorithm from equivalence and membership queries can be efficiently implemented, provided that the fence condition and its dual both hold together. This yields an efficient inference algorithm for all invariants that have both a short CNF and a short DNF representation (not necessarily monotonic), including formulas that can be expressed by small Boolean decision trees (\Cref{thm:bshouty-inference-cdnf}), when this two-sided fence condition holds.

\subsection{Robustness and Non-Robustness}
\label{sec:overview:robustness}
One of the most interesting questions for invariant inference as a practical methodology is robustness:
\textbf{\textit{How do modifications to the program affect invariant inference?}}
On the one hand, it is desirable that if an invariant could be found before, and the program undergoes an ``inconsequential'' transformation, then the algorithm would successfully find the invariant also after the transformation.
On the other, when the algorithm does not manage to find an invariant, altering the program can prove a successful strategy to achieve convergence~\cite{TOPLAS:SRW02,DBLP:conf/cav/FeldmanWSS19,DBLP:conf/cav/SharmaDDA11,DBLP:conf/tacas/BorrallerasBLOR17,DBLP:conf/vmcai/Namjoshi07,DBLP:journals/pacmpl/CyphertBKR19}. \TODO{other works?}
The effect of program transformations on inference depends on the inference algorithm.\footnote{
For example, the inference of conjunctive invariants via Houdini~\cite{DBLP:conf/fm/FlanaganL01} is robust, and always finds an invariant if one exists in a polynomial number of SAT calls.
}
In~\Cref{sec:robustness} we study this question from the perspective of the fence condition: if an invariant is $k$-fenced before the transformation, does this still hold after the transformation %
(thereby making our efficiency results applicable)?
We show simple transformations that are robust, ensuring that the invariant is $k$-fenced also after the transformation: variable renaming and translation, and strengthening safety properties.
We then show that interesting transformations that add new variables are not robust: instrumentation with a derived relation, and ``monotonization'' of an invariant using new variables that track negations.

\TODO{what do you think about this example?}
For example, suppose we add a bit $q$ to the example of~\Cref{fig:parity} to represent the parity of the variables in $J$: the xor $\oplus_{x_i \in J}{x_i}$. We initialize $q$ to the correct initial value (which is $\card{J} \pmod{2}$.
The motivation is to use $q$ to prove that the system avoids the bad state, so we consider an error when $x_1=\false,x_2=\ldots=x_n=\true$, \emph{and} $q$ correctly matches the parity of the variables in $J$ in the error state (i.e.\ $q = \card{J}-1 \pmod{2}$). The parity is not changed by any of the actions, so $q$ is not modified in any action. Under this transformation, \Cref{eq:overview-parity-example-inv} is still an inductive invariant, but it is no longer backwards $k$-fenced (for any $k$): a state with exactly one variable $\true$ out of $J$ but with the incorrect parity in $q$ cannot reach the bad state. This suggests that the algorithm cannot be guaranteed to converge to the original invariant. (Indeed, the motivation for the transformation is the different invariant $q=\card{J} \pmod{2}$; alas, in this case, this invariant is also not backwards $k$-fenced: its boundary consists of all states with $q \neq \card{J} \pmod{2}$, but this includes states with an even number of $\true$ variables from $J$, which cannot reach the bad state.)

\begin{changebar}
This non-robustness result matches the way invariant inference behaves in practice and the butterfly effect of introducing a derived relation, which indicates that our theoretical analysis can reproduce some realistic phenomena.
The non-robustness result means that the algorithm is not guaranteed to converge to an invariant that does not use the new variable; nonetheless, invariants that use the new variable may or may not satisfy the fence condition, and this depends on the example. Also in practice, the inference algorithm learns properties that use the new variable, and the transformation may help inference to converge to a new invariant, but might not.
\end{changebar}

\section{Preliminaries}
\label{sec:prelim}

\para{States, transition systems, inductive invariants}
We consider the safety of transition systems defined over a propositional vocabulary $\voc = \set{p_1,\ldots,p_n}$ of $n$ Boolean variables.
A \emph{state} is a \emph{valuation} to $\voc$.
If $x$ is a state, $x[p]$ is the value ($\true$ or $\false$)\sharon{how about: ($\true$/$\false$ or $1/0$)} that $x$ assigns to the variable $p$. We write $x[p \mapsto z]$ for the state obtained from $x$ by assigning the value $z$ to $p$.
A \emph{transition system} is a triple $(\Init,\tr,\Bad)$ where $\Init,\Bad$ are formulas over $\voc$ denoting the set of initial and bad states (respectively), and the \emph{transition relation} $\tr$ is a formula over $\voc \uplus \voc'$, where $\voc' = \{ \prop' \mid \prop \in \voc\}$ is a copy of the vocabulary used to describe the post-state of a transition.
We assume that $\Init \not\equiv \false$, $\Bad \not\equiv \true,\false$, $\Init \implies \neg\Bad$.
A transition system is \emph{safe} if all the states that are reachable from the initial states via any number of steps of $\tr$ satisfy $\neg \Bad$. %
An \emph{inductive invariant} is a formula $I$ over $\voc$ such that
\begin{inparaenum}[(i)]
	\item $\Init \implies I$,
	\item $I \land \tr \implies I'$, and
	\item $I \implies \neg\Bad$, where $I'$ %
denotes the result of substituting each $\prop \in \voc$ for $\prop' \in \voc'$ in $I$,
\end{inparaenum}
and $\varphi \implies \psi$ denotes the validity of the formula $\varphi \to \psi$. In the context of propositional logic, a transition system is safe if and only if it has an inductive invariant. %
An \emph{inductiveness check}, typically implemented by a SAT query, returns whether a given candidate $\varphi$ is an inductive invariant, and a counterexample if it violates one of the implications (i)--(iii) above. When implication (ii) is violated, a \emph{counterexample to induction} is a pair of states $\sigma, \sigma'$ such that $\sigma,\sigma'\models I \land \tr \land \neg I'$ (where the valuation to $\voc'$ is taken from $\sigma'$).

\para{Bounded reachability and bounded model checking}
Given a transition system $(\Init,\tr,\Bad)$ and a \emph{bound} $k \in \mathbb{N}$, the $k$-forward reachable states $\bmc{\tr}{\Init}{k}$ are the states $\sigma$ such that there is an execution of $\tr$ of length at most $k$ starting from a state in $\Init$ and ending at $\sigma$. %
The $k$-backwards reachable states $\bmcback{\tr}{\Bad}{k}$ are the states $\sigma$ such that there is an execution of $\tr$ of length at most $k$ starting from $\sigma$ and ending at a state in $\Bad$.
\emph{Bounded model checking (BMC)}~\cite{DBLP:conf/tacas/BiereCCZ99} checks whether a set of states described by a formula $\psi$ is
\begin{itemize}
	\item forwards unreachable, checking $\bmc{\tr}{\Init}{k} \cap \psi \overset{?}{=} \emptyset$.
	As a SAT query, this is implemented by checking whether
	$\Init(\voc_0) \land \bigwedge_{i=0}^{k-1}{\tr(\voc_i,\voc_{i+1})} \land \left(\bigvee_{i=0}^{k} \psi(\voc_i)\right)$ is unsatisfiable;
	or
	\item backwards unreachable, checking $\bmc{\tr}{\psi}{k} \cap \Bad \overset{?}{=} \emptyset$.
	As a SAT query, this is implemented by checking whether
	$\psi(\voc_0) \land \bigwedge_{i=0}^{k-1}{\tr(\voc_i,\voc_{i+1})} \land \left(\bigvee_{i=0}^{k} \Bad(\voc_i)\right)$ is unsatisfiable.
\end{itemize}
($\voc_0,\ldots,\voc_k$ are distinct copies of the vocabulary $\voc$.)
A $k$-BMC check is either one, with bound $k$.

\para{Forwards/backwards duality}
\label{sec:forward-backward-duality}
A known duality~\cite[e.g][Appendix A]{DBLP:journals/pacmpl/FeldmanISS20} substitutes forwards and backwards reachability:
The dual of a transition system $(\Init,\tr,\Bad)$ is $(\Bad,\tr^{-1},\Init)$, where $\tr^{-1} = \tr(\voc',\voc)$ is the inverse of $\tr$. The dual transition system is safe iff the original one is, and $I$ is an inductive invariant for the original system iff $\neg I$ is an inductive invariant for the dual.
Note that the set of $k$-forwards reachable states in the original system is exactly the set of $k$-backwards reachable states in the dual system.
Given an inference algorithm $\A(\Init,\tr,\Bad)$, the \emph{dual algorithm} is $\A^*(\Init,\tr,\Bad) = \neg \A(\Bad,\tr^{-1},\Init)$.
\iflong
For example, Dual-Houdini for disjunctive invariants~\cite{DBLP:conf/cade/LahiriQ09} is the dual of Houdini~\cite{DBLP:conf/fm/FlanaganL01} which infers conjunctive invariants\sharon{omit Houdini example?}.
\fi
Complexity results translate between the algorithm and its dual:
$\A^*(\Init,\tr,\Bad)$ finds an invariant in $t$ SAT calls if $\A(\Bad,\tr^{-1},\Init)$ finds an invariant in $t$ SAT calls.

\para{CNF, DNF, and cubes}
A \emph{literal} $\ell$ is a variable $p$ (positive literal) or its negation $\neg p$ (negative literal).
A \emph{term} is a conjunction of literals; at times, we also refer to it as a set of literals. A formula is in \emph{disjunctive normal form (DNF)} if it is a disjunction of terms.
A clause is a disjunction of literals. A formula is in \emph{conjunctive normal norm (CNF)} if it is a conjunction of clauses.
The \emph{cube} of a state $\sigma$, denoted $\cube{\sigma}$, is the term that is the conjunction of all the literals that are satisfied in $\sigma$.

\para{Complexity of SAT-based invariant inference}
We measure the complexity of a SAT-based inference algorithm by
\begin{inparaenum}[(i)]
	\item the number of inductiveness checks,
	\item the number of $k$-BMC checks, and
	\item the number of \begin{changebar}other\end{changebar} steps, when each SAT call is considered one step (an oracle call).
\end{inparaenum}
The complexity parameters are the number of variables $n = \card{\voc}$ and syntactic measures of the length of the target invariant.
The inference of anti/monotone CNF/DNF invariants with $n$ clauses/terms is in general $\NP$-hard with access to a SAT solver~\cite{DBLP:journals/pacmpl/FeldmanISS20,DBLP:conf/cade/LahiriQ09}. %
\section{The Boundary of Inductive Invariants}
\label{sec:boundary}

In this section we present the backwards $k$-fenced condition, which is the foundation of our analysis of inference algorithms and all our convergence results.

\begin{definition}[Neighborhood, Boundary]
Two states $\sigma_1,\sigma_2$ are \emph{neighbors} if their Hamming distance is one, namely, $\card{\set{p \in \voc \, | \, \sigma_1 \models p, \sigma_2 \not\models p \mbox{ or } \sigma_1 \not\models p, \sigma_2 \models p}} = 1$.
The \emph{neighborhood} of a state $\sigma$, denoted $\neighborhood{\sigma}$, is the set of neighbors of $\sigma$.
The \emph{inner-boundary} of a set of states $S$ is $\boundarypos{S} = \set{\sigma \in S \mid \neighborhood{\sigma} \cap \bar{S} \neq \emptyset}$ (where $\bar{S}$ is the complement of $S$). %
Note that $\boundarypos{S} \subseteq S$, and the inclusion may be strict (when some $\sigma \in S$ has no neighbors outside of $S$).
The \emph{outer-boundary} is $\boundaryneg{S} = \boundarypos{\bar{S}}$. That is, $\boundaryneg{S} = \set{\sigma \not\in S \mid \neighborhood{\sigma} \cap S \neq \emptyset}$.
\end{definition}

\begin{definition}[Backwards $k$-Fenced]
\label{def:fence-backwards}
For a transition system $(\Init,\tr,\Bad)$, an inductive invariant $I$ is \emph{backwards $k$-fenced} for $k \in \mathbb{N}$ if $\boundaryneg{I} \subseteq \bmcback{\tr}{\Bad}{k}$.
\end{definition}
More explicitly, an invariant $I$ is backwards $k$-fenced if every state in $\neg I$ that has a Hamming neighbor in $I$ can reach $\Bad$ in at most $k$ steps.

\begin{example}
\label{ex:even-increment}
Consider a program that manipulates two numbers represented in binary by $\code{x} = x_1,\ldots,x_n$ and $\code{y} = y_1,\ldots,y_n$. Initially, $\code{x}$ is $\textit{odd}$ ($x_n=1$), and $\code{y}$ is $\textit{even}$ ($y_n=0$). In each iteration, $\code{y}$ is incremented by an even number and $\code{x}$ is incremented by $\code{y}$ (all computations are mod $2^n$). The bad states are those where $\code{x}$ is even. An inductive invariant $I$ states that $\textit{odd}(\code{x}) \land \textit{even}(\code{y})$.

Every state in $\neg I$---and in particular, in $\boundaryneg{I}$---reaches a bad state in at most one step. This is because every state where $\textit{even}(\code{x})$ holds is bad, and every state where $\textit{odd}(\code{y})$ holds is either bad, or the step that adds $\code{y}$ to $\code{x}$ leads to a bad state. Hence, $I$ is $1$-backwards fenced.

Now consider the same system %
except there is a flag $\code{z}$ that decides whether $\code{x}$ is modified; the system takes a step only if $\code{z}=\false$.
The same invariant from before applies, but it is no longer backwards $1$-fenced. This is because the state in $\boundaryneg{I}$ where $\code{x}$ is odd, $\code{y}$ is odd, and $\code{z}=\false$ cannot reach a bad state (nor perform any transition).
However, a different invariant, $\textit{odd}(\code{x}) \land (\textit{even}(\code{y}) \lor \neg\code{z})$ is $1$-backwards fenced in this system.
\end{example}

In every system, this condition holds for at least one inductive invariant and for some finite $k$: the gfp---the weakest invariant, that allows all states but those that can reach $\Bad$ in any number of steps---satisfies the condition with the co-diameter, the number of steps that takes for all states that can reach $\Bad$ to do so.
\begin{lemma}
\label{lem:boundary-exists-inv-k}
Every safe transition system $\TS =(\Init,\tr,\Bad)$ admits an inductive invariant $\textit{gfp} = \neg \left(\backreachable{\tr}{\Bad}\right)$ that is backwards $k$-fenced for $k$ that is the co-diameter: the minimal $k$ such that $\bmcback{\tr}{\Bad}{k} = \backreachable{\tr}{\Bad}$.\iflong\else\footnote{All omitted proofs appear in the supplementary materials.}\fi
\end{lemma}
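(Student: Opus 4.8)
The plan is to unpack the two assertions bundled in the statement — that $\textit{gfp} = \neg\left(\backreachable{\tr}{\Bad}\right)$ is an inductive invariant, and that it is backwards $k$-fenced for $k$ the co-diameter — and, before either, to check that the co-diameter is well-defined. For well-definedness: the state space is finite ($2^n$ states), so $\bmcback{\tr}{\Bad}{0} \subseteq \bmcback{\tr}{\Bad}{1} \subseteq \cdots$ is an increasing chain of subsets of a finite set and must stabilize. Writing $\bmcback{\tr}{\Bad}{j+1} = \Bad \cup \preimage{\bmcback{\tr}{\Bad}{j}}$, a short induction shows that once $\bmcback{\tr}{\Bad}{k} = \bmcback{\tr}{\Bad}{k+1}$ the chain is constant from $k$ onwards, and since $\backreachable{\tr}{\Bad} = \bigcup_j \bmcback{\tr}{\Bad}{j}$, this least stabilization index $k$ is exactly the co-diameter, i.e.\ the minimal $k$ with $\bmcback{\tr}{\Bad}{k} = \backreachable{\tr}{\Bad}$.

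Next I would verify the three conditions for $\textit{gfp}$ being an inductive invariant. Condition (i), $\Init \implies \textit{gfp}$, is exactly where safety is used: safety says no initial state reaches $\Bad$, i.e.\ $\Init \cap \backreachable{\tr}{\Bad} = \emptyset$, which is precisely $\Init \implies \neg\left(\backreachable{\tr}{\Bad}\right)$. Condition (ii), $\textit{gfp} \land \tr \implies \textit{gfp}'$, follows because if $\sigma \models \textit{gfp}$ and $(\sigma,\sigma') \models \tr$ while $\sigma' \in \backreachable{\tr}{\Bad}$, then prepending the step $\sigma \to \sigma'$ to a witnessing execution from $\sigma'$ to $\Bad$ makes $\sigma$ backwards reachable from $\Bad$ too, contradicting $\sigma \models \textit{gfp}$. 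Condition (iii), $\textit{gfp} \implies \neg\Bad$, holds since $\Bad \subseteq \backreachable{\tr}{\Bad}$ via zero-step executions.

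Finally, for the fence condition: since $\neg\textit{gfp} = \backreachable{\tr}{\Bad}$, the outer boundary satisfies $\boundaryneg{\textit{gfp}} \subseteq \neg\textit{gfp} = \backreachable{\tr}{\Bad}$, and by the choice of $k$ as the co-diameter, $\backreachable{\tr}{\Bad} = \bmcback{\tr}{\Bad}{k}$. Chaining these inclusions gives $\boundaryneg{\textit{gfp}} \subseteq \bmcback{\tr}{\Bad}{k}$, which is \Cref{def:fence-backwards}.

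I do not expect a real obstacle here; this is a routine verification. The one point worth stating explicitly is that the Hamming-geometric structure of $\boundaryneg{\textit{gfp}}$ is never actually exploited — the gfp satisfies the fence condition for the trivial reason that its \emph{entire} complement, not merely its boundary, is $k$-backwards reachable. The substantive content promised by later sections (smaller $k$, non-gfp invariants, richer syntactic classes) is precisely what this lemma only motivates rather than achieves.
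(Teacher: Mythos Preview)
Your proof is correct and follows essentially the same approach as the paper: both argue that the co-diameter exists by finiteness of the state space, and both derive the fence condition from the chain $\boundaryneg{\textit{gfp}} \subseteq \neg\textit{gfp} = \backreachable{\tr}{\Bad} = \bmcback{\tr}{\Bad}{k}$. You are more thorough than the paper, which omits the explicit verification that $\textit{gfp}$ is an inductive invariant; your closing remark that the Hamming structure of the boundary plays no role here is also apt and aligns with the paper's surrounding discussion.
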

\toolong{
\begin{proof}
First, note that such a $k$ exists because for every $\sigma \in \backreachable{\tr}{\Bad}$ there is a finite $r$ such that $\sigma \in \bmcback{\tr}{\Bad}{r}$, and $k$ is the maximum over these $r$'s, and there are finitely many of these because the number of states is finite.)
By the choice of $I$ it holds that $\neg I \subseteq \backreachable{\tr}{\Bad}$.
Seeing that $\boundaryneg{I} \subseteq \neg I$, we have obtained $\boundaryneg{I} \subseteq \bmcback{\tr}{\Bad}{k}$, as desired.
\yotam{prelim: $\backreachable{\tr}{\Bad}$}
\end{proof}
}
While this lemma shows the existence of a backwards fenced invariant through the gfp and co-diameter, the $k$-fence condition is more liberal: it can hold also for an invariant when not \emph{every} state in $\neg I$ reaches $\Bad$ in $k$ steps, and only the states in $\boundaryneg{I}$ do. An example demonstrating this appears in~\Cref{sec:overview}. An additional example follows.
\begin{example}
Consider an example of a (doubly)-linked list traversal, using $\code{i}$ to traverse the list backwards, modeled via predicate abstraction following~\citet{DBLP:conf/cav/ItzhakyBRST14}. %
The list starts at $\code{h}$.
Initially, $\code{i}$ points to some location that may or may not be part of the list, and in each step the system goes from $\code{i}$ to its predecessor, until that would reach $\code{x}$.
We write $\code{s} \leadsto r$ to denote that $r$ is reachable from $\code{s}$ by following zero or more links.
Consider the initial assumption $\code{h} \leadsto \code{x}$, but $\code{i} \not\leadsto \code{x}$ (it may be that $\code{x} \leadsto \code{i}$, or that $\code{i}$ is not at all in the list). The bad states are those where $\code{i} = \code{h}$.

\yotam{didn't verify this in mypyvy}
An inductive invariant for this system is $\code{h} \leadsto \code{x} \land \neg \code{i} \leadsto \code{x}$.
In predicate abstraction, we may take the predicates $p_{h,x} = \code{h} \leadsto \code{x}$, $p_{i,x}=\code{i} \leadsto \code{x}$, and write $I = p_{h,x} \land \neg p_{i,x}$, which is a DNF invariant with one term.
Hence $\neg I \equiv \neg p_{h,x} \lor p_{i,x}$.
The outer boundary $\boundaryneg{I}$ consists of the states (1) $p_{h,x}=\false,p_{i,x}=\false$ and (2) $p_{h,x}=\true,p_{i,x}=\true$. Both states are in fact bad states under the abstraction: both include a state where $\code{i}=\code{h}$, from which $\code{x}$ is unreachable (in (1)) or reachable (in (2)). Thus, $I$ is backwards $k$-fenced for every $k \geq 0$.

In contrast, not all the states in $\neg I$ reach bad states (in particular, $I$ is not the gfp): the state $p_{h,x}=\false,p_{i,x}=\true$ abstracts only states where $\code{h} \not\leadsto \code{i}$, and this remains true after going to the predecessor of $\code{i}$. This shows that the fence condition may hold even though $I$ is not the gfp, and not all states in $\neg I$ reach bad states (in $k$ steps or at all).
\end{example}

\section{Efficient Interpolation With the Fence Condition}
\label{sec:itp-all}
In this section we prove that the interpolation-based invariant inference algorithm that computes interpolants using sampling and term minimization is efficient for short monotone DNF invariants.
\Cref{sec:itp-termmin} describes the algorithm. \Cref{sec:itp-fence-basic} derives the algorithm's basic properties and its convergence from the fence condition. \Cref{sec:itp-monotone} builds on this to obtain the efficiency result.

\subsection{Interpolation by Term Minimization}
\label{sec:itp-termmin}
We begin with a formal presentation of the algorithm we will be analyzing in this section, \Cref{alg:itp-termmin}.
\begin{wrapfigure}{L}{0.5\textwidth}
\vspace{-0.72cm}
\begin{minipage}{0.5\textwidth}
\begin{algorithm}[H]
\caption{Interpolation-based inference by term minimization}
\label{alg:itp-termmin}
\begin{algorithmic}[1]
\begin{footnotesize}
\Procedure{\ourinterpolationalgname}{$\Init$, $\tr$, $\Bad$, $k$}
	\State $\varphi \gets \Init$
	\While{$\varphi$ not inductive} $\label{ln:itp-termmin:ind-check}$
		\State \textbf{let} $\sigma, \sigma' \models \varphi \land \tr \land \neg \varphi'$ $\label{ln:itp-termmin:cti}$
		\If{$\bmc{\tr}{\sigma'}{k} \cap \Bad \neq \emptyset$} $\label{ln:itp-termmin:check-no-overapprox}$
			\State \textbf{restart} with larger $k$ $\label{ln:itp-termmin:fail}$
		\EndIf
		$d$ $\gets$ $\cube{\sigma'}$ $\label{ln:itp-termmin:gen-start}$
		\For{$\ell$ in $d$}
			\If{$\bmc{\tr}{d \setminus \set{\ell}}{k} \cap \Bad = \emptyset$} $\label{ln:itp-termmin:bmc-gen}$
				\State $d$ $\gets$ $d \setminus \set{\ell}$
			\EndIf
		\EndFor $\label{ln:itp-termmin:gen-end}$
		\State $\varphi \gets \varphi \lor d$ $\label{ln:itp-termmin:add-disjunct}$
	\EndWhile
	\State \Return $I$
\EndProcedure
\end{footnotesize}
\end{algorithmic}
\end{algorithm} 
\end{minipage}
\vspace{-0.4cm}
\end{wrapfigure}
It is a simplification of the one presented in~\Cref{sec:overview}, ``merging'' the two loops formed when \Cref{alg:interpolation-abstract} uses \Cref{alg:interpolant-from-sample-minimization}; instead of first computing an overapproximation $\interpolant$ of the post-image of the entire previous candidate %
and only then disjoining $\interpolant$, \Cref{alg:itp-termmin} disjoins the generalization $d$ to the candidate immediately, so the next counterexample to induction may use pre-states from this generalization, rather from the previous candidate. Our results apply equally also to \Cref{alg:interpolation-abstract}+\Cref{alg:interpolant-from-sample-minimization}.

\Cref{alg:itp-termmin} starts with the candidate invariant $\varphi = \Init$, which is gradually increased to include more states.
In each iteration, the algorithm performs an inductiveness check (\cref{ln:itp-termmin:ind-check,ln:itp-termmin:cti}), implemented by SAT calls, and terminates if an inductive invariant has been found.
If a counterexample to induction $(\sigma,\sigma')$ exists, the algorithm generates a term $d$ which includes the post-state $\sigma'$, and disjoins $d$ to $\varphi$ to obtain the new candidate (\cref{ln:itp-termmin:add-disjunct}).
We refer to $d$ as the \emph{generalization} obtained from $\sigma'$.
Starting with $\cube{\sigma'}$---the conjunction that exactly captures $\sigma'$---the algorithm drops literals as long as no state in $d$ can reach a bad state in $k$ steps or less (\cref{ln:itp-termmin:bmc-gen}). These checks invoke the SAT solver with BMC queries.
If $\sigma'$ itself reaches a bad state in $k$ steps, no invariant weaker than $\varphi$ exists, and the algorithm restarts with a larger bound $k$ (\cref{ln:itp-termmin:fail}).
The soundness of this algorithm is immediate: $\varphi$ always includes $\Init$, excludes $\Bad$ (otherwise the algorithm restarts at~\cref{ln:itp-termmin:fail}),
and stops when there is no counterexample to induction.
\subsection{Interpolation Confined in the Boundary}
\label{sec:itp-fence-basic}
We now show how the fence condition ensures that \Cref{alg:itp-termmin} does not ``overshoot'' beyond the inductive invariant when it uses a large enough BMC bound.
We use this to derive a termination property, which we use in~\Cref{sec:itp-monotone} to obtain the efficiency result.

``Not overshooting'' beyond $I$ is formalized in the following lemma:
\begin{lemma}
\label{lem:itp-underapprox}
	Let $(\Init,\tr,\Bad)$ be a transition system, $I$ an inductive invariant, and $k \in \mathbb{N}$.
	If $I$ is backwards $k$-fenced, then throughout the execution of \ourinterpolationalgname($\Init,\tr,\Bad,k$), every candidate $\varphi$ is an underapproximation of $I$, namely, $\varphi \implies I$.
\end{lemma}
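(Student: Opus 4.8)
The plan is to prove the statement by induction on the number of iterations of the while-loop, showing that the loop invariant ``$\varphi \implies I$'' is preserved. The base case is immediate since the algorithm initializes $\varphi \gets \Init$ and $\Init \implies I$ because $I$ is an inductive invariant. For the inductive step, assume the current candidate satisfies $\varphi \implies I$; we must show that after disjoining the generalization $d$ obtained from a counterexample-to-induction post-state $\sigma'$, we still have $\varphi \lor d \implies I$, i.e.\ that $d \implies I$.

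First I would record the two facts that hold about $d$ at the point where it is added (\cref{ln:itp-termmin:add-disjunct}): (a) $\sigma' \models d$, since $d$ started as $\cube{\sigma'}$ and we only ever drop literals; and (b) every state in $d$ (i.e.\ every state satisfying the term $d$) is \emph{not} backwards $k$-reachable from $\Bad$, i.e.\ $\bmc{\tr}{d}{k} \cap \Bad = \emptyset$ --- this is the loop guard maintained across the literal-dropping for-loop (\cref{ln:itp-termmin:bmc-gen}), together with the fact that if $\sigma'$ itself reached $\Bad$ in $k$ steps the algorithm would have restarted at \cref{ln:itp-termmin:fail}. Next, using the induction hypothesis $\varphi \implies I$ and the fact that $(\sigma,\sigma')$ is a counterexample to the induction of $\varphi$, I would observe $\sigma \models \varphi \implies I$ and $\sigma \to \sigma'$ is a transition, so since $I$ is inductive, $\sigma' \models I$. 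Thus $d$ is a term containing at least one state in $I$ (namely $\sigma'$).

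The core of the argument is then: suppose for contradiction that $d \not\implies I$, i.e.\ there is a state $\tau \models d$ with $\tau \not\models I$, i.e.\ $\tau \in \neg I$. Consider moving from $\sigma'$ (which is in $I$ and in $d$) to $\tau$ (which is in $\neg I$ and in $d$) by flipping the bits on which they disagree one at a time. Since $d$ is a conjunction of literals and both $\sigma'$ and $\tau$ satisfy $d$, every intermediate state along this path also satisfies $d$ (flipping only the ``free'' coordinates of $d$ keeps $d$ satisfied). Along this path we pass from a state in $I$ to a state in $\neg I$, so there must be a first step that crosses the boundary: consecutive neighbors $\rho \models I$ and $\rho' \models \neg I$ with $\rho' \models d$. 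But $\rho'$ has a Hamming neighbor ($\rho$) in $I$, so $\rho' \in \boundaryneg{I}$. Since $I$ is backwards $k$-fenced, $\rho' \in \bmc{\tr}{\Bad}{k}$ --- $\rho'$ reaches $\Bad$ in at most $k$ steps. This contradicts fact (b), that no state in $d$ is backwards $k$-reachable from $\Bad$. Hence $d \implies I$, completing the inductive step, and by induction $\varphi \implies I$ holds throughout the execution.

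The main obstacle, and the step deserving the most care, is the ``crossing the boundary while staying inside $d$'' argument: one must argue precisely that a shortest Hamming path from a state satisfying $d$ to another state satisfying $d$ can be taken entirely within the set of states satisfying $d$ (which holds because $d$, as a term, constrains exactly a fixed set of coordinates and both endpoints agree with $d$ on those), and that crossing from $I$ to $\neg I$ along a path of Hamming-neighbors forces an outer-boundary state. Everything else is bookkeeping about what invariants the algorithm maintains on the local variable $d$ across the inner for-loop.
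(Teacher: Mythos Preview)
Your proof is correct. The overall structure matches the paper's: outer induction on iterations, with the inductive step reducing to showing $d \implies I$ and deriving a contradiction from a state in $\boundaryneg{I}$ that satisfies $d$ yet cannot reach $\Bad$ in $k$ steps.

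The one difference is in how the boundary state is located. The paper runs an \emph{inner} induction over the literal-dropping loop: it considers the first drop $\tilde{d}=d\setminus\{\ell\}$ at which $d\implies I$ fails, takes any $\sigma_2\models\tilde{d}\land\neg I$, and flips exactly the bit of $\ell$ to get $\sigma_1\models d\subseteq I$, so $\sigma_2\in\boundaryneg{I}\cap\tilde{d}$ directly. You instead argue only about the \emph{final} $d$, using that a term denotes a sub-cube: any Hamming path from $\sigma'\in I\cap d$ to $\tau\in\neg I\cap d$ that flips only the unconstrained coordinates stays inside $d$, and must cross $\boundaryneg{I}$. Your route avoids the inner induction at the cost of the (easy) convexity observation about terms; the paper's route pinpoints the offending literal but needs the extra induction. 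Both are equally valid and of comparable length.
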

\begin{proof}%
\sharon{skipped proof}
By induction on the algorithm's iterations: initially, $\varphi = \Init \implies I$ since $I$ is an inductive invariant; for later iterations, we show that every $d \implies I$. By induction on the iterations in generalization:
when it starts, $d = \sigma'$ satisfies this, because $\varphi \implies I$ so $\sigma \models I$, hence also $\sigma' \models I$ because $I$ is inductive. Later, assume there is a point when $d \implies I$ stops holding: the algorithm drops a literal $\ell$, obtaining $\tilde{d} = d \setminus \set{\ell}$ where $d \implies I$ but $\tilde{d} \not\implies I$, and the check passes: $\bmc{\tr}{\tilde{d}}{k} \cap \Bad = \emptyset$.
Let $\sigma_2 \models \tilde{d} \land \neg I$, and let $\sigma_1$ a state which differs from $\sigma_2$ on the variable in $\ell$ alone.
Necessarily $\sigma_2 \not\models \ell$ (because $\sigma_2 \not\models d$ as $d \implies I$), so $\sigma_1 \models \ell$. The other literals in $\tilde{d}$ are also satisfied by $\sigma_1$ because they are satisfied by $\sigma_2$ and $\sigma_1,\sigma_2$ do not differ there. Thus $\sigma_1 \models d$ and in particular $\sigma_1 \models I$. We have thus obtained Hamming neighbors $\sigma_1,\sigma_2$ such that $\sigma_1 \models I, \sigma_2 \not\models I$, but $\bmc{\tr}{\tilde{d}}{k} \cap \Bad = \emptyset$ is a contradiction to $I$ being backwards $k$-fenced, seeing that $\sigma_2 \models \tilde{d}$.
\end{proof}

This lemma implies that when the condition holds, the invariant $I$ acts as a ``barrier'' from unsafe overgeneralization, and the algorithm does not fail (\cref{ln:itp-termmin:fail} does not execute). This holds even though $k$ may be smaller than the co-diameter (see~\Cref{sec:overview-itp-original-analysis}), as long as there exists an $I$ which is backwards $k$-fenced. ($I$ is not known to the algorithm.)
Thus, in such a case, the algorithm successfully finds an inductive invariant that is an underapproximation of $I$.
(Without further assumptions, this might take exponentially many steps, a challenge which is the focus of~\Cref{sec:itp-monotone}.)
\begin{lemma}
\label{lem:no-fail}
\label{lem:itp-converge-underapprox}
	Let $(\Init,\tr,\Bad)$ be a transition system and $k \in \mathbb{N}$. If there exists an inductive invariant $I$ that is backwards $k$-fenced,
	then \ourinterpolationalgname($\Init,\tr,\Bad,k$) successfully (albeit potentially in an exponential number of steps) finds an inductive invariant $\varphi$ such that $\varphi \implies I$.
\end{lemma}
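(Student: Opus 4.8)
The plan is to combine \Cref{lem:itp-underapprox} with the elementary fact that an inductive invariant blocks \emph{all} backward reachability to $\Bad$, and then argue termination by a monotonicity/finiteness argument. Concretely, the first step is to observe that \Cref{lem:itp-underapprox} already gives us that every candidate $\varphi$ produced during the run of \ourinterpolationalgname$(\Init,\tr,\Bad,k)$ satisfies $\varphi \implies I$, so it only remains to show (a) that the algorithm never restarts (i.e.\ \cref{ln:itp-termmin:fail} is never reached), (b) that it terminates, and (c) that the formula it returns is an inductive invariant. Claim (c) is immediate from the soundness remarks already made after the description of \Cref{alg:itp-termmin}: $\Init \implies \varphi$ always, $\varphi \implies \neg\Bad$ throughout (granted (a)), and the loop exits only when $\varphi$ is inductive.

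For (a), consider any iteration with counterexample to induction $(\sigma,\sigma')$, so $\sigma \models \varphi$ and $\sigma,\sigma' \models \tr$. By \Cref{lem:itp-underapprox}, $\sigma \models \varphi \implies I$; since $I$ is an inductive invariant, $I \land \tr \implies I'$ gives $\sigma' \models I$. A routine induction on path length shows that no state satisfying an inductive invariant can reach $\Bad$ in any number of steps (using $\Init\!$-irrelevant parts of inductiveness: $I \land \tr \implies I'$ to propagate $I$ along a path, and $I \implies \neg\Bad$ at the end), so in particular $\bmc{\tr}{\sigma'}{k} \cap \Bad = \emptyset$. Hence the test at \cref{ln:itp-termmin:check-no-overapprox} is false and \cref{ln:itp-termmin:fail} is not executed; the bound $k$ is never increased.

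For (b), note that each iteration adds the disjunct $d$ with $\sigma' \models d$ (the generalization starts from $\cube{\sigma'}$ and only drops literals, so $\sigma'$ always satisfies it), while $\sigma' \not\models \varphi$ by the choice of the counterexample to induction. Thus the set of states denoted by $\varphi$ strictly grows at every iteration, and it is bounded above by the (finite) set of states denoted by $I$ thanks to \Cref{lem:itp-underapprox}; since the state space is finite, only finitely many iterations can occur. When the loop terminates, $\varphi$ is inductive by the loop guard and $\varphi \implies I$ by \Cref{lem:itp-underapprox}, which is exactly the claim (the ``exponential number of steps'' caveat reflecting that the only a priori bound on the iteration count is $2^n$).

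I do not expect any real obstacle here: the lemma is essentially a packaging of \Cref{lem:itp-underapprox} together with the ``inductive invariants are backward-closed away from $\Bad$'' observation. The one point that deserves a careful sentence is why $k$ is never increased---i.e.\ that restarting (which would invalidate the inductive hypothesis of \Cref{lem:itp-underapprox}, since that lemma is stated for a fixed $k$) genuinely does not happen---which is handled by step (a) above.
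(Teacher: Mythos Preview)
Your proposal is correct and follows essentially the same approach as the paper: use \Cref{lem:itp-underapprox} to conclude $\sigma' \models I$, observe that states in an inductive invariant cannot reach $\Bad$ (so the restart in \cref{ln:itp-termmin:fail} never fires), and then argue termination via strict monotone growth of $\varphi$ in a finite state space. Your write-up is in fact more explicit than the paper's on why the growth is strict (namely $\sigma' \models d$ but $\sigma' \not\models \varphi$), but the underlying argument is identical.
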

\toolong{
\begin{proof}
We first claim that no execution %
fails (\cref{ln:itp-termmin:fail} of~\Cref{alg:itp-termmin} does not execute).
As in the proof of~\Cref{lem:itp-underapprox}, $\sigma' \models I$ as $\varphi \implies I$ and $I$ is inductive.
Since $I$ is an inductive invariant, $\sigma'$ cannot reach $\Bad$ in any number of steps, and in particular $\sigma' \not\in \bmcback{\tr}{\Bad}{k}$.

Since $\varphi$ grows monotonically, and there are only finitely many formulas over the fixed vocabulary, $\varphi$ must converge. When this happens $I$ is inductive invariant, and we have $\varphi \implies I$ by~\Cref{lem:itp-underapprox}.
\end{proof}
}

Recalling that the gfp is backwards $k$-fenced for $k$ which is at most the co-diameter (\Cref{lem:boundary-exists-inv-k}), this yields a completeness result, akin to the completeness result by~\citet{DBLP:conf/cav/McMillan03}. %
\begin{corollary}
\label{cor:itp-termmin-complete}
Let $(\Init,\tr,\Bad)$ be a safe transition system. Then there is a bound $k \in \mathbb{N}$ such that \ourinterpolationalgname($\Init,\tr,\Bad,k$) successfully finds an inductive invariant (albeit potentially in an exponential number of steps).
\end{corollary}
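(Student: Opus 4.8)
The plan is to combine \Cref{lem:boundary-exists-inv-k} with \Cref{lem:itp-converge-underapprox} (a.k.a. \Cref{lem:no-fail}). Since the transition system is safe, by \Cref{lem:boundary-exists-inv-k} it admits the inductive invariant $\textit{gfp} = \neg(\backreachable{\tr}{\Bad})$, which is backwards $k$-fenced for $k$ equal to the co-diameter of the system—the minimal $k$ with $\bmcback{\tr}{\Bad}{k} = \backreachable{\tr}{\Bad}$. Fix this $k$. Now invoke \Cref{lem:itp-converge-underapprox} with $I = \textit{gfp}$: since there exists an inductive invariant that is backwards $k$-fenced, \ourinterpolationalgname$(\Init,\tr,\Bad,k)$ successfully finds an inductive invariant $\varphi$ with $\varphi \implies \textit{gfp}$, potentially in an exponential number of steps. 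This $\varphi$ is the desired inductive invariant, so the statement follows.

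The only subtlety worth spelling out is that the two cited lemmas interlock cleanly: \Cref{lem:boundary-exists-inv-k} provides the \emph{existence} of a backwards $k$-fenced invariant (which is exactly the hypothesis \Cref{lem:itp-converge-underapprox} requires), and the algorithm itself does not need to know this invariant. It is also worth remarking that the bound $k$ is finite precisely because the state space is finite, so the co-diameter is well-defined—this is already argued inside the proof of \Cref{lem:boundary-exists-inv-k}.

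I do not expect any real obstacle here: the corollary is an immediate packaging of the preceding two lemmas, and its role in the paper is to recover, within this framework, the classical completeness guarantee of~\citet{DBLP:conf/cav/McMillan03}. The phrase ``albeit potentially in an exponential number of steps'' is simply inherited verbatim from \Cref{lem:itp-converge-underapprox} and needs no further justification.
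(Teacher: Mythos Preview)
Your proposal is correct and matches the paper's own argument exactly: the paper states just before the corollary that ``recalling that the gfp is backwards $k$-fenced for $k$ which is at most the co-diameter (\Cref{lem:boundary-exists-inv-k}), this yields a completeness result,'' which is precisely the combination of \Cref{lem:boundary-exists-inv-k} and \Cref{lem:itp-converge-underapprox} that you spell out.
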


So far we have provided an upper bound on the $k$ needed for convergence which may be smaller than the co-diameter.
(Even the gfp can be $k$-fenced with $k$ that is smaller than the co-diameter!)
The real power of the approach, however, lies in the complexity analysis the fence condition facilitates beyond this completeness result, which we carry out in the next sections. %

\begin{remark}
\label{rem:fence-violation}
What happens to~\Cref{alg:itp-termmin} when the fence condition does not hold? Suppose there are two states $\sigma^{-} \models \neg I, \sigma^{+} \models I$ that differ in a single variable $p_i$. The backwards fence condition requires that $\sigma^{-}$ reaches $\Bad$ in $k$ steps.
Suppose that this is violated, and $\sigma^{+}$ is found as the post-state of a counterexample to induction (\cref{ln:itp-termmin:cti}). Then, should the algorithm attempt to drop the literal corresponding to $p_i$ in~\cref{ln:itp-termmin:bmc-gen}, the BMC check would pass: neither $\sigma^{+}$ nor $\sigma^{-}$ can reach $\Bad$ in $k$ steps. This would lead to inadvertently adding $\sigma^{-}$ to the candidate invariant, violating $\varphi \implies I$ (in contrast to the guarantee of~\Cref{lem:itp-underapprox}), which may hinder convergence.
$\sigma^{+}$ may happen not to be a possible counterexample to induction in any intermediate iteration and such a problematic scenario could not materialize. We note however that due to generalization, the counterexamples to induction the algorithm finds can be states that are not reachable in a small number of steps, or indeed at all reachable; in essence, the idea behind the fence condition is to ensure that all the discovered counterexamples continue to come from $I$ (\Cref{lem:itp-underapprox}) even when assuming
that \emph{any} $\sigma^{+} \models I$ can be discovered this way. %
\end{remark}  %
\subsection{Inference of Monotone Invariants}
\label{sec:itp-monotone}
In this section we
prove that~\Cref{alg:itp-termmin} converges in $m$ iterations when a backwards $k$-fenced, monotone DNF invariant with $m$ terms exists.
Monotonicity is essential: even if a short DNF invariant exists, the fence condition guarantees that each iterations learns an underapproximation of the invariant (\Cref{lem:itp-underapprox}), but exponentially many iterations could be required before the algorithm converges; we show that this cannot happen with monotone DNF invariants.
\begin{definition}[Monotone DNF]
A formula $\psi \in \mondnf{m}{}$ if it is in DNF with $m$ terms, and variables appear only positively.
\end{definition}

\Cref{eq:overview-parity-example-inv} is an example of a monotone DNF invariant.
The importance of monotonicity for our purpose stems from a classical property (see~\cite[e.g.][]{DBLP:journals/cacm/Valiant84}), concerning \emph{prime implicants}:
\begin{definition}[Prime Implicant]
A term $\land s$, where $s$ is a set of literals, is a \emph{prime implicant} of a formula $\psi$ if $(\land s) \implies \psi$, but for every $\ell \in s$, $\left(\land \left(s \setminus \set{\ell}\right)\right) \not\implies \psi$. It is \emph{non-trivial} if $\land s \not\equiv \false$.
\end{definition}
\begin{claim}[Folklore]
\label{thm:monotone-prime-implicant}
A non-trivial prime implicant of a monotone DNF formula $\psi$ is a term of $\psi$.
\end{claim}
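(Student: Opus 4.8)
The plan is to prove the two-way containment directly from the definitions. Let $\psi = t_1 \lor \cdots \lor t_m$ be a monotone DNF formula, where each $t_i$ is a term containing only positive literals, and let $\land s$ be a non-trivial prime implicant of $\psi$; since $\psi$ is monotone, without loss of generality $s$ consists only of positive literals as well (a negative literal in $s$ could only weaken $\land s$ against a monotone target, so it is droppable, contradicting primality — this is the one place I'd want to be slightly careful, using that flipping a variable from true to false can only turn a satisfying assignment of a monotone formula into a non-satisfying one). So identify $s$ and each $t_i$ with sets of variables.

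The first step is to show $t_i \subseteq s$ for some $i$, i.e.\ that $\land s$ is at least as strong as some term of $\psi$. Consider the assignment $\sigma$ that sets exactly the variables in $s$ to true and all others to false. Then $\sigma \models \land s$, and since $\land s \implies \psi$ we get $\sigma \models \psi$, so $\sigma \models t_i$ for some $i$. Because $t_i$ is a conjunction of positive literals satisfied by $\sigma$, every variable of $t_i$ is set true by $\sigma$, which by construction of $\sigma$ means $t_i \subseteq s$. (Here we use non-triviality: $\land s \not\equiv \false$ guarantees $\sigma$ is a genuine assignment, though for a purely positive $s$ this is automatic.)

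The second step is to show $s \subseteq t_i$, using primality. We already have $(\land t_i) \implies \psi$ since $t_i$ is a disjunct of $\psi$, i.e.\ $t_i$ is an implicant. Suppose for contradiction that some variable $x \in s \setminus t_i$. Then $\land t_i \implies \land(s \setminus \{x\})$ (since $t_i \subseteq s$ and $x \notin t_i$ give $t_i \subseteq s \setminus \{x\}$, and a larger conjunction implies a smaller one), hence $(\land(s\setminus\{x\})) $ is implied by $\land t_i$ which implies $\psi$ — wait, that's the wrong direction. Instead: $t_i \subseteq s\setminus\{x\}$ means $\land(s \setminus \{x\}) \implies \land t_i \implies \psi$. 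Thus $\land(s\setminus\{x\})$ is still an implicant of $\psi$, contradicting that $\land s$ is prime (which requires $\land(s\setminus\{\ell\}) \not\implies \psi$ for every $\ell \in s$). Therefore $s \setminus t_i = \emptyset$, i.e.\ $s \subseteq t_i$.

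Combining the two containments, $s = t_i$, so $\land s$ is exactly the term $t_i$ of $\psi$, as claimed. The only mild obstacle is the preliminary observation that a prime implicant of a monotone formula can be taken to contain no negative literals; everything else is routine set manipulation once that is in hand. One could alternatively sidestep it by interpreting the statement as being about prime implicants over positive literals only, but I would include the short argument for completeness.
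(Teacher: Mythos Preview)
Your proof is correct and follows essentially the same approach as the paper: construct the valuation that makes exactly the literals in $s$ true, use it to find a term $t_i \subseteq s$, and then use primality to conclude $s = t_i$. The only cosmetic difference is that you first argue a prime implicant of a monotone formula has no negative literals, whereas the paper skips this and instead uses non-triviality directly to ensure the valuation is well-defined (no $p, \neg p$ both in $s$); the key step---that every literal of $t_i$ satisfied by the valuation must lie in $s$---goes through either way.
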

\toolong{
\begin{proof}
Let $s$ be an implicant of $\psi$. We argue that there is a term $\hat{s}$ of $\psi$ such that $\hat{s} \subseteq s$ (as sets of literals). If $s \neq \hat{s}$, it is not a prime implicant, since dropping any $\ell \in s \setminus \hat{s}$ would also yield an implicant of $\hat{s}$ (since is a conjunction of literals) and thus of $\psi$ (since it is a disjunction of terms).

Let $v$ be the valuation that assigns $\true$ to the literals in $\psi$ and $\false$ to others; it is well-defined because $p,\neg p$ cannot both appear in $s$.
Then $v \models s$. Since $s$ is an implicant of $\psi$, also $v \models \psi$. This is a disjunction, so there is some term $\hat{s}$ of $\psi$ such that $v \models \psi$. Since $\hat{s}$ is a conjunction, $v \models \ell$ for every $\ell \in \hat{s}$, and this occurs only when $\hat{s} \subseteq s$.
\end{proof}
}

We now leverage this for the analysis of~\Cref{alg:itp-termmin}.
The idea is that the generalizations $d$ the algorithm produces (\crefrange{ln:itp-termmin:gen-start}{ln:itp-termmin:gen-end}) are prime implicants of $I$, hence each produces a new term of $I$ when $I$ is monotone.
\begin{theorem}
\label{thm:monotone-inference-efficient}
\label{lem:itp:prime-implicant}
	Let $(\Init,\tr,\Bad)$ be a transition system and $k\in\mathbb{N}$.
	If there is an inductive invariant $I \in \mondnf{m}{}$ that is backwards $k$-fenced, then
	\ourinterpolationalgname($\Init,\tr,\Bad,k$) converges to an inductive invariant in $\bigO(m)$ inductiveness checks, $\bigO(m n)$ $k$-BMC checks, and $\bigO(m n)$ time.\begin{changebar}\footnote{As explained in~\Cref{sec:prelim}, we measure time complexity modulo the SAT solver---each inductiveness and BMC query
is counted as one step.}\end{changebar}
\end{theorem}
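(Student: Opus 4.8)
The plan is to combine three ingredients: (i) by \Cref{lem:no-fail}, the algorithm never restarts and every candidate $\varphi$ satisfies $\varphi \implies I$; (ii) each generalization $d$ computed in \crefrange{ln:itp-termmin:gen-start}{ln:itp-termmin:gen-end} is a non-trivial prime implicant of $I$, hence by \Cref{thm:monotone-prime-implicant} a \emph{term} of the monotone DNF $I$; and (iii) each iteration of the \texttt{while} loop disjoins a term of $I$ that was not previously a disjunct of $\varphi$, so the loop executes at most $m+1$ times. The complexity bounds then follow by counting the SAT calls per iteration. Soundness of the returned candidate was already argued in \Cref{sec:itp-termmin}.

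First I would establish (ii). Fix an iteration and let $d$ be the final generalization. From (the proof of) \Cref{lem:itp-underapprox}, $d \implies I$, since the fence condition guarantees that dropping a literal whose $k$-BMC check passes preserves $d$ as an underapproximation of $I$. It remains to show $d$ is prime, i.e.\ that no single literal can be removed while keeping an implicant of $I$. Let $\ell \in d$. At the moment the \texttt{for} loop examined $\ell$, the working term was some $d_\ell$ with $d \subseteq d_\ell$ as sets of literals (literals are only removed), and since $\ell$ was \emph{not} dropped, the check failed: there is a state $\tau \models d_\ell \setminus \set{\ell}$ with $\tau \in \bmcback{\tr}{\Bad}{k}$. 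Since $d \setminus \set{\ell} \subseteq d_\ell \setminus \set{\ell}$ as sets of literals, $d \setminus \set{\ell}$ is logically weaker, so $\tau \models d \setminus \set{\ell}$ as well. But $\tau$ reaches $\Bad$, and no state of an inductive invariant can reach $\Bad$, so $\tau \not\models I$; hence $d \setminus \set{\ell} \not\implies I$. Together with $d \implies I$ and $\sigma' \models d$ (so $d \not\equiv \false$), $d$ is a non-trivial prime implicant of $I$, and by \Cref{thm:monotone-prime-implicant} it is a term of $I$.

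Next, (iii) and the complexity. In each iteration the counterexample to induction $(\sigma,\sigma')$ has $\sigma' \not\models \varphi$, while $\sigma' \models \cube{\sigma'}$ and hence $\sigma' \models d$; thus $d$ is not already a disjunct of $\varphi$ (else $\sigma' \models \varphi$), so $\varphi$ gains a new distinct term-of-$I$ disjunct. Since $I$ has $m$ terms this can occur at most $m$ times; once all terms of $I$ appear among the disjuncts, $\varphi \equiv \Init \lor I \equiv I$ (using $\Init \implies I$), which is inductive, so the next inductiveness check exits the loop. Hence there are at most $m+1$ iterations. Each performs one inductiveness check (\cref{ln:itp-termmin:ind-check}), one $k$-BMC check on $\sigma'$ (\cref{ln:itp-termmin:check-no-overapprox}), and one $k$-BMC check per literal of $\cube{\sigma'}$ — exactly $n$, since $\sigma'$ is a full valuation — plus $\bigO(n)$ bookkeeping. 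Summing over $\bigO(m)$ iterations yields $\bigO(m)$ inductiveness checks, $\bigO(mn)$ $k$-BMC checks, and $\bigO(mn)$ time.

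The hard part will be step (ii): arguing that the term-minimization loop produces a prime implicant of the \emph{unknown} invariant $I$. The forward direction ``dropped literal $\Rightarrow$ still implies $I$'' rides entirely on the fence condition via \Cref{lem:itp-underapprox}; the subtler converse, ``kept literal $\Rightarrow$ removing it breaks implication of $I$,'' needs the observation that the $k$-BMC witness found at the time $\ell$ was examined (for the possibly larger working term $d_\ell$) remains a witness for the final $d$, combined with the fact that $k$-backwards-reachable states cannot belong to an inductive invariant. Everything after that is a counting argument.
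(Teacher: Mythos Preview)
Your proof is correct and follows essentially the same approach as the paper: establish that each generalization $d$ is a prime implicant of $I$ (via \Cref{lem:itp-underapprox} for $d \implies I$, and via the failed BMC check at the time $\ell$ was examined for primality), invoke \Cref{thm:monotone-prime-implicant} to conclude $d$ is a term of $I$, and then count iterations. The only cosmetic difference is that you argue primality directly by exhibiting the witness $\tau$, whereas the paper phrases the same step as a proof by contradiction (assume $d\setminus\{\ell\}\implies I$ and derive that the BMC check would have passed); these are contrapositives of one another.
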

\begin{proof}
	We first show the generalizations $d$ are prime implicants of $I$ (this holds even when $I$ is not monotone).
	That $d \implies I$ was established in the proof of~\Cref{lem:itp-underapprox}. Suppose, for the sake of contradiction, that for some literal $\ell \in d$ it holds that $d \setminus \set{\ell} \implies I$ as well. At some point the algorithm attempted to drop $\ell$; let $\tilde{d}$ be the term the algorithm considered at that point. Since we only drop literals afterwards, $d \subseteq \tilde{d}$, and hence $d \setminus \set{\ell} \subseteq \tilde{d} \setminus \set{\ell}$; since these are conjunctions, this means that $\tilde{d} \setminus \set{\ell} \implies d \setminus \set{\ell}$. Since no state in $I$ reaches $\Bad$ in any number of steps, $\bmc{\tr}{d \setminus \set{\ell}}{k} \cap \Bad = \emptyset$, and in particular $\bmc{\tr}{\tilde{d} \setminus \set{\ell}}{k} \cap \Bad = \emptyset$. But according to this check, the algorithm would have dropped $\ell$, which is a contradiction to $\ell \in d$.

	We turn to the overall analysis. By~\Cref{lem:no-fail} the algorithm does not need to restart, and converges to an inductive invariant $\varphi \implies I$. In every iteration the algorithm disjoins to $\varphi$ a term of $I$, because it is a prime implicant of $I$ and $I$ is monotone and by~\Cref{thm:monotone-prime-implicant}. %
	Furthermore, each iteration produces a new term of $I$, since before it, $\sigma' \not\models \varphi$, but $\sigma' \models d$. Thus, after at most $m$ iterations the algorithm must have added to $\varphi$ all the terms of $I$, so $I \implies \varphi$. At this point, from~\Cref{lem:itp-underapprox}, $I \equiv \varphi$ and the algorithm terminates.
	Each iteration performs one inductiveness check in~\cref{ln:itp-termmin:ind-check,ln:itp-termmin:cti}, one $k$-BMC in~\cref{ln:itp-termmin:check-no-overapprox}, and another $k$-BMC checks in~\cref{ln:itp-termmin:bmc-gen} for each of the $n$ literals.
\end{proof}

\begin{remark}
	\Cref{thm:monotone-inference-efficient} has implications also for systems that do not satisfy its requirements. If the algorithm has not converged to an invariant in the number of steps specified in~\Cref{thm:monotone-inference-efficient}, this is a \emph{witness} that an invariant satisfying the theorem's conditions \emph{does not exist}. This may indicate a bug rendering the system unsafe, or that an invariant exists but is not $k$-fenced, not monotone, or is too long.
\end{remark}

\begin{changebar}
\begin{remark}
What happens when~\Cref{alg:itp-termmin} uses a bound $k'$ that is too small, even though a $k$-fenced invariant, with $k > k'$, exists?
When using the smaller bound $k'$ naively, the algorithm might overgeneralize to beyond the gfp and fail to find an invariant, or converge to a different invariant that does not admit a short representation in an exponential number of steps. The polynomial bound guaranteed from the larger $k$ can be recovered by increasing the bound once the number of steps surpasses a predefined polynomial, or by running all possible bounds in parallel / diagonally until one instance finds an invariant.
\end{remark}
\end{changebar}

\para{Inference of (anti)monotone CNF invariants}
\label{sec:dual-interpolation}
An efficiency result for inferring (anti)monotone \emph{CNF} invariants that satisfy the \emph{forward} fence condition follows, through the duality discussed in~\Cref{sec:forward-backward-duality}.
\begin{definition}
A formula $\psi \in \moncnf{m}{}$ if it is in CNF with $m$ clauses, and variables appear only negatively.
\end{definition}
\begin{definition}[Forwards $k$-Fenced]
\label{def:fence-forwards}
For a transition system $(\Init,\tr,\Bad)$, an inductive invariant $I$ is \emph{forwards $k$-fenced} for $k \in \mathbb{N}$ if $\boundarypos{I} \subseteq \bmc{\tr}{\Init}{k}$.
\end{definition}
More explicitly, an invariant $I$ is forwards $k$-fenced if every state in $I$ that has a Hamming neighbor in $\neg I$ is reachable from $\Init$ in at most $k$ steps.
From~\Cref{thm:monotone-inference-efficient} we obtain:
\begin{corollary}
\label{thm:monotone-inference-efficient-cnf}
	Let $(\Init,\tr,\Bad)$ be a transition system and $k \in \mathbb{N}$.
	If there exists an inductive invariant $I \in \moncnf{m}{}$ that is forwards $k$-fenced, then
	$\dualourinterpolationalgname = \ourinterpolationalgname^{*}$($\Init,\tr,\Bad,k$) converges to an inductive invariant in $\bigO(m)$ inductiveness checks, $\bigO(m n)$ $k$-BMC checks, and $\bigO(m n)$ time.
\end{corollary}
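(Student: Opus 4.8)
The plan is to derive this result purely from \Cref{thm:monotone-inference-efficient} via the forwards/backwards duality established in \S\ref{sec:forward-backward-duality}, so no new invariant-inference argument is needed. First I would recall the duality correspondence: given the transition system $(\Init,\tr,\Bad)$, its dual is $(\Bad,\tr^{-1},\Init)$, which is safe iff the original is, and $I$ is an inductive invariant for the original iff $\neg I$ is an inductive invariant for the dual. By definition, $\dualourinterpolationalgname = \ourinterpolationalgname^{*}$ means $\dualourinterpolationalgname(\Init,\tr,\Bad,k) = \neg\,\ourinterpolationalgname(\Bad,\tr^{-1},\Init,k)$, and the complexity-transfer remark at the end of \S\ref{sec:forward-backward-duality} tells us that the step/query counts of $\ourinterpolationalgname^{*}$ on $(\Init,\tr,\Bad)$ equal those of $\ourinterpolationalgname$ on $(\Bad,\tr^{-1},\Init)$.

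Next I would translate the three hypotheses into the dual world. The key observations: (1) negating a formula swaps CNF and DNF and swaps positive and negative literals, so $I \in \moncnf{m}{}$ iff $\neg I \in \mondnf{m}{}$; (2) the inner boundary of $I$ is exactly the outer boundary of $\neg I$, i.e. $\boundarypos{I} = \boundaryneg{\neg I}$ (this is literally the definition $\boundaryneg{S} = \boundarypos{\bar S}$ from \Cref{def:fence-backwards}'s surrounding definition applied with $S = \neg I$); and (3) the $k$-forwards reachable states of $(\Init,\tr,\Bad)$ coincide with the $k$-backwards reachable states of its dual $(\Bad,\tr^{-1},\Init)$, namely $\bmc{\tr}{\Init}{k} = \bmcback{\tr^{-1}}{\Init}{k}$. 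Combining (2) and (3), the hypothesis ``$I$ is forwards $k$-fenced for $(\Init,\tr,\Bad)$'', i.e. $\boundarypos{I} \subseteq \bmc{\tr}{\Init}{k}$, is equivalent to $\boundaryneg{\neg I} \subseteq \bmcback{\tr^{-1}}{\Init}{k}$, which is precisely the statement that $\neg I$ is backwards $k$-fenced for the dual system $(\Bad,\tr^{-1},\Init)$.

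Putting it together: $\neg I$ is a backwards $k$-fenced inductive invariant in $\mondnf{m}{}$ for the dual transition system $(\Bad,\tr^{-1},\Init)$, so \Cref{thm:monotone-inference-efficient} applies to $\ourinterpolationalgname(\Bad,\tr^{-1},\Init,k)$ and yields convergence in $\bigO(m)$ inductiveness checks, $\bigO(mn)$ $k$-BMC checks, and $\bigO(mn)$ time. By the complexity-transfer statement, $\dualourinterpolationalgname = \ourinterpolationalgname^{*}(\Init,\tr,\Bad,k)$ achieves the same bounds and returns $\neg(\text{the invariant found})$, which by the duality is an inductive invariant for the original system. I expect no substantive obstacle here; the only point requiring care is checking that the notion of ``step'' used in measuring complexity (inductiveness checks, $k$-BMC checks, oracle steps) is genuinely symmetric under the dual transformation — in particular that a $k$-BMC query of the form $\bmc{\tr}{\psi}{k}\cap\Bad$ maps to a $k$-BMC query $\bmc{\tr^{-1}}{\neg\psi}{k}\cap\Init$ of the same cost and that inductiveness checks likewise correspond — but this is exactly what the duality discussion in \S\ref{sec:forward-backward-duality} was set up to guarantee.
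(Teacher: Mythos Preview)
Your proposal is correct and follows exactly the approach the paper intends: the paper simply states that the corollary follows from \Cref{thm:monotone-inference-efficient} via the forwards/backwards duality of \S\ref{sec:forward-backward-duality}, and you have spelled out precisely that derivation (negation takes $\moncnf{m}{}$ to $\mondnf{m}{}$, $\boundarypos{I}=\boundaryneg{\neg I}$, and forwards reachability in the primal equals backwards reachability in the dual).
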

The code of the dual algorithm appears in~\refappendix{sec:dual-interpolation}.

\begin{remark}
\label{rem:unate-itp}
\Cref{thm:monotone-inference-efficient,thm:monotone-inference-efficient-cnf} can be extended to the case of \emph{unate}~\cite{DBLP:journals/jacm/AngluinHK93} formulas---DNF/CNF formulas where each variable (separately) appears only negatively or only positively---when the backwards/forwards fence condition holds, using essentially the same proof as in~\Cref{thm:monotone-prime-implicant,thm:monotone-inference-efficient}.
\end{remark}

\para{\Cref{alg:itp-termmin} beyond monotone invariants}
What happens when we try to apply~\Cref{alg:itp-termmin} to infer invariants that satisfy the fence condition but are not monotone (or unate)?
By~\Cref{lem:itp-underapprox}, the algorithm would converge to an inductive invariant---but this make take an exponential number of iterations. In~\Cref{lem:itp:prime-implicant} we have shown that each iteration produces a prime implicant of $I$.
Unfortunately, the number of prime implicants can be exponential~\cite{DBLP:journals/dm/ChandraM78,DBLP:journals/siamdm/SloanST08}.
Worse, \citet{DBLP:journals/ml/AizensteinP95} have shown, in the context of exact learning, that there are cases when there is a unique, short representation of $I$ as a disjunction of prime implicants, but greedily collecting prime implicants cannot escape sifting through exponentially many additional prime implicants (which are subsumed by the prime implicants in the ``right'' representation).
\TODO{there's some generalization in~\cite{DBLP:journals/ml/AizensteinP95}, distance 2}
\section{Inference Beyond Monotone Invariants}
\label{sec:beyond-monotone}
In this section we transcend the class of monotone invariants, and obtain efficiency results for inferring inductive invariants in almost-monotone DNF (as well as additional classes that admit a small monotone basis), based on the $\Lambda$-learning algorithm by~\citet{DBLP:journals/iandc/Bshouty95}.
We first present the invariant inference algorithm and a self-contained proof of efficient convergence relying on the fence condition. In~\Cref{sec:bshouty-analysis-queries} we obtain an alternative proof by a transformation that
can ``simulate'' the original algorithm through the fence condition.
\begin{algorithm}[H]
\caption{Invariant inference using a known monotone basis, building on~\cite{DBLP:journals/iandc/Bshouty95}}
\label{alg:bshouty-known-basis}
\vspace{-0.5cm}
\begin{multicols}{2}
\begin{algorithmic}[1]
\begin{footnotesize}
\State Assuming a known basis $\set{a_1,\ldots,a_t}$ (\Cref{def:monotone-basis})
\Procedure{\bshoutyinferencealg}{$\Init$, $\tr$, $\Bad$, $k$}
	\State $H_1,\ldots,H_t \gets \false$
	\While{$H$ $=$ $\bigwedge_{i=1}^{t}{H_i}$ not an inductive inductive}
		\State \textbf{let} $\sigma'$ s.t.\ $(\sigma,\sigma') \models H \land \tr \land \neg H'$ 
		\\ \qquad \qquad \qquad \ \ \ or \ \ \ \ $\sigma' \models \Init \land \neg H$ $\label{ln:bhsouty:positive-example}$
		\For{$i=1,\ldots,t$}
			\If{$\sigma' \not\models H_i$}
				\State $d$ $\gets$ \Call{MonGenBmc}{$\sigma'$, $a_i$, $\Init$, $\tr$, $k$}
				\State $H_i \gets H_i \lor d$
			\EndIf
		\EndFor
	\EndWhile
	\State \Return $\bigwedge_{i=1}^{t}{H_i}$
\EndProcedure

\Procedure{MonGenBmc}{$\sigma$, $a$, $\tr$, $\Bad$, $k$}
	\If{$\bmc{\tr}{\sigma}{k} \cap \Bad \neq \emptyset$} $\label{ln:bshouty:check-no-overapprox}$
		\State \textbf{restart} with larger $k$ $\label{ln:bshouty:fail}$
	\EndIf
	\State $v \gets \sigma$; walked $\gets$ $\true$
	\While{walked}
		\State walked $\gets$ $\false$
		\For{$j=1,\ldots,n$ such that \begin{changebar}$a[p_j] \neq v[p_j]$\end{changebar}}
			\State $x \gets v[p_j \mapsto a[p_j]]$
			\If{$\bmc{\tr}{x}{k} \cap \Bad = \emptyset$} $\label{ln:bshouty:check-bmc}$ %
				\State $v \gets$ x; walked $\gets$ $\true$
			\EndIf
		\EndFor
	\EndWhile
	\State \Return{$\cubemon{v}{a}$}
\EndProcedure
\end{footnotesize}
\end{algorithmic}
\end{multicols}
\vspace{-0.4cm}
\end{algorithm}
\vspace{-0.4cm} 
\subsection{Inference with a Monotone Basis}
\label{sec:a-monotone-background}
In this section we present~\Cref{alg:bshouty-known-basis}, an algorithm for inferring inductive invariants with a known monotone basis. Choosing an appropriate basis produces the algorithm for almost-monotone DNF, and is described below in~\Cref{sec:classes-known-basis}. %
We use, and present along the way, notions from the monotone theory~\cite{DBLP:journals/iandc/Bshouty95}.\footnote{Our presentation differs somewhat from the original. \iflong\else Omitted proofs appear in the supplementary materials.\fi}
The main theme is \emph{variable translation}~\cite[see e.g.][]{wiedemann1987hamming}, which intuitively replaces a variable with its negation. Clearly, a unate formula \begin{changebar}(a formula where each variable appears either only negatively or only positively---see~\Cref{rem:unate-itp})\end{changebar} can be made monotone in this way. The key to transcending unate formulas is to consider \emph{multiple} translations and the ``monotonization'' of the formula according to each of these\sharon{can you relate this to the formulas being handled: e.g., they are conjunctions of such monotonizations?}\yotamsmall{yes, if we say that a monotonization is an overapproximation. don't you feel it's vague in exactly the right amount? :)}. %

\begin{definition}[Translation]
A \emph{translation} is represented by a valuation $a$. Intuitively, if $a[p_i]=\true$ then the translation defined by $a$ replaces $p_i$ with its negation; if $a[p_i]=\false$, it does not. (Semantically, applying $a$ to a state $x$ yields the state obtained by bitwise xor, $x \oplus a$.)
\end{definition}

\begin{definition}[$a$-Monotonicity]
Let $v,x$ be valuations (states) and $a$ a translation.
We define the partial order $v \leq_a x$ to mean that $x$ disagrees with $a$ on all variables on which $v$ disagree with $a$. That is, for every variable $p_i$ such that $v[p_i] \neq a[p_i]$ also $x[p_i] \neq a[p_i]$. (For example, for $a = \vec{0}$, the translation that assigns $\false$ to all variables, $v \leq_{\vec{0}} x$ means that in each variable where $v$ is $\true$, so is $x$.)
A formula $\psi$ is $a$-monotone if
$
\forall v \leq_a x. \ v \models \psi \mbox{ implies } x \models \psi.
$
\end{definition}
For example, monotone formulas are $\vec{0}$-monotone.
When $\psi$ is in DNF, $\psi$ is $a$-monotone if every variable $p$ appears in $\psi$ in polarity according to $\neg a[p]$:
$p$ appears only positively if $a[p]=\false$ and only negatively if $a[p]=\true$.
In other words, $\psi$ in DNF is $a$-monotone if interchanging $p,\neg p$ whenever $a[p]=\true$ results in a formula that is monotone DNF (per the standard definition).

In general, a formula may not be $a$-monotone for any $a$; but it can always be expressed as a conjunction of formulas, each monotone w.r.t.\ some translation. A set of translations that suffices for this is called a basis:
\begin{definition}[Monotone Basis]
\label{def:monotone-basis}
A \emph{basis} is a set of translations $\set{a_1,\ldots,a_t}$.
It is a basis \emph{for a formula} $\psi$ if
there exist clauses $c_1,\ldots,c_s$ such that $\psi \equiv c_1 \land \ldots \land c_s$ and for every $1 \leq i \leq s$ there exists $1 \leq j \leq t$ such that $c_i$ is $a_j$-monotone.
That is, in some CNF representation of $\psi$, every clause is monotone w.r.t.\ at least one translation from the basis. %
\end{definition}

\Cref{alg:bshouty-known-basis} infers an invariant under the assumption that a known set of translations is a basis for the invariant. (We address the choice of the basis in~\Cref{sec:classes-known-basis}.)
The main idea of~\Cref{alg:bshouty-known-basis} is to think about the desired invariant $I$ as a conjunction $\bigwedge_{i=1}^{t}{H_i}$ where each $H_i$ is $a_i$-monotone, and infer the $H_i$ formulas.
\yotamsmall{omitting:}%
It is immediate from the definition of a basis that a formula can be expressed in this way by letting each conjunct $H_i$ be the CNF formula that consists of the $a_i$-monotone clauses in the CNF representation. However, the number of clauses in the CNF representation, $s$, may be much larger than the number of elements in the basis, $t$.
It is not immediate that there exist \emph{short} $a_i$-monotone formulas $H_i$ that achieve this, and that they can in fact be inferred efficiently.
The main concept that enables this is the \emph{least $a_i$-monotone overapproximation} of a formula, which the algorithm infers in DNF (and not CNF), resulting in a complexity that depends on $t$, but not on $s$.

Given a formula $\varphi$ and a translation $a$, the \emph{least $a$-monotone overapproximation} of $\varphi$ is the function\footnote{Every Boolean function is expressible by a propositional formula, so we use the function interchangeably with a formula representing it (chosen arbitrarily).} $\monox{\varphi}{a}$ defined by
\begin{equation*}
	x \models \monox{\varphi}{a} \mbox{ iff } \exists v. \ v \leq_a x 	\land	 v \models \varphi.
\end{equation*}
This is the least $a$-monotone function such that $\varphi \implies \monox{\varphi}{a}$. %

Each $H_i$ will attempt to learn $\monox{I}{a_i}$. The following lemma shows that when this is achieved for all the $a_i$'s in the basis, the original formula is indeed their conjunction:
\begin{lemma}[\citet{DBLP:journals/iandc/Bshouty95}, Lemma 3]
\label{lem:basis-conj-monox}
If $\set{a_1,\ldots,a_t}$ is a basis for $\varphi$, then $\varphi \equiv \bigwedge_{i=1}^{t}{\monox{\varphi}{a_i}}$.
\end{lemma}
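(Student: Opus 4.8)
The plan is to prove the two implications separately, the forward one being immediate and the backward one being the crux (though still short once the basis definition is unpacked).

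For $\varphi \implies \bigwedge_{i=1}^{t}{\monox{\varphi}{a_i}}$: first I would observe that $\leq_a$ is reflexive for every translation $a$, since ``$x$ disagrees with $a$ on every variable on which $x$ disagrees with $a$'' is a tautology. Hence if $x \models \varphi$, then for each $i$ the witness $v \eqdef x$ satisfies $v \leq_{a_i} x$ and $v \models \varphi$, so by the definition of $\monox{\varphi}{a_i}$ we get $x \models \monox{\varphi}{a_i}$. As this holds for all $i$, $x \models \bigwedge_{i=1}^{t}{\monox{\varphi}{a_i}}$.

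For the converse $\bigwedge_{i=1}^{t}{\monox{\varphi}{a_i}} \implies \varphi$, I would argue by contraposition. Suppose $x \not\models \varphi$. Since $\set{a_1,\ldots,a_t}$ is a basis for $\varphi$, fix a CNF representation $\varphi \equiv c_1 \land \ldots \land c_s$ in which every clause $c_j$ is $a_{\pi(j)}$-monotone for some index $\pi(j) \in \set{1,\ldots,t}$. From $x \not\models \varphi$ there is some clause $c_j$ with $x \not\models c_j$; set $a \eqdef a_{\pi(j)}$, so $c_j$ is $a$-monotone. Now suppose for contradiction that $x \models \monox{\varphi}{a}$. Then by definition there is a state $v$ with $v \leq_a x$ and $v \models \varphi$; in particular $v \models c_j$. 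Since $c_j$ is $a$-monotone and $v \leq_a x$, this forces $x \models c_j$, contradicting the choice of $c_j$. Hence $x \not\models \monox{\varphi}{a}$, and therefore $x \not\models \bigwedge_{i=1}^{t}{\monox{\varphi}{a_i}}$.

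I do not expect a serious obstacle here; the only points requiring care are (i) invoking reflexivity of $\leq_a$ in the easy direction, and (ii) correctly tracking the quantifier structure in the basis definition so that the violated clause $c_j$ really is monotone with respect to \emph{some} translation that appears in the algorithm's conjunction. The explicit characterization $x \models \monox{\varphi}{a} \iff \exists v.\ v \leq_a x \land v \models \varphi$ is all that is needed; the ``least overapproximation'' property of $\monox{\varphi}{a}$ is not required for this lemma. Note also that the argument uses only the \emph{existence} of a basis-compatible CNF of $\varphi$, not any bound on its size $s$, which is consistent with the surrounding discussion that $s$ may be far larger than $t$.
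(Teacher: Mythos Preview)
Your proof is correct and follows essentially the same decomposition as the paper's: the easy direction via reflexivity of $\leq_a$, and the hard direction via the basis's CNF representation, picking a violated clause $c_j$ and using its $a_{\pi(j)}$-monotonicity. The only difference is presentational: you argue the hard direction by contraposition and invoke the definition of $a$-monotonicity of $c_j$ directly (from $v \leq_a x$ and $v \models c_j$ conclude $x \models c_j$), whereas the paper proves the direct implication $\monox{\varphi}{a_j} \implies c_i$ and, rather than citing monotonicity of $c_i$, re-derives it at the literal level by first establishing $a_j \not\models c_i$ and then tracking the satisfied literal. Your use of the definition as a black box is cleaner and avoids that detour; the paper's version is more explicit about why a monotone clause behaves this way, but both are the same argument.
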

\toolong{
\begin{proof}
First, $\varphi \implies \monox{\varphi}{a_i}$ directly from the definition, and so $\varphi$ implies also their conjunction.
For the other direction, let $c_1 \land \ldots \land c_s$ be a CNF representation of $\varphi$ where each $c_i$ is $a_j$-monotone for some $j$.
We first argue that $a_j \not\models c_i$.
Let $\ell$ be a literal in the disjunction that is $c_i$. We show that $a_j \not\models \ell$, because if $p$ is the literal in $\ell$, then $a[p]=\false$ mandates $\ell = p$, and $a[p]=\true$ mandates $\ell=\neg p$. For otherwise, a valuation $v$ that falsifies the other literals in $c_i$ would have $v[p \mapsto a_j[p]] \models c_j$ but $v[p \mapsto \neg a_j[p]] \not\models c_j$, which is a contradiction to $c_i$ being $a_j$-monotone.

Let $x \models \bigwedge_{i=1}^{t}{\monox{\varphi}{a_i}}$. We want to prove that $x \models c_i$ for every $i$.
For every $c_i$, $x \models \monox{\varphi}{a_j}$. It suffices to show $\monox{\varphi}{a_j} \implies c_i$.
Let $y \models \monox{\varphi}{a_j}$. Then there is $v \leq_{a_j} y$ such that $v \models \varphi$, and in particular $v \models c_i$. But $c_i$ is a disjunction, and $a_j \not\models c_i$, so the variable $p_\ell$ that appears in $\ell \in c_i$ must disagree with $a_j$; $y$ thus also disagree with $a_j$ on $p_\ell$ (since $v \leq_{a_j} y$), namely, agree with $v$. This is for every $\ell \in c_i$, so $y$ and $v$ agree on all the variables in $c_i$, and hence $y \models c_i$ as well.
\end{proof}
}

A key feature of the $\monox{\varphi}{a_i}$'s is that they are guaranteed to have a short DNF representation, provided that $\varphi$ has one. This is established by the following lemmas. %
\begin{lemma}[\citet{DBLP:journals/iandc/Bshouty95}, Lemma 1(7)]
\label{lem:term-monotonization}
For a term $t$, $\monox{t}{a} \equiv \bigwedge \set{\ell \in t \, | \, a \not\models \ell}$.
\end{lemma}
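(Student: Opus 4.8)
The plan is to prove the equivalence $\monox{t}{a} \liff \bigwedge s$ by establishing each implication separately, where I write $s := \set{\ell \in t \mid a \not\models \ell}$ for the set of literals of the term $t$ whose underlying variable $a$ assigns ``the falsifying way''. Throughout I treat $t$ as a consistent set of literals (each variable occurring at most once, so $t \not\equiv \false$), which is the standard convention for terms of a DNF; this hypothesis is actually needed, since for an inconsistent $t$ one has $\monox{t}{a} \equiv \false$ whereas $\bigwedge s$ need not be $\false$. The only ingredients are the definitions of a translation, of the order $\leq_a$, and of $\monox{\cdot}{a}$, plus the bit-level fact that $a \not\models \ell$ iff $a[p]$ is the value of $p$ that falsifies $\ell$ (so $\neg a[p]$ is the value that satisfies it).

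For the direction $\monox{t}{a} \implies \bigwedge s$: I would take $x \models \monox{t}{a}$, so by definition there is a valuation $v$ with $v \leq_a x$ and $v \models t$. Fix $\ell \in s$ with underlying variable $p$. Since $v \models \ell$ but $a \not\models \ell$, the valuations differ at $p$, i.e.\ $v[p] \neq a[p]$; then $v \leq_a x$ forces $x[p] \neq a[p]$; and since $a[p]$ is exactly the value falsifying $\ell$, the value $x[p]$ satisfies $\ell$, so $x \models \ell$. As $\ell \in s$ was arbitrary, $x \models \bigwedge s$.

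For the direction $\bigwedge s \implies \monox{t}{a}$: given $x \models \bigwedge s$, I would exhibit the witness $v$ explicitly --- set $v[p]$ to the unique value making the literal of $p$ in $t$ true, for each variable $p$ occurring in $t$, and $v[p] := x[p]$ for every other variable. Consistency of $t$ gives $v \models t$ immediately. To check $v \leq_a x$, take any $p$ with $v[p] \neq a[p]$: if $p$ does not occur in $t$ then $v[p] = x[p]$ and we are done; if $p$ occurs in $t$ via the literal $\ell$, then $v \models \ell$ together with $v[p] \neq a[p]$ forces $a \not\models \ell$, hence $\ell \in s$, hence $x \models \ell$ by hypothesis, so $x[p] = v[p] \neq a[p]$. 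Thus $v \leq_a x$ and $v \models t$, giving $x \models \monox{t}{a}$. (When $s = \emptyset$ the right-hand side is $\true$ and the same $v$ still works: it then disagrees with $a$ only off the support of $t$, where $v = x$.)

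The only real obstacle is the careful bookkeeping of the bit-level dictionary ``$a \not\models \ell$'' $\leftrightarrow$ ``$a[p]$ falsifies $\ell$'' together with arranging the witness $v$ in the second direction so that the $\leq_a$ requirement has to be verified only on the variables of $t$, where it collapses exactly to the hypothesis $x \models \bigwedge s$; defining $v$ to coincide with $x$ off the support of $t$ is precisely what makes this work. Everything else is routine unfolding of the definitions.
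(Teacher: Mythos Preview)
Your proof is correct and follows essentially the same two-implication strategy as the paper, including an explicit witness construction for the reverse direction. Your choice of witness (forcing $v$ to satisfy each literal of $t$ on its support and taking $v = x$ elsewhere) differs slightly from the paper's construction (which starts from $x$ and modifies the variables outside $\psi$), but the verification is equally direct; your explicit remark that $t$ must be consistent is a useful caveat the paper leaves implicit.
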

\toolong{
\begin{proof}
Denote $\psi = \bigwedge \set{\ell \in t \, | \, a \not\models \ell}$.

Let $x \in \monox{t}{a}$. Then there is $v \models t$ such that $v \leq_a x$. Let $\ell \in t$; $v \models \ell$. If $a \not\models \ell$, $v,a$ must disagree on the variable $p_i$ which appears in $\ell$; therefore, $x,v$ must agree on $p_i$, and hence also $x \models \ell$. This proves $\monox{t}{a} \implies \psi$.

For the other direction, let $x \models \psi$. Let $v$ be obtained from $x$ by setting every variable $p_i$ that does not appear in $\psi$ to disagree with the corresponding value in $a$; then $v \leq_a x$. Now $v \models \psi$ (since these variables do not appear in $\psi$), and, furthermore, $v \models \ell$ for every $\ell \in t$ that was dropped from $t$ to $\psi$, because $v$ disagrees with $a$ on those literals, which are those that $a \not\models \ell$. Overall, $v \models t$, which implies $x \models \monox{t}{a}$.
\end{proof}
}

\begin{lemma}[\citet{DBLP:journals/iandc/Bshouty95}, Lemma 1(7)]
\label{lem:bshouty-mon-mindnf}
Let $\varphi = t_1 \lor \ldots \lor t_m$ in DNF. Then the monotonization $\monox{\varphi}{a} \equiv \monox{t_1}{a} \lor \ldots \monox{t_n}{a}$ which is a DNF with $m$ terms.\sharon{for the sake of shortening, can inline \Cref{lem:term-monotonization} into this lemma. Since later we refer to it, can even have itemize in the lemma and refer to the specific item, something like:
\begin{inparaenum}[(1)]
\item For a term $t$, $\monox{t}{a} \equiv \bigwedge \set{\ell \in t \, | \, a \not\models \ell}$.
\item For a DNF formula $\varphi = t_1 \lor \ldots \lor t_m$, $\monox{\varphi}{a} \equiv \monox{t_1}{a} \lor \ldots \monox{t_n}{a}$ which is a DNF with $m$ terms.
\end{inparaenum}}
\end{lemma}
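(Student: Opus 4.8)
The plan is to unfold the definition of the least $a$-monotone overapproximation and commute the existential quantifier over $v$ with the disjunction defining $\varphi$. Concretely, for any state $x$ we have $x \models \monox{\varphi}{a}$ iff there is $v \leq_a x$ with $v \models \varphi$. Since $\varphi = t_1 \lor \ldots \lor t_m$, the condition $v \models \varphi$ is equivalent to $\exists i.\ v \models t_i$, so $x \models \monox{\varphi}{a}$ iff $\exists i.\ \exists v \leq_a x.\ v \models t_i$. The two existentials are independent---the range of $v$ (namely $\set{v \mid v \leq_a x}$) does not depend on $i$---so we may pull the disjunction outside: this is equivalent to $\exists i.\ x \models \monox{t_i}{a}$, i.e.\ $x \models \bigvee_{i=1}^{m} \monox{t_i}{a}$. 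Since $x$ was arbitrary, $\monox{\varphi}{a} \equiv \monox{t_1}{a} \lor \ldots \lor \monox{t_m}{a}$.

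It then remains to observe that the right-hand side is a DNF with $m$ terms. By~\Cref{lem:term-monotonization}, for each $t_i$ the monotonization $\monox{t_i}{a}$ is equivalent to $\bigwedge \set{\ell \in t_i \mid a \not\models \ell}$, which is again a conjunction of literals---a single term (possibly empty, in which case it is $\true$). Hence $\monox{t_1}{a} \lor \ldots \lor \monox{t_m}{a}$ is a disjunction of $m$ terms, as claimed.

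There is essentially no obstacle here: the only content is the commutation of $\exists v$ with the finite disjunction, which is valid precisely because the quantifier over $v \leq_a x$ ranges over a set independent of the disjunct, plus the bookkeeping that a monotonized term is still a term, which is immediate from~\Cref{lem:term-monotonization}. One could alternatively prove the two implications $\monox{\varphi}{a} \implies \bigvee_i \monox{t_i}{a}$ and its converse separately (the forward direction using that each $t_i \implies \varphi$ and monotonization is monotone in its first argument, the backward direction using the definition directly), but the iff-chain above dispatches both simultaneously.
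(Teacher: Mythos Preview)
Your proof is correct and takes essentially the same approach as the paper: both unfold the definition of $\monox{\cdot}{a}$ and use that existential quantification over $v$ distributes over disjunction, then invoke \Cref{lem:term-monotonization} to see each $\monox{t_i}{a}$ is a term. The paper presents this as the binary distributivity $\monox{\psi_1 \lor \psi_2}{a} \equiv \monox{\psi_1}{a} \lor \monox{\psi_2}{a}$ proved via the two implications you mention in your final paragraph, while you do the $m$-ary case in one iff-chain, but the content is identical.
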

\toolong{
\begin{proof}
That it is a DNF formula follows from ~\Cref{lem:term-monotonization}.
More generally, $\monox{\psi_1 \lor \psi_2}{a} \equiv \monox{\psi_1}{a} \lor \monox{\psi_2}{a}$:
Let $x \models \monox{\psi_1 \lor \psi_2}{a}$. Then there is $v \models \psi_1 \lor \psi_2$ such that $x \leq_a x$. If $v \models \psi_1$, by definition we must have $x \models \monox{\psi_1}{a}$ and in particular $x \models \monox{\psi_1}{a} \lor \monox{\psi_2}{a}$; similarly for $\psi_2$. This shows $\monox{\psi_1 \lor \psi_2}{a} \implies \monox{\psi_1}{a} \lor \monox{\psi_2}{a}$.
As for the other direction, let $x \models \monox{\psi_1}{a} \lor \monox{\psi_2}{a}$. Without loss of generality, assume $x \models \monox{\psi_1}{a}$. Then there is $v \models \psi_1$, and in particular $v \models \psi_1 \lor \psi_2$, such that $v \leq_a x$. So we must have $x \models \monox{\psi_1 \lor \psi_2}{a}$.
\end{proof}
}

Our goal now is to gradually infer $\monox{I}{a_i}$, despite $I$ being unknown.
This is done by iteratively obtaining states that ought to be added to the current hypothesis $\bigwedge_{i=1}^{t}{H_i}$ (\cref{ln:bhsouty:positive-example}). Such a state must be added to every $H_i$ that does not yet include it.
This is done by adding (disjoining) an $a_i$-monotone term to $H_i$.
The term we add (ignoring generalization at this point) is the monotone cube:
\begin{definition}[Monotone Cube]
For a state $v$ and a translation $a$, the \emph{monotone cube} of $v$ w.r.t.\ $a$ is the conjunction of all $a$-monotone literals that hold in $v$:
\begin{equation*}
	\cubemon{v}{a} = \bigwedge{\set{p_i \ | \ v[p_i]=\true, \, v[p_i] \neq a[p_i]}} \land
					 \bigwedge{\set{\neg p_i \ | \ v[p_i]=\false, \, v[p_i] \neq a[p_i]}}.
\end{equation*}
(In fact, $\cubemon{v}{a} \equiv \monox{\cube{v}}{a}$, as can be easily seen from~\Cref{lem:term-monotonization}.)
\sharon{add here? "Note that $\cubemon{v}{a} \equiv \monox{\cube{v}}{a}$". If we want it, this seems like a suitable place, but is it helpful?}\yotamsmall{added, not sure if it's useful, but I guess people might wonder}
\end{definition}
The monotone cube includes more states than the original, but the following lemma shows that this cannot overgeneralize beyond $\monox{\varphi}{a_i}$:
\begin{lemma}
\label{lem:mon-cube}
If $v \models \varphi$ then $\cubemon{v}{a} \implies \monox{\varphi}{a}$.
\end{lemma}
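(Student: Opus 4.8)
The plan is to argue directly from the definition of the least $a$-monotone overapproximation. To show $\cubemon{v}{a} \implies \monox{\varphi}{a}$, I would fix an arbitrary state $x \models \cubemon{v}{a}$ and exhibit a witness $w$ with $w \leq_a x$ and $w \models \varphi$. The obvious candidate is $w = v$, since $v \models \varphi$ is given; everything then reduces to checking that $v \leq_a x$.

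To verify $v \leq_a x$, recall this means that on every variable $p_i$ with $v[p_i] \neq a[p_i]$ we also have $x[p_i] \neq a[p_i]$. So fix such a $p_i$. The literal of $p_i$ satisfied by $v$ — namely $p_i$ if $v[p_i]=\true$ and $\neg p_i$ otherwise — is an $a$-monotone literal that holds in $v$ (it is $a$-monotone precisely because $v[p_i] \neq a[p_i]$), hence by definition it is one of the conjuncts of $\cubemon{v}{a}$. Since $x \models \cubemon{v}{a}$, the state $x$ satisfies this literal too, so $x[p_i] = v[p_i] \neq a[p_i]$, as required. This gives $v \leq_a x$, and combined with $v \models \varphi$ the definition of $\monox{\varphi}{a}$ yields $x \models \monox{\varphi}{a}$.

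An equivalent, slightly slicker route I would mention is to use the parenthetical identity $\cubemon{v}{a} \equiv \monox{\cube{v}}{a}$ (immediate from \Cref{lem:term-monotonization}) together with monotonicity of the operator $\monox{\cdot}{a}$ under implication: if $\psi_1 \implies \psi_2$ then, unwinding the definition, any witness $w \leq_a x$ for $x \models \monox{\psi_1}{a}$ is also a witness for $x \models \monox{\psi_2}{a}$. Since $v \models \varphi$ gives $\cube{v} \implies \varphi$, applying this yields $\cubemon{v}{a} \equiv \monox{\cube{v}}{a} \implies \monox{\varphi}{a}$. There is no substantial obstacle in the argument; the only step deserving a moment's care is the observation that the $a$-monotone literals of $v$ pin down exactly the coordinates on which $v$ disagrees with $a$, which is what makes $v \leq_a x$ go through.
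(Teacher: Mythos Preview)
Your proposal is correct and matches the paper's approach: the paper's one-line proof simply states that the set of valuations satisfying $\cubemon{v}{a}$ is exactly $\{x \mid v \leq_a x\}$, from which the claim is immediate by the definition of $\monox{\varphi}{a}$ with witness $v$. You unpack this same observation in more detail, and your alternative route via $\cubemon{v}{a} \equiv \monox{\cube{v}}{a}$ and monotonicity of $\monox{\cdot}{a}$ is also sound.
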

\toolong{
\begin{proof}
The set of valuations satisfying $\cubemon{v}{a}$ is the set of valuations $x$ such that $v \leq_a x$.
\end{proof}
}
Adding the monotone cube is thus ``safe'', but may converge to $\monox{I}{a_i}$ too slowly. \Cref{lem:bshouty-mon-mindnf} guarantees the existence of a short DNF representation, but the monotone cube might be too large (include too few states) and not be a term in this representation.
To achieve fast convergence we want to learn an actual, syntactic, term of $\monox{I}{a_i}$ whenever we add a term to $H_i$.
The mechanism that produces such terms is \emph{generalization}, in~\cref{ln:bshouty:check-bmc}, by means of minimization. %
The idea is that the more literals on which the state $v$ we will be adding to $H_i$ agrees with $a_i$, the smaller the conjunction in the $\cubemon{v}{a_i}$ is. In fact, this minimization successfully achieves an actual term of $\monox{I}{a_i}$. This is established in the next lemma, akin to~\Cref{thm:monotone-prime-implicant} in the purely monotone case.
\begin{lemma}[\citet{DBLP:journals/iandc/Bshouty95}, Proposition A + Lemma 1(1)]
\label{lem:bshouty-prop-a}
A state $x$ is \emph{$a$-minimal positive} for $\varphi$ if $x \models \varphi$ and for every $i$ such that $x[p_i] \neq a[p_i]$ it holds $x[p_i \mapsto a[p_i]] \not\models \varphi$.
Let $x$ be $a$-minimal positive for a formula $\varphi$ in DNF.
Then there is a term $t$ of $\varphi$ such that $\monox{t}{a} \equiv \cubemon{x}{a}$.
\end{lemma}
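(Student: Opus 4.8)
The plan is to fix a DNF representation $\varphi = t_1 \lor \ldots \lor t_m$ in which each $t_j$ is a consistent term (it does not contain both a variable and its negation, and has no repeated literals; terms that fail this can be dropped or simplified). Since $x \models \varphi$, pick a term $t = t_j$ with $x \models t$. I claim this $t$ witnesses the lemma, and I would prove it by showing that $\monox{t}{a}$ and $\cubemon{x}{a}$ are literally the \emph{same} conjunction. Using \Cref{lem:term-monotonization}, $\monox{t}{a} \equiv \bigwedge \set{\ell \in t \mid a \not\models \ell}$, so both $\monox{t}{a}$ and $\cubemon{x}{a}$ are terms over distinct variables whose literals are all ``against $a$''; hence it suffices to prove that they have the same set of literals.

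The first inclusion is immediate. If $\ell \in t$ with $a \not\models \ell$, say on variable $p_i$, then $x \models \ell$ because $x \models t$, and $x$ disagrees with $a$ on $p_i$ (as $a \not\models \ell$ but $x \models \ell$); this is exactly the membership condition for $\ell$ in $\cubemon{x}{a}$. So every literal of $\monox{t}{a}$ is a literal of $\cubemon{x}{a}$.

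The converse inclusion is the crux, and it is where $a$-minimality is used. Take a literal $\ell$ of $\cubemon{x}{a}$, on variable $p_i$, so $x[p_i] \neq a[p_i]$, $x \models \ell$, $a \not\models \ell$. Consider $x' = x[p_i \mapsto a[p_i]]$. By $a$-minimality of $x$, $x' \not\models \varphi$, and in particular $x' \not\models t$. I would then do a short case analysis on how $p_i$ occurs in $t$: if $p_i$ does not occur in $t$ at all, flipping $p_i$ is irrelevant to $t$, so $x' \models t$ — contradiction; if $p_i$ occurs in $t$ with the polarity of $a$, then $x \models t$ forces $x[p_i] = a[p_i]$ — contradicting $x[p_i]\neq a[p_i]$. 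Hence $p_i$ occurs in $t$ with polarity against $a$, and since a variable has a unique literal satisfied by $x$, that occurrence is exactly $\ell$; thus $\ell \in t$ and $a \not\models \ell$, so $\ell$ is a literal of $\monox{t}{a}$. This establishes the set equality and the lemma.

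I expect the only real obstacle to be the bookkeeping in that case analysis, together with the degenerate case $x = a$, where $\cubemon{x}{a}$ is the empty conjunction ($\equiv \true$): there $a$-minimality is vacuous, but $a \models t$ makes $\monox{t}{a}$ also empty, so the statement still holds. Conceptually, this argument is the $a$-translated analogue of \Cref{thm:monotone-prime-implicant}: $a$-minimality of $x$ plays exactly the role that being a (non-trivial) prime implicant plays in the purely monotone case.
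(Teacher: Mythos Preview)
Your proof is correct and follows essentially the same approach as the paper's: pick a term $t$ of $\varphi$ with $x \models t$, and show the two monotone cubes have the same literal set. The paper obtains your first inclusion by directly citing \Cref{lem:mon-cube} rather than spelling out the literal-by-literal argument, and for the second inclusion it collapses your case analysis into a single ``suppose otherwise, then $x[p_i \mapsto \neg x[p_i]] \models t$'' step (which is sound since $x \models t$ already rules out the ``wrong polarity'' case); but these are presentational differences, not substantive ones.
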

\begin{proof}
Since $x \models \varphi$ which is in DNF, there is a term $t$ of $\varphi$ such that $x \models t$.
From~\Cref{lem:mon-cube}, $\cubemon{x}{a} \implies \monox{t}{a}$.
For the other direction, by~\Cref{lem:term-monotonization}, we need to show that the conjunction $\monox{t}{a}$ includes all the conjuncts in $\cubemon{x}{a}$. To this end, let $p_i$ be such that $x[p_i] \neq a[p_i]$; we need to show that $p_i$ is a literal of $t$ if $x[p_i]=\true$, and $\neg p_i$ is a literal of $t$ if $x[p_i] = \false$.
Suppose otherwise. Then $x[p_i \mapsto \neg x[p_i]]$ also satisfies $t$. But then $x[p_i \mapsto a[p_i]] = x[p_i \mapsto \neg x[p_i]] \models \varphi$, in contradiction to the premise.
\end{proof}

\para{The algorithm: \bshoutyinferencealgplain}
We now collect the ideas from above and describe the algorithm (\Cref{alg:bshouty-known-basis}) for inferring invariants that admit a known monotone basis, based on the backwards fence condition.
Assuming a known basis $\set{a_1,\ldots,a_t}$ for the target (unknown) invariant $I$, the algorithm maintains a sequence $H_1,\ldots,H_t$, where each $H_i$ is an $a_i$-monotone DNF formula. Each $H_i$ is gradually increased until it is $\monox{I}{a_i}$ (unless an invariant is found earlier). When each $H_i$ attains this limit, $I \equiv \bigwedge_{i=1}^{t}{H_i}$ and we are done. In a sense, the algorithm combines multiple instances of the inference procedure appropriate for the monotone case (\Cref{alg:itp-termmin}), each for learning an $a_i$-monotonization of $I$.

Each $H_i$ starts from $\false$. When a state that ought to be added to the current hypothesis $\bigwedge_{i=1}^{t}{H_i}$ is found (\cref{ln:bhsouty:positive-example}), each $H_i$ that does not include it is increased by adding a new term.
In order to learn an actual, syntactic term of $\monox{I}{a_i}$, the algorithm
gradually flips bits in the state that disagree with $a_i$, and \emph{heuristically} checks whether the new state should still be included in the invariant by performing \emph{bounded model checking} (\cref{ln:bshouty:check-bmc}). %
When the fence condition holds, this mimics the procedure from~\Cref{lem:bshouty-prop-a}; %
this is important for the algorithm's efficiency, below. %
Before embarking on efficiency guarantees of this algorithm, we note that the algorithm is always sound (even when $I$ is not $k$-fenced or $\set{a_1,\ldots,a_t}$ is not a basis for $I$), because it checks that $H$ is inductive before returning; if $H$ is not inductive, the algorithm continues to increase $H_i$'s until $H$ includes a state that reaches $\Bad$ and the algorithm reaches failure.

Our main theorem for this algorithm is that when the $k$-fenced condition holds, the algorithm can \emph{efficiently} learn every formula for which $\set{a_1,\ldots,a_t}$ is a basis:
\begin{theorem}
\label{thm:bshouty-known-efficiency}
	Let $(\Init,\tr,\Bad)$ be a transition system, and $k \in \mathbb{N}$.
	If there exists an inductive invariant $I$ that is backwards $k$-fenced, $I \in \dnf{m}{}$, and $\set{a_1,\ldots,a_t}$ is a monotone basis for $I$ (\Cref{def:monotone-basis}), then
	\bshoutyinferencealg($\Init,\tr,\Bad,k$) converges to an inductive invariant in $\bigO(m \cdot t)$ inductiveness checks, $\bigO(m \cdot t \cdot n^2)$ $k$-BMC checks, and $\bigO(m \cdot t \cdot n^2)$ time.
\end{theorem}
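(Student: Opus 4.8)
The plan is to adapt the proof of \Cref{thm:monotone-inference-efficient} from one translation to the $t$ translations of the basis, so the work splits into a ``no overshoot'' lemma (the analogue of \Cref{lem:itp-underapprox}, powered by the fence condition) and a ``term extraction'' lemma (the analogue of the prime-implicant step, powered by Bshouty's monotone theory), after which the counting is mechanical. Concretely, fix a DNF representation $I \equiv t_1 \lor \cdots \lor t_m$. I would first show that throughout the run of \bshoutyinferencealg, (i) every positive example $\sigma'$ obtained at \cref{ln:bhsouty:positive-example} satisfies $\sigma' \models I$; (ii) inside every \textsc{MonGenBmc}$(\sigma', a_i, \ldots)$ call, the running state $v$ always satisfies $v \models I$; (iii) the algorithm never restarts; and (iv) each $H_i \implies \monox{I}{a_i}$, hence $H = \bigwedge_{i=1}^t H_i \implies I$ by \Cref{lem:basis-conj-monox}.

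These are proved together by induction on the main loop. Assuming (iv) at the start of an iteration, $H \implies I$, so if $(\sigma,\sigma') \models H \land \tr \land \neg H'$ then $\sigma \models I$ and, $I$ being inductive, $\sigma' \models I$; in the other branch $\sigma' \models \Init \implies I$ --- this gives (i). Since $\sigma' \models I$ and no state of $I$ reaches $\Bad$, we get $\sigma' \notin \bmcback{\tr}{\Bad}{k}$, so the entry check at \cref{ln:bshouty:check-no-overapprox} passes and the restart at \cref{ln:bshouty:fail} is not taken --- this gives (iii). For (ii): the walk starts at $v = \sigma' \models I$, and whenever a flip $x = v[p_j \mapsto a_i[p_j]]$ is \emph{accepted} at \cref{ln:bshouty:check-bmc}, $x$ is a Hamming neighbor of $v \models I$; were $x \notin I$ we would have $x \in \boundaryneg{I} \subseteq \bmcback{\tr}{\Bad}{k}$ by the fence condition, contradicting the passing BMC check, so $x \models I$. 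Finally the returned term $d = \cubemon{v}{a_i}$ satisfies $d \implies \monox{I}{a_i}$ by \Cref{lem:mon-cube} (since $v \models I$), so $H_i \lor d \implies \monox{I}{a_i}$ and (iv) is preserved.

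Next I would show that each $d$ returned by \textsc{MonGenBmc}$(\sigma', a_i, \ldots)$ is, syntactically, one of the at most $m$ terms $\monox{t_1}{a_i}, \ldots, \monox{t_m}{a_i}$ of $\monox{I}{a_i}$ (a DNF with $m$ terms by \Cref{lem:bshouty-mon-mindnf}), and that it is a \emph{new} disjunct of $H_i$. At the end of the walk $v \models I$ and every admissible flip has been rejected, i.e.\ $v[p_j \mapsto a_i[p_j]] \in \bmcback{\tr}{\Bad}{k}$ for every $j$ with $v[p_j] \neq a_i[p_j]$; since no state of $I$ reaches $\Bad$, this forces $v[p_j \mapsto a_i[p_j]] \not\models I$, so $v$ is $a_i$-minimal positive for $I$, and \Cref{lem:bshouty-prop-a} yields a term $t_j$ of $I$ with $\monox{t_j}{a_i} \equiv \cubemon{v}{a_i} = d$ (syntactic equality, both being terms by \Cref{lem:term-monotonization}). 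Moreover $\sigma' \models d$: the walk only resets coordinates of $\sigma'$ toward $a_i$, so on every coordinate where $v$ disagrees with $a_i$ the value of $\sigma'$ agrees with $v$, whence $v \leq_{a_i} \sigma'$ and $\sigma' \models \cubemon{v}{a_i}$; but $\sigma' \not\models H_i$ by the guard before the update, so $d$ was not already a disjunct of $H_i$. Since $H_i$ is always a disjunction of terms from the fixed $m$-element pool $\{\monox{t_j}{a_i}\}_{j}$, it can acquire at most $m$ distinct terms over the whole run.

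Combining the two lemmas: every main iteration finds $\sigma' \not\models H$, hence $\sigma' \not\models H_i$ for some $i$, so that iteration adds a fresh term to $H_i$; as each $H_i$ absorbs at most $m$ terms, the loop terminates within $mt$ iterations (in the extreme case each $H_i$ grows to $\monox{I}{a_i}$ and $H \equiv I$ by \Cref{lem:basis-conj-monox}), and at termination $H$ is an inductive invariant by the loop guard. For complexity: $\bigO(mt)$ iterations, each with one inductiveness check, give $\bigO(mt)$ inductiveness checks; the number of \textsc{MonGenBmc} calls equals the number of terms added, $\bigO(mt)$, and each call does one entry BMC check plus a walk of at most $n+1$ passes --- each accepted flip permanently adds a coordinate on which $v$ agrees with $a_i$, of which there are at most $n$ --- of at most $n$ BMC checks each, i.e.\ $\bigO(n^2)$ checks per call, totalling $\bigO(mtn^2)$ $k$-BMC checks; the non-oracle bookkeeping is bounded similarly. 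The step I expect to be the real obstacle is item (ii) of the first lemma: isolating exactly why \textsc{MonGenBmc}'s coordinate-flipping walk cannot leave $I$ under the fence condition. It is the counterpart of \Cref{lem:itp-underapprox}, but the generalization move is now ``flip a bit toward $a_i$'' rather than ``drop a literal'', so one must verify that a would-be escapee is always a Hamming neighbor of the current $I$-state and hence lies in $\boundaryneg{I} \subseteq \bmcback{\tr}{\Bad}{k}$; the remainder is just assembling Bshouty's monotone-theory lemmas (\Cref{lem:basis-conj-monox,lem:bshouty-mon-mindnf,lem:mon-cube,lem:bshouty-prop-a,lem:term-monotonization}) in the right order.
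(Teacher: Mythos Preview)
Your proposal is correct and follows essentially the same approach as the paper's proof: the same outer induction maintaining $H_i \implies \monox{I}{a_i}$ (hence $H \implies I$ via \Cref{lem:basis-conj-monox}), the same inner induction showing the walk in \textsc{MonGenBmc} stays inside $I$ by the fence condition, and the same termination argument via \Cref{lem:bshouty-prop-a} and \Cref{lem:bshouty-mon-mindnf}. You are somewhat more explicit than the paper on two points---that $\sigma' \models d$ (because the walk only moves coordinates toward $a_i$, so $\sigma'$ agrees with the final $v$ wherever $v$ disagrees with $a_i$), hence $d$ is genuinely new in $H_i$; and that the total number of \textsc{MonGenBmc} calls is bounded by the total number of term additions across all $H_i$, which is at most $mt$---but these are elaborations of the paper's argument, not a different route.
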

\label{sec:bshouty-direct}
\begin{proof} %
Our main claim is that $H_i \implies \monox{I}{a_i}$ and $H_i$ is $a_i$-monotone (for every $i$).
From this it would follow that $\bigwedge_{i=1}^{t}{H_i} \implies I$, because $I = \bigwedge_{i=1}^{t}{\monox{I}{a_i}}$ from the basis assumption and~\Cref{lem:basis-conj-monox}. From this it would follow that the counterexample is always positive, $\sigma' \models I$. {This implies that $\bmc{\tr}{\sigma'}{k} \cap \Bad = \emptyset$,
so the algorithm does not fail (\cref{ln:bshouty:fail}).}

Initially, the claim holds trivially. Consider an iteration. From the induction hypothesis, and as above, $\sigma' \models I$.
Thus generalization begins with $x = \sigma' \models I$.
{We argue by induction on the steps of generalization that $x \models I$. In each step, we move from a state $x$ to a Hamming neighbor state $x'$ s.t.\ $\bmc{\tr}{x'}{k} \cap \Bad = \emptyset$. By the induction hypothesis and the premise that $I$ is $k$-backwards fenced, also $x' \models I$, which concludes this induction.
The final $x$ in generalization thus has $x \models I$.} %
Therefore, by~\Cref{lem:mon-cube}, $\cubemon{x}{a_i} \implies \monox{I}{a_i}$ for every $i$.
The claim follows.

It remains to argue that after at most $m \cdot t$ iterations the algorithm converges to $\bigwedge_{i=1}^{t}{H_i} \equiv I$ (unless it terminates earlier with an inductive invariant), because every call to generalization takes at most $O(n^2)$ $k$-BMC queries.
Indeed, every iteration adds at least one term to at least one $H_i$. For the $x$ that produces the term, $x \models I$, as above, but for every $p$ where $x[p] \neq a[p]$ {we have $\bmc{\tr}{x[p \mapsto a[p]]}{k} \cap \Bad \neq \emptyset$, and in particular $x[p \mapsto a[p]] \not\models I$}. Using \Cref{lem:bshouty-prop-a}, $\cubemon{x}{a_i}$ is a term of the DNF representation of $\monox{I}{a_i}$ from~\Cref{lem:bshouty-mon-mindnf}.
By~\Cref{lem:bshouty-mon-mindnf} this representation has $m$ terms. Overall we need at most $t \cdot m$ iterations.
\end{proof}

\subsection{Choosing a Monotone Basis}
\label{sec:classes-known-basis}
Some important classes of formulas have a known basis that the algorithm can use.
The class of \emph{$r$-almost-monotone DNF} is the class of DNF formulas with at most $r$ terms which include negative literals.
The set of all translations with at most $r$ variables assigned $\true$ is a basis for this class~\cite{DBLP:journals/iandc/Bshouty95}.
When $r = \bigO(1)$, the size of this basis is polynomial in $n$.
This is a basis because, when converting an almost-monotone DNF formula to CNF form, every clause has at most $r$ negative literals,
which is $a$-monotone for the translation $a$ which assigns %
$\true$ to %
these variables only. %
Another interesting class with a known base of size polynomial in $n$ is the class of (arbitrary) DNF formulas with $\bigO(\log n)$ terms, although the construction is less elementary~\cite{DBLP:journals/iandc/Bshouty95}.

Applying~\Cref{thm:bshouty-known-efficiency} with the known basis for $r$-almost-monotone DNF yields:
\begin{corollary}
\label{cor:almost-monotone-dnf}
	Let $(\Init,\tr,\Bad)$ be a transition system, $k \in \mathbb{N}$, and $r = \bigO(1)$.
	If there exists an inductive invariant $I$ that is backwards $k$-fenced, and $I$ is $r$-almost-monotone DNF with $m$ terms, then
	\bshoutyinferencealg($\Init,\tr,\Bad,k$) with an appropriate basis converges to an inductive invariant in $\textit{poly}(m \cdot n)$ inductiveness checks, $\textit{poly}(m \cdot n)$ $k$-BMC checks, and $\textit{poly}(m \cdot n)$ time.
\end{corollary}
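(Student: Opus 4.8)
The plan is to obtain this corollary as an immediate instantiation of \Cref{thm:bshouty-known-efficiency}, so that the only real work is exhibiting a polynomial-size monotone basis for the class of $r$-almost-monotone DNF formulas. Take $\mathcal{B}_r$ to be the set of all translations $a$ that assign $\true$ to at most $r$ variables. The first step is to confirm that $\mathcal{B}_r$ is a monotone basis (\Cref{def:monotone-basis}) for any $r$-almost-monotone DNF $I = t_1 \lor \cdots \lor t_m$, of which at most $r$ terms contain a negative literal. Distributing $\lor$ over $\land$ gives a CNF $\bigwedge_{f} \bigl(\bigvee_{j=1}^{m} \ell_{f(j),j}\bigr)$, where $f$ picks one literal from each term; in each such clause only the at most $r$ non-monotone terms can contribute a negative literal, so every clause has at most $r$ negative literals. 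For such a clause $c$, whose negative literals range over a variable set $R_c$ with $\card{R_c}\le r$, the translation $a_c$ that sets exactly the variables of $R_c$ to $\true$ makes $c$ an $a_c$-monotone clause (each variable of $c$ occurs with the polarity dictated by $\neg a_c$), and $a_c \in \mathcal{B}_r$. Hence $\mathcal{B}_r$ is a monotone basis for $I$. This is precisely the basis recorded in \Cref{sec:classes-known-basis}, so one may alternatively just cite it.

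Given the basis, the remainder is arithmetic. Its size is $t = \card{\mathcal{B}_r} = \sum_{i=0}^{r}\binom{n}{i} = \bigO(n^r)$, which is polynomial in $n$ since $r = \bigO(1)$. Now apply \Cref{thm:bshouty-known-efficiency} to $I$ with the basis $\mathcal{B}_r$: its hypotheses hold because $I$ is backwards $k$-fenced by assumption, an $r$-almost-monotone DNF with $m$ terms is in particular in $\dnf{m}{}$, and $\mathcal{B}_r$ is a monotone basis for $I$ with $t = \bigO(n^r)$ elements. The theorem then yields that $\bshoutyinferencealg(\Init,\tr,\Bad,k)$, run with $\mathcal{B}_r$, converges to an inductive invariant in $\bigO(m\cdot t)=\bigO(m\,n^r)=\textit{poly}(m\cdot n)$ inductiveness checks, $\bigO(m\cdot t\cdot n^2)=\bigO(m\,n^{r+2})=\textit{poly}(m\cdot n)$ $k$-BMC checks, and $\textit{poly}(m\cdot n)$ time, as claimed.

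I do not anticipate a genuine obstacle here: all the heavy lifting sits inside \Cref{thm:bshouty-known-efficiency}, and the basis construction is classical~\cite{DBLP:journals/iandc/Bshouty95}. The only point that requires a line of care is the claim that an $r$-almost-monotone DNF admits a CNF with at most $r$ negative literals per clause, which the distribution argument above settles directly (and which is in any case already stated in \Cref{sec:classes-known-basis}); note that the size $s$ of that CNF is irrelevant, as the complexity bound in \Cref{thm:bshouty-known-efficiency} depends only on the basis size $t$.
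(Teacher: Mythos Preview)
Your proposal is correct and follows exactly the approach the paper takes: the paper simply states that the corollary follows by ``Applying~\Cref{thm:bshouty-known-efficiency} with the known basis for $r$-almost-monotone DNF,'' where the basis and the reason it works are described in~\Cref{sec:classes-known-basis}. Your write-up spells out the distribution argument and the $\bigO(n^r)$ size bound in slightly more detail than the paper does, but the content is the same.
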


A dual result for $r$-almost (anti)monotone CNF invariants, which are CNF formulas with at most $r$ clauses that include positive literals, is as follows:
\begin{corollary}
\label{cor:almost-monotone-cnf}
	Let $(\Init,\tr,\Bad)$ be a transition system, $k \in \mathbb{N}$, and $r = \bigO(1)$.
	If there exists an inductive invariant $I$ that is forwards $k$-fenced, $I$ is $r$-almost-(anti)monotone CNF with $m$ clauses, then
	$\bshoutyinferencealg^{*}$($\Init,\tr,\Bad,k$) %
with an appropriate basis converges to an inductive invariant in $\textit{poly}(m \cdot n)$ inductiveness checks, $\textit{poly}(m \cdot n)$ $k$-BMC checks, and $\textit{poly}(m \cdot n)$ time.
\end{corollary}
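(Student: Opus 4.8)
The plan is to derive the statement from \Cref{cor:almost-monotone-dnf} by invoking the forwards/backwards duality of \Cref{sec:forward-backward-duality}. Recall that the dual of $(\Init,\tr,\Bad)$ is $(\Bad,\tr^{-1},\Init)$, that $I$ is an inductive invariant of the original system iff $\neg I$ is an inductive invariant of the dual, that the $k$-backwards reachable states of the dual coincide with the $k$-forwards reachable states of the original, and that by definition $\bshoutyinferencealg^{*}(\Init,\tr,\Bad,k) = \neg\,\bshoutyinferencealg(\Bad,\tr^{-1},\Init,k)$. So I would reduce the claim to checking that running $\bshoutyinferencealg$ on the dual system meets the hypotheses of \Cref{cor:almost-monotone-dnf}.

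First I would translate the fence hypothesis. Since $\boundaryneg{S} = \boundarypos{\bar S}$ for every set of states $S$, taking $S = \neg I$ gives $\boundaryneg{\neg I} = \boundarypos{I}$; and the set of states $k$-backwards reachable from $\Init$ in the dual system is exactly $\bmc{\tr}{\Init}{k}$. Hence ``$I$ is forwards $k$-fenced'', i.e.\ $\boundarypos{I} \subseteq \bmc{\tr}{\Init}{k}$, is equivalent to ``$\neg I$ is backwards $k$-fenced for $(\Bad,\tr^{-1},\Init)$''. Next I would translate the syntactic class: writing $I = c_1 \land \ldots \land c_m$ in CNF and pushing the negation through by De~Morgan, $\neg I = \neg c_1 \lor \ldots \lor \neg c_m$ is a DNF with $m$ terms, where $\neg c_j$ is the conjunction of the negated literals of $c_j$. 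An all-negative clause becomes an all-positive term, and a clause containing a positive literal becomes a term containing a negative literal; since at most $r$ of the $c_j$ contain a positive literal, $\neg I$ is $r$-almost-monotone DNF with $m$ terms, with $r = \bigO(1)$. Thus $\neg I$ is an inductive invariant of $(\Bad,\tr^{-1},\Init)$ that is backwards $k$-fenced and in $r$-almost-monotone DNF with $m$ terms.

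Then I would apply \Cref{cor:almost-monotone-dnf} to $(\Bad,\tr^{-1},\Init)$ with the appropriate basis (the translations assigning $\true$ to at most $r$ variables, which depends only on $r$ and $n$): $\bshoutyinferencealg(\Bad,\tr^{-1},\Init,k)$ converges to an inductive invariant of the dual in $\textit{poly}(m \cdot n)$ inductiveness checks, $\textit{poly}(m \cdot n)$ $k$-BMC checks, and $\textit{poly}(m \cdot n)$ time. Finally, by the complexity-translation part of the duality (a SAT call in the dual run is still one SAT call, and a $k$-BMC query is still a $k$-BMC query), $\bshoutyinferencealg^{*}(\Init,\tr,\Bad,k)$ returns the negation of that invariant, which is an inductive invariant of the original system, within the same asymptotic budget. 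The main points to get right are the boundary identity $\boundarypos{I} = \boundaryneg{\neg I}$ and the De~Morgan bookkeeping that carries ``(anti)monotone CNF with $m$ clauses and at most $r$ positive clauses'' to ``monotone DNF with $m$ terms and at most $r$ terms with a negative literal''; I do not anticipate a real obstacle beyond this routine dualization.
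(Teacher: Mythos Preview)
Your proposal is correct and follows exactly the approach the paper intends: the corollary is stated as the dual of \Cref{cor:almost-monotone-dnf}, and the paper's preliminaries on forwards/backwards duality (\Cref{sec:forward-backward-duality}) supply precisely the translation you carry out (invariant $\mapsto$ negation, forwards fence $\mapsto$ backwards fence via $\boundarypos{I}=\boundaryneg{\neg I}$, (anti)monotone CNF $\mapsto$ monotone DNF via De~Morgan, and the complexity transfer $\A^{*}(\Init,\tr,\Bad)=\neg\A(\Bad,\tr^{-1},\Init)$). The paper gives no separate proof beyond invoking this duality, so your write-up is in fact more detailed than the paper's.
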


\section{From Exact Learning to Invariant Inference via the Fence Condition}
\label{sec:exact-invariant-learning}
Exact learning with queries~\cite{DBLP:journals/ml/Angluin87} is one of the fundamental fields of theoretical machine learning.
In this section we show how efficient inference based on the fence condition can be understood as a manifestation of special forms of exact learning algorithms.
In~\Cref{sec:bshouty-analysis-queries} we obtain \Cref{alg:bshouty-known-basis} and an algorithm resembling~\Cref{alg:itp-termmin} by a translation from exact learning algorithms that satisfy certain restrictions. %
In particular, this provides an alternative proof of~\Cref{thm:bshouty-known-efficiency}.
In~\Cref{sec:bshouty-two-sided} we show that when both the backwards \emph{and} the forwards fence condition hold, then \emph{every} algorithm for exact learning from equivalence and membership queries can be transformed to an inference algorithm.
In particular, this proves that formulas that admit both a short CNF and a short DNF representation (neither necessarily monotone) can be inferred efficiently when the two-sided fence condition holds (\Cref{thm:bshouty-inference-cdnf}).
These transformations implement the learning algorithm's queries even though the target invariants are not known to the algorithm or to the SAT solver.
Such transformations are impossible in general~\cite{DBLP:journals/pacmpl/FeldmanISS20}, and here rely on the fence condition.

\subsection{Background: Exact Concept Learning with Queries}
\label{sec:exact-learning-background}
We begin with some background on exact concept learning with queries.
In \emph{exact concept learning}~\cite{DBLP:journals/ml/Angluin87}, the algorithm's task is to identify an unknown formula %
$\varphi$ using queries it poses to a \emph{teacher}. The most studied queries are:
\begin{itemize}
	\item \emph{Membership}: The algorithm \emph{chooses} a state $\sigma$, and the teacher answers whether $\sigma \models \varphi$; and
	\item \emph{Equivalence}: The algorithm \emph{chooses} a candidate $\theta$, and the teacher returns true if $\theta \equiv \varphi$ or a differentiating counterexample otherwise: a $\sigma$ s.t.\ $\sigma \not\models \theta, \sigma \models \varphi$ or $\sigma \models \theta, \sigma \not\models \varphi$.
\end{itemize}
The question studied is how many equivalence and membership queries suffice to correctly identify an unknown $\varphi$ from a certain (syntactical) class.

\subsection{Inference From One-Sided Fence and Exact Learning With Restricted Queries}
\label{sec:bshouty-analysis-queries}
\label{sec:restricted-queries}
\label{sec:inference-from-positive-exact}
The challenge in harnessing exact learning algorithms for invariant inference is the need to also implement the teacher, which is problematic because the algorithm does not know any inductive invariant in advance~\cite{ICELearning}. In this section we overcome this problem using the fence condition, provided that the learning algorithm satisfies some conditions.

Membership queries to an (unknown) target invariant are in general impossible to implement~\cite{ICELearning,DBLP:journals/pacmpl/FeldmanISS20}.
Even if we target the clearly-defined gfp specifically, then the query amounts to asking whether $\sigma$ can reach $\Bad$ in an unbounded number of steps, but this question is not an easier than the safety problem. If the desired $I$ is not the gfp (say, because the gfp is a complex formula), then it is even less clear how to answer the query.

Equivalence queries are also hard to implement in general~\cite{DBLP:journals/pacmpl/FeldmanISS20}: while we can determine inductiveness or find a counterexample, this may be counterexample to induction $(\sigma,\sigma')$, which is a transition, not a single state;
deciding whether to return to the learner  $\sigma$ or $\sigma'$ as a differentiating example depends on whether $\sigma \models I$, which has all the problems of a membership query above.
We will circumvent the problem of equivalence queries by considering algorithms that query only on candidates which are underapproximations of the target $I$:
\begin{lemma}[implementing positive equivalence queries]
\label{lem:infer-equivalence-positive}
	Let $(\Init,\tr,\Bad)$ be a transition system and $I$ an inductive invariant.
	Given $\theta$ such that $\theta \implies I$, it is possible to decide whether $\theta$ is an inductive invariant or provide a counterexample $\sigma \models I, \sigma \not\models \theta$, by
	\begin{itemize}
		\item checking whether there is a counterexample $\sigma' \models \Init, \sigma' \not\models \theta$ and returning $\sigma'$ if one exists; and
		\item checking whether there is a counterexample $(\sigma,\sigma') \models \theta \land \tr \land \neg\theta'$, and returning $\sigma'$ if one exists.
	\end{itemize}
Otherwise, $\theta$ is an inductive invariant.
\end{lemma}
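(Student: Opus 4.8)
The plan is to verify the three claims in order: (1) if neither check produces a counterexample, then $\theta$ is an inductive invariant; (2) the counterexample returned in each case is genuinely a state $\sigma$ with $\sigma \models I$ and $\sigma \not\models \theta$; (3) the two checks are mechanically decidable (e.g.\ via SAT), though this is immediate from the formulation.

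First I would handle the soundness direction. Suppose both checks fail: there is no $\sigma' \models \Init \land \neg\theta$, and no $(\sigma,\sigma') \models \theta \land \tr \land \neg\theta'$. The first means $\Init \implies \theta$. The second means $\theta \land \tr \implies \theta'$, i.e.\ $\theta$ is closed under $\tr$. It remains to check $\theta \implies \neg\Bad$; this is where the hypothesis $\theta \implies I$ does the work: since $I$ is an inductive invariant, $I \implies \neg\Bad$, and composing gives $\theta \implies \neg\Bad$. Hence all three conditions (i)--(iii) in the definition of an inductive invariant hold for $\theta$, so $\theta$ is an inductive invariant.

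Next I would verify the counterexample case. If the first check succeeds, we get $\sigma' \models \Init \land \neg\theta$. Then $\sigma' \not\models\theta$ by construction, and $\sigma' \models I$ since $\Init \implies I$ ($I$ being an inductive invariant). So $\sigma'$ is a valid differentiating example in the direction required (it is in $I$ but not in $\theta$). If instead the second check succeeds, we get $(\sigma,\sigma') \models \theta \land \tr \land \neg\theta'$, and we return $\sigma'$. Here $\sigma' \not\models\theta$ directly from $\neg\theta'$ (reading off the valuation to $\voc'$). For $\sigma' \models I$: since $\sigma \models \theta$ and $\theta \implies I$, we have $\sigma \models I$; since $\sigma,\sigma' \models \tr$ and $I$ is closed under transitions ($I \land \tr \implies I'$), it follows that $\sigma' \models I$. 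So again $\sigma'$ is a valid counterexample with $\sigma' \models I$, $\sigma'\not\models\theta$.

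There is no real obstacle here: the only subtlety is recognizing that the third inductiveness condition $\theta \implies \neg\Bad$ is not checked explicitly but follows for free from $\theta \implies I$ and $I \implies \neg\Bad$ — this is precisely why the lemma only needs the two checks listed. The whole argument is a routine unfolding of the definition of inductive invariant together with the assumption $\theta \implies I$; no use of the fence condition itself is needed at this level (the fence condition is what will later guarantee that the candidates $\theta$ produced by the learning algorithm do satisfy $\theta \implies I$).
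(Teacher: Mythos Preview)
Your proof is correct. The paper actually states this lemma without proof, treating the verification as immediate from the statement itself; your argument is exactly the routine unfolding of the definition of inductive invariant together with $\theta \implies I$ that the paper leaves to the reader.
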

Note that $\theta \not\equiv I$ could be an inductive invariant, which does not amount to an equivalence query \emph{per se}, but then the algorithm has already found an inductive invariant and can stop.

Our main observation here is about implementing \emph{membership} queries: that if the fence condition holds for $I$, then it is possible to efficiently implement \emph{restricted versions} of membership queries:
\begin{lemma}[implementing positive-adjacent membership queries]
\label{lem:infer-membership-positive}
	Let $(\Init,\tr,\Bad)$ be a transition system and $I$ an inductive invariant that is backwards $k$-fenced.
	Given $\sigma$ s.t.\ $\sigma \models I$ or $\sigma \in \boundaryneg{I}$, it is possible to decide whether $\sigma \models I$ using a single $k$-BMC check of whether $\bmc{\tr}{\sigma}{k} \cap \Bad = \emptyset$.
\end{lemma}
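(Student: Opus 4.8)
The plan is to show that a single $k$-BMC check of whether $\bmc{\tr}{\sigma}{k} \cap \Bad = \emptyset$ correctly decides $\sigma \models I$, under the hypothesis that $\sigma \models I$ or $\sigma \in \boundaryneg{I}$. The key observation is that the two possible cases for $\sigma$ are distinguished exactly by backwards $k$-reachability of $\Bad$: a state in $I$ can never reach $\Bad$ (since $I$ is an inductive invariant, $I \implies \neg \Bad$ and $I$ is closed under $\tr$), whereas a state in $\boundaryneg{I}$ does reach $\Bad$ in at most $k$ steps precisely because $I$ is backwards $k$-fenced ($\boundaryneg{I} \subseteq \bmcback{\tr}{\Bad}{k}$).

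Concretely, I would argue as follows. First, suppose the BMC check reports $\bmc{\tr}{\sigma}{k} \cap \Bad = \emptyset$. I claim $\sigma \models I$. Indeed, if instead $\sigma \not\models I$, then by the hypothesis $\sigma \in \boundaryneg{I}$, so by the backwards $k$-fence condition $\sigma \in \bmcback{\tr}{\Bad}{k}$, i.e.\ $\sigma \in \bmc{\tr}{\sigma}{k} \cap \Bad \neq \emptyset$ (more precisely $\sigma$ reaches $\Bad$ in at most $k$ steps, so the BMC query is satisfiable) — contradicting the check's output. Conversely, suppose the BMC check reports $\bmc{\tr}{\sigma}{k} \cap \Bad \neq \emptyset$, i.e.\ $\sigma$ reaches $\Bad$ in at most $k$ steps. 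Then $\sigma \not\models I$: if $\sigma \models I$, then since $I$ is an inductive invariant, every state reachable from $\sigma$ (in any number of steps, in particular at most $k$) satisfies $I$, and hence $\neg\Bad$, so $\sigma$ could not reach a bad state. Therefore the answer "$\sigma \models I$" iff "$\bmc{\tr}{\sigma}{k} \cap \Bad = \emptyset$" is correct, using exactly one $k$-BMC check, which is what the lemma asserts.

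This proof is essentially immediate given the earlier development; there is no real obstacle. The only point requiring minor care is the direction using inductiveness of $I$: one should invoke that $I \land \tr \implies I'$ iterated gives $\bmc{\tr}{\set{\sigma}}{k} \subseteq I$ whenever $\sigma \models I$, and then $I \implies \neg \Bad$. This is the same reasoning already used in the proof of \Cref{lem:itp-converge-underapprox} (where it is noted that $\sigma' \models I$ implies $\sigma' \not\in \bmcback{\tr}{\Bad}{k}$), so it can be cited or restated in one line. The other direction is a direct application of \Cref{def:fence-backwards}. Hence the lemma follows.
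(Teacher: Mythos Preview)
Your proof is correct and takes essentially the same approach as the paper: both argue that $\sigma \models I$ implies the BMC check is empty (by inductiveness of $I$), and $\sigma \in \boundaryneg{I}$ implies the BMC check is nonempty (by the fence condition). The paper organizes the case split on whether $\sigma \models I$, whereas you split on the outcome of the BMC check, but the content is identical.
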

\begin{proof}
If $\sigma \models I$, it cannot reach $\Bad$ in any number of steps, $k$ in particular, and we correctly return \emph{true}. Otherwise, from the premise, $\sigma \in \boundaryneg{I}$ but $\sigma \not\models I$, so, by the fence condition, we must have $\bmc{\tr}{\sigma}{k} \cap \Bad \neq \emptyset$, and we correctly return \emph{false}.
\end{proof}

A learning algorithm that only performs such queries induces an invariant inference algorithm.
\begin{corollary}
\label{lem:simulate-positive-exact}
	Let $C$ be a class of formulas.
	Let $\A$ be an exact concept learning algorithm that can identify every $\varphi \in C$ in at most $s_1$ equivalence queries and $s_2$ membership queries.
	Assume further that when $\A$ performs an equivalence query on $\theta$, always $\theta \implies \varphi$,
	and when $\A$ performs a membership query on $\sigma$, always $\sigma \models \varphi$ or $\sigma \in \boundaryneg{\varphi}$.
	Then there exists an invariant inference algorithm that is sound (returns only correct invariants), and, furthermore, can find an inductive invariant for every transition system that admits an inductive invariant $I \in C$ that is backwards $k$-fenced using at most $s_1+1$ inductiveness checks and $s_2$ $k$-BMC checks.
\end{corollary}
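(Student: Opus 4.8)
The plan is to build the invariant inference algorithm by running $\A$ as a subroutine, intercepting its queries and answering them using the transition system together with the fence condition. Concretely, I would simulate $\A$, and whenever $\A$ issues an equivalence query on a candidate $\theta$, I invoke the procedure of~\Cref{lem:infer-equivalence-positive}; whenever $\A$ issues a membership query on a state $\sigma$, I invoke the procedure of~\Cref{lem:infer-membership-positive} (a single $k$-BMC check). When the equivalence-query procedure reports ``no counterexample,'' $\theta$ is returned as the inferred invariant and the simulation halts. The soundness half is immediate and does not even use the fence condition: the only way the algorithm returns a formula $\theta$ is via~\Cref{lem:infer-equivalence-positive} certifying that $\Init \implies \theta$, $\theta \land \tr \implies \theta'$, and (since $\theta \not\models \Bad$ is maintained—more on this below) $\theta \implies \neg\Bad$; that is exactly the definition of an inductive invariant.

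The main work is the correctness-and-complexity half, under the hypothesis that the system admits an inductive invariant $I \in C$ that is backwards $k$-fenced. The key claim, proved by induction on the steps of the simulation, is that the answers we feed back to $\A$ are \emph{exactly the answers the honest teacher for $\varphi := I$ would give}, and moreover that the preconditions of~\Cref{lem:infer-equivalence-positive,lem:infer-membership-positive} are met at every query. For this I would maintain the invariant that every candidate $\theta$ on which $\A$ poses an equivalence query satisfies $\theta \implies I$: this is guaranteed by the assumption on $\A$ (it only queries on under-approximations of its target $\varphi$), \emph{provided} the target $\varphi$ that $\A$ ``believes'' it is learning coincides with $I$, which follows inductively because $\A$'s state depends only on the answers received and those answers match the honest $I$-teacher's. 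Given $\theta \implies I$, \Cref{lem:infer-equivalence-positive} applies and returns either that $\theta$ is inductive (we stop, having found a correct invariant—note $\theta \not\models \Bad$ since $\theta \implies I \implies \neg\Bad$) or a genuine counterexample $\sigma \models I$, $\sigma \not\models \theta$, which is a valid differentiating example for the honest $I$-teacher. Similarly, membership queries of $\A$ land only on states $\sigma$ with $\sigma \models \varphi$ or $\sigma \in \boundaryneg{\varphi}$; since $\varphi = I$ inductively, \Cref{lem:infer-membership-positive} applies and returns the correct truth value of $\sigma \models I$ using one $k$-BMC check. By the inductive maintenance of ``same answers,'' $\A$ follows exactly the run it would follow against the honest teacher for $I$; since $\A$ identifies every member of $C$ in $\le s_1$ equivalence and $\le s_2$ membership queries, after at most $s_1$ equivalence queries it must pose a candidate $\theta$ equivalent to $I$, at which point—being an inductive invariant—our equivalence-query procedure reports no counterexample and we return it. Counting: each equivalence query costs one inductiveness check (the combined initiation/consecution SAT calls of~\Cref{lem:infer-equivalence-positive}, which we count as a single inductiveness check as in~\Cref{sec:prelim}), the final ``successful'' equivalence query costs one more, and each membership query costs one $k$-BMC check—giving the stated $s_1 + 1$ inductiveness checks and $s_2$ $k$-BMC checks.

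The subtle point—and the one I expect to require the most care—is the circularity in the phrase ``$\A$'s believed target equals $I$.'' The resolution is the standard one for such reductions: $\A$ is a fixed algorithm whose behavior is a deterministic function of the sequence of oracle answers it has received; I would make the induction hypothesis precisely ``after $j$ queries, the transcript of answers produced by our simulation equals the transcript the honest teacher for $I$ would have produced on the same prefix of queries,'' and then show each of~\Cref{lem:infer-equivalence-positive,lem:infer-membership-positive} extends the transcript correctly using only the fact (itself part of the hypothesis, via $\A$'s restricted-query guarantee applied to target $I$) that the current query respects the required precondition. One should also handle the edge case where $\A$ happens to pose an equivalence query on some $\theta$ that is inductive but not equivalent to $I$: then we simply stop early and return $\theta$, which is still a correct invariant—this is exactly the remark following~\Cref{lem:infer-equivalence-positive}—so soundness and the complexity bound are unaffected.
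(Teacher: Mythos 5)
Your overall simulation scheme matches the paper's: run $\A$, answer its equivalence queries via \Cref{lem:infer-equivalence-positive}, answer its membership queries via \Cref{lem:infer-membership-positive}, and return the first candidate that passes. The complexity accounting (one inductiveness check per equivalence query, one $k$-BMC check per membership query) is also on target.

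However, your soundness argument has a real gap, and it is precisely the gap the paper's ``$+1$'' inductiveness check is there to close. You claim that soundness is ``immediate and does not even use the fence condition,'' but then the justification that the returned $\theta$ satisfies $\theta \implies \neg\Bad$ is ``since $\theta \implies I \implies \neg\Bad$.'' Look again at \Cref{lem:infer-equivalence-positive}: its procedure makes only the initiation check and the consecution check; it does \emph{not} check safety. It is entitled to conclude that $\theta$ is an inductive invariant only \emph{under the precondition} $\theta \implies I$, because then safety is inherited from $I$. But that precondition is exactly what you establish by induction on the transcript, and that induction relies on \Cref{lem:infer-membership-positive} giving the honest teacher's answers, which in turn requires $I$ to be backwards $k$-fenced. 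If the fence condition fails (which the algorithm cannot detect a priori), a wrong membership answer can derail $\A$ so that it poses an equivalence query on some $\theta$ with $\theta \not\implies I$; the two checks of \Cref{lem:infer-equivalence-positive} can then both succeed on a $\theta$ that contains bad states, and your algorithm would return a non-invariant. So your soundness argument is in fact conditional on the fence condition, contradicting your own claim.

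The paper resolves this by performing one additional, full inductiveness check (including the safety condition $\theta \implies \neg\Bad$) on the candidate before returning it, and reporting failure if that check does not pass; this final check is independent of any assumption on the system, and it is what the ``$+1$'' in the bound actually accounts for. Your accounting instead attributes the extra check to ``the final successful equivalence query costs one more,'' which is a different (and not clearly justified) bookkeeping; under the usual convention that $\A$ identifies the target within at most $s_1$ equivalence queries \emph{including} the confirming one, the simulation itself uses at most $s_1$ inductiveness checks, and the $+1$ is the guard. To repair your proof, keep everything you wrote about the efficient case (fence condition holds, transcript matches the honest teacher for $I$, $\le s_1$ equivalence and $\le s_2$ membership queries), and add a final unconditional inductiveness check on the returned candidate; soundness is then genuinely independent of the fence condition.
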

\begin{changebar}
\begin{proof}
We simulate $\A$ using the equivalence queries from~\Cref{lem:infer-equivalence-positive} and the membership queries from~\Cref{lem:infer-membership-positive}. If the fence condition holds, we answer all queries correctly, perhaps except for an equivalence query on $\theta$ returning $\true$ although $\theta \not\equiv I$, but then we have already found an inductive invariant $\theta$ and can stop. 
An additional inductiveness check is used before an invariant is returned to ensure that the result is a correct inductive invariant even when the fence condition does not hold. If the latter inductiveness check fails, the algorithm returns ``failure''.
\end{proof}
\end{changebar}
Note that the resulting algorithm is sound even when the fence condition does not hold, although successful and efficient convergence is not guaranteed in this case.

\para{\Cref{alg:itp-termmin} and exact learning}
The interpolation algorithm of~\Cref{alg:itp-termmin} and its efficiency result (\Cref{thm:monotone-inference-efficient}) can almost exactly be obtained by the transformation of~\Cref{lem:simulate-positive-exact} from the exact learning algorithm for monotone DNF~\cite{DBLP:journals/ml/Angluin87}. (The code of the exact learning algorithm appears in~\refappendix{sec:exact-learning-algs}.)
The primary difference between~\Cref{alg:itp-termmin} and the algorithm resulting from the translation is that in the resulting algorithm, generalization starts with the positive literals, filtering the negative literals in advance (as opposed to~\cref{ln:itp-termmin:gen-start} in~\Cref{alg:itp-termmin}); this is reasonable for searching for monotone invariants, but may not be complete the same way the standard algorithm is (see~\Cref{cor:itp-termmin-complete}), and also does not extend to unate DNF (see~\Cref{rem:unate-itp}). Another difference is that the translated algorithm performs BMC on states with more and more $\false$ entries instead of dropping literals. The similarities outweigh the differences nonetheless.

\para{\Cref{alg:bshouty-known-basis} and exact learning}
The translation in~\Cref{lem:simulate-positive-exact} provides an alternative proof of~\Cref{thm:bshouty-known-efficiency}.
\begin{proof}[Proof of~\Cref{thm:bshouty-known-efficiency}]
\Cref{alg:bshouty-known-basis} is obtained by the transformation in~\Cref{lem:simulate-positive-exact} applied on the $\Lambda$-algorithm for exact concept learning using a known monotone basis by~\citet[][\S5]{DBLP:journals/iandc/Bshouty95} (the algorithm's code appears in~\refappendix{sec:exact-learning-algs}). The bounds on the number of inductiveness and BMC checks in our theorem matches the bounds on equivalence and membership queries of the original algorithm. It remains to argue that the $\Lambda$-algorithm satisfies the conditions of~\Cref{lem:simulate-positive-exact}. Indeed, the hypothesis is always below the true formula and counterexamples are always positive~\cite[][\S5.1.1, inductive property 1]{DBLP:journals/iandc/Bshouty95}, and membership queries are always performed after flipping one bit in a positive example.
\end{proof}

\subsection{Inference From Two-Sided Fence and Exact Learning}
\label{sec:bshouty-two-sided}
In this section we simulate arbitrary exact learning algorithms (going beyond the requirements in~\Cref{lem:simulate-positive-exact}) relying on a \emph{two-sided} fence condition.
An important example of such an exact learning algorithm %
is the CDNF algorithm by~\citet{DBLP:journals/iandc/Bshouty95}. The conditions of the transformation in~\Cref{sec:inference-from-positive-exact} do not hold because this algorithm performs equivalence queries that can return either positive or negative examples.
We first exemplify the two-sided fence condition.
\begin{example}
\label{ex:two-sided-fence-overview-example}
In the example of~\Cref{fig:parity}, the invariant in~\Cref{eq:overview-parity-example-inv} is backwards $2$-fenced (see~\Cref{sec:overview-example-fence-holds}). It is also $1$ \emph{forwards}-fenced: every state in $\boundarypos{I} = I$ is reachable in $1$ step by \code{havoc\_others}. %
\end{example}

We now show how to implement queries to the invariant using the two-sided fence condition. %
\begin{lemma}[implementing membership queries]
Let $(\Init,\tr,\Bad)$ be a transition system and $I$ an (unknown) inductive invariant that is backwards $k_1$-fenced and forwards $k_2$-fenced.
Then membership queries to $I$ can be implemented in at most $n$ queries of $k_1$-BMC and $k_2$-BMC.\footnote{
	The proof of this also implies that an invariant that is both forwards $k_1$-fenced and backwards $k_2$-fenced is unique, seeing that the implementation of the membership query for both is the same.
}%
\end{lemma}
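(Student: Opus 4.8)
The plan is to give an explicit decision procedure that, on a state $\sigma$, outputs whether $\sigma\models I$ using at most $n$ backwards $k_1$-BMC checks and at most $n$ forwards $k_2$-BMC checks, and which never mentions $I$; the uniqueness remark in the footnote then follows immediately. Write $R=\bmc{\tr}{\Init}{k_2}$ and $B=\bmcback{\tr}{\Bad}{k_1}$. I would first record four facts: (i) $R\subseteq I$, since every state in $R$ is reachable and reachable states satisfy the inductive invariant; (ii) $B\subseteq\neg I$, since a state from which $\Bad$ is reachable cannot satisfy $I$ (as $I$ is inductive and $I\implies\neg\Bad$); (iii) $\boundarypos{I}\subseteq R$, which is exactly the forwards $k_2$-fence condition; (iv) $\boundaryneg{I}\subseteq B$, which is exactly the backwards $k_1$-fence condition. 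Note $R$ and $B$ are disjoint. From (iii)--(iv) comes the key \emph{locality} property: if $\sigma\notin R\cup B$ then $\sigma$ lies in neither boundary, so every Hamming neighbor $\sigma'$ of $\sigma$ satisfies $\sigma'\models I$ iff $\sigma\models I$ --- when $\sigma\models I$, $\sigma\notin\boundarypos{I}$ forbids a neighbor in $\neg I$, and when $\sigma\not\models I$, $\sigma\notin\boundaryneg{I}$ forbids a neighbor in $I$.

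The procedure walks from $\sigma$ toward a fixed positive anchor. Fix any $\iota\models\Init$ (one exists since $\Init\not\equiv\false$); then $\iota\in R$, so $\iota\models I$ by (i). Let $\sigma=\rho_0,\rho_1,\ldots,\rho_m=\iota$ be the Hamming path obtained by flipping, one at a time, the coordinates on which $\sigma$ and $\iota$ disagree, so $m\le n$. For $j=0,1,\ldots,m-1$ in order: run a $k_1$-BMC check of whether $\rho_j\in B$ and, if so, halt with output ``$\sigma\not\models I$''; otherwise run a $k_2$-BMC check of whether $\rho_j\in R$ and, if so, halt with output ``$\sigma\models I$''. If the loop ends without halting, output ``$\sigma\models I$''. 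This performs at most $m\le n$ checks of each kind. (The two halting tests are the sound directions of \Cref{lem:infer-membership-positive} and of its forwards dual.)

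For correctness I would show by induction on $j$ that, while the procedure has not halted, all of $\rho_0,\ldots,\rho_j$ lie on the same side of $I$ as $\sigma$: the base case $\rho_0=\sigma$ is immediate, and in the step, not halting at $\rho_j$ means $\rho_j\notin R\cup B$, so locality gives that $\rho_{j+1}$ is on $\rho_j$'s side. Granting this, each output is justified: halting at $\rho_j\in B$ gives $\rho_j\not\models I$ by (ii) with $\rho_j$ on $\sigma$'s side, hence $\sigma\not\models I$; halting at $\rho_j\in R$ gives $\rho_j\models I$ by (i), hence $\sigma\models I$; and if the loop completes then $\rho_0,\ldots,\rho_{m-1}\notin R\cup B$ are on $\sigma$'s side and, by locality at $\rho_{m-1}$, so is $\rho_m=\iota$, whence $\sigma\models I$ since $\iota\models I$. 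Every soundness step uses only facts (i)--(iv), never an assumption on where $\sigma$ sits relative to the boundary, so the procedure is correct for arbitrary $\sigma$.

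I do not expect a real obstacle. The two points that need care are formulating and invoking the locality property, and threading the ``same side as $\sigma$'' invariant through all three halting cases --- especially the fall-through case, which is precisely where the choice of a known-positive anchor $\iota$ is used. Finally, because the procedure depends only on $(\Init,\tr,\Bad)$, $k_1$, $k_2$, and the fixed anchor and coordinate order --- and not on $I$ --- two inductive invariants that are both backwards $k_1$-fenced and forwards $k_2$-fenced induce the same membership answers, hence are logically equivalent, which is the footnote's claim.
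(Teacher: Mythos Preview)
Your proof is correct and follows essentially the same approach as the paper: walk along a Hamming path from $\sigma$ to a fixed initial state, and at each intermediate state test forward $k_2$-reachability from $\Init$ and backward $k_1$-reachability to $\Bad$, returning as soon as one of them hits. Your presentation is slightly cleaner in that you isolate the ``locality'' property explicitly, keep the $k_1/k_2$ indices consistent with the lemma statement, and avoid checking at the anchor $\iota$ (handling the fall-through case separately), which makes the ``at most $n$'' count exact; but the underlying idea is identical to the paper's.
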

\begin{proof}
\label{lem:infer-membership-all}
Let $\sigma$ be a state such that we want to check whether $\sigma \in I$.
Choose some known state $\sigma_0 \models \Init$, and gradually walk from $\sigma$ to $\sigma_0$, that is, in each step change one variable in $\sigma$ to match $\sigma_0$, i.e.\ $\sigma \gets \sigma[p \mapsto \sigma_0[p]]$.
In each step, check:
\begin{itemize}
	\item If $\bmc{\tr}{\Init}{k_1} \cap \set{\sigma} \neq \emptyset$ (namely $\sigma \in \bmc{\tr}{\Init}{k_1}$), return $\true$. (This is a $k_1$-BMC check.)
	\item If $\bmc{\tr}{\sigma}{k_2} \cap \Bad \neq \emptyset$ (namely $\sigma \in \bmcback{\tr}{\Bad}{k_2}$), return $\false$. (This is a $k_2$-BMC check.)
	\item Otherwise, step and recheck.
\end{itemize}
\iflong
At least one of the queries is true at some point, because $\sigma_0 \in \bmc{\tr}{\Init}{k_1}$.

Suppose $\sigma \in I$. Then in this process, as long as $\sigma$ stays in $I$, we cannot return $\false$  because states in $I$ do not reach $\Bad$ (in $k_2$ steps or more). Then either $\sigma$ always stays in $I$, in which case we return $\true$ when $\sigma$ becomes $\sigma_0$, or there is a first crossing point from $\sigma_1 \in I$ to $\sigma_2 \not\in I$, where, from the premise that $I$ is forwards $k_1$-fenced, $\sigma \in \bmc{\tr}{\Init}{k_1}$ and we return $\true$, as expected.

Suppose $\sigma \not \in I$. Then in this process, as long as $\sigma$ stays \emph{not} in $I$, we cannot return $\true$, because states in $\neg I$ are not reachable from $\Init$ (in $k_1$ steps or more). Because we end the process with $\sigma_0 \in I$ there must be a first crossing point from $\neg I$ to $I$, where, from the premise that $I$ is backwards $k_2$-fenced, $\bmc{\tr}{\sigma}{k_2} \cap \Bad \neq \emptyset$, and we return $\false$, as expected.
\else
In this process, when $\sigma \models I$, we cannot return $\false$, and when $\sigma \models \neg I$, we cannot return $\true$, because $I$ excludes states that can reach $\Bad$ (in $k_2$ steps or more) and $\neg I$ does not include states that are reachable from $\Init$ (in $k_1$ steps or more). If we first cross from $I$ to $\neg I$, we return $\true$ from $I$ being forwards $k_1$-fenced. If we first cross from $\neg I$ to $I$, similarly we return $\false$ thanks to $I$ being backwards $k_2$-fenced. Otherwise we reach $\sigma_0 \in I$ without crossing and correctly return $\true$.
\fi
\end{proof}

An equivalence query can be implemented by \begin{changebar}an\end{changebar} inductiveness check and a membership query, as noted by~\citet{DBLP:journals/pacmpl/FeldmanISS20}:
\begin{lemma}[implementing equivalence queries]
\label{lem:infer-equivalence-all}
Let $(\Init,\tr,\Bad)$ be a transition system, and $I$ an (unknown) inductive invariant that is forwards $k_1$-fenced and backwards $k_2$-fenced.
Then given $\theta$ it is possible to answer whether $\theta$ is an inductive invariant, or provide a counterexample $\sigma$ such that $\sigma \models \theta, \sigma\not\models I$ or $\sigma \not\models \theta, \sigma\models I$, using an inductiveness check, at most $n$ checks of $k_1$-BMC and $n$ of $k_2$-BMC.
\end{lemma}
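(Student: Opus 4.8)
The idea is to run a single inductiveness check on $\theta$ and case-split on its outcome, exploiting that $I$ is an inductive invariant to classify the returned counterexample, and — in the one case where this classification is not immediate — to fall back on the membership-query procedure of \Cref{lem:infer-membership-all}.

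First I would run an inductiveness check on $\theta$. If it reports that $\theta$ is an inductive invariant, we answer exactly that and stop (as in \Cref{lem:infer-equivalence-positive}, $\theta$ need not be equivalent to $I$, but having found \emph{some} inductive invariant suffices). Otherwise the check returns a counterexample witnessing the failure of one of the three defining implications, and I would handle the three cases as follows. (1) If $\Init \implies \theta$ fails, we get $\sigma \models \Init$ with $\sigma \not\models \theta$; since $\Init \implies I$ we have $\sigma \models I$, so $\sigma$ is a valid differentiating example ($\sigma \not\models \theta$, $\sigma \models I$). (2) If $\theta \implies \neg \Bad$ fails, we get $\sigma$ with $\sigma \models \theta$ and $\sigma \models \Bad$; since $I \implies \neg \Bad$ we have $\sigma \not\models I$, so $\sigma$ works ($\sigma \models \theta$, $\sigma \not\models I$). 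Neither of these two cases needs any BMC call.

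The remaining — and main — case is a counterexample to induction: a pair $(\sigma,\sigma')$ with $\sigma \models \theta$, $(\sigma,\sigma') \models \tr$, and $\sigma' \not\models \theta$. Here it is a priori unclear which of $\sigma$, $\sigma'$ to return to the learner. I would resolve this with one membership query to the unknown $I$ on the state $\sigma$, implemented via \Cref{lem:infer-membership-all} at cost at most $n$ $k_1$-BMC checks and $n$ $k_2$-BMC checks. If the query answers $\sigma \not\models I$, then $\sigma$ (which satisfies $\theta$) is the desired differentiating example. If it answers $\sigma \models I$, then since $I$ is inductive and $(\sigma,\sigma') \models \tr$ we get $\sigma' \models I$, while $\sigma' \not\models \theta$; hence $\sigma'$ is the desired differentiating example. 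In every branch we have used one inductiveness check and at most $n$ $k_1$-BMC and $n$ $k_2$-BMC checks, matching the claimed bound.

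\textbf{Main obstacle.} The only genuinely nontrivial step is turning the transition-level counterexample to induction into a single-state counterexample, and this is precisely where the \emph{two-sided} fence condition is needed: it is consumed inside the membership-query subroutine of \Cref{lem:infer-membership-all} (the forwards $k_1$-fence and the backwards $k_2$-fence together guarantee that walking from $\sigma$ towards a known initial state crosses the frontier of $I$ at a point detectable by one of the two BMC checks). The other two cases are immediate from $I$ being an inductive invariant.
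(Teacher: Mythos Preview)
Your proposal is correct and follows essentially the same approach as the paper: perform an inductiveness check, and in the counterexample-to-induction case use the membership query of \Cref{lem:infer-membership-all} on $\sigma$ to decide whether to return $\sigma$ or $\sigma'$. Your write-up is in fact more complete than the paper's, which only spells out the CTI case explicitly, whereas you also handle the initiation and safety failures (both of which indeed require no BMC calls, as you note).
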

\toolong{
\begin{proof}
Perform an inductiveness check of $\theta$. If there is a counterexample $(\sigma,\sigma')$ perform a membership query on $\sigma$ using~\Cref{lem:infer-membership-all}: if the result is true, return $\sigma'$ (a positive counterexample); otherwise return $\sigma$ (a negative counterexample).
\end{proof}
}

We can use these procedures to implement every exact learning algorithm from (arbitrary) equivalence and membership queries.
\begin{corollary}
\label{lem:simulate-all-exact}
	Let $C$ be a class of formulas.
	Let $\A$ be an exact concept learning algorithm that can identify every $\varphi \in C$ in at most $s_1$ equivalence queries and $s_2$ membership queries.
	Then there exists a sound invariant inference algorithm that can find an inductive invariant for every transition system that admits an inductive invariant $I \in C$ that is forwards $k_1$-fenced and backwards $k_2$-fenced using at most $s_1+1$ inductiveness checks, $n (s_1 + s_2)$ of $k_1$-BMC checks, and $n (s_1 + s_2)$ of $k_2$-BMC checks.
\end{corollary}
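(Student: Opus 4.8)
The plan is to run the exact learning algorithm $\A$ on the unknown target $I$, answering each of its queries by the procedures established just above, and to add a single safety net for soundness, exactly in the spirit of the proof of \Cref{lem:simulate-positive-exact}. Concretely: whenever $\A$ issues a membership query on a state $\sigma$, answer it using the walk-from-$\sigma$-to-an-initial-state procedure of \Cref{lem:infer-membership-all}, at the cost of at most $n$ $k_1$-BMC and $n$ $k_2$-BMC checks. Whenever $\A$ issues an equivalence query on a candidate $\theta$, run the procedure of \Cref{lem:infer-equivalence-all}: perform one inductiveness check of $\theta$; if $\theta$ turns out inductive, halt and output $\theta$; otherwise the inductiveness check returns a counterexample (to initiation, to consecution, or to safety), and a single embedded membership query (again via \Cref{lem:infer-membership-all}) classifies the relevant pre-state, which tells us whether to hand $\A$ a positive or a negative single-state differentiating example.

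Next I would argue correctness \emph{when} the two fence conditions hold. In that case \Cref{lem:infer-membership-all,lem:infer-equivalence-all} answer every query consistently with the actual membership relation of $I$, so from $\A$'s point of view it is interacting with a faithful teacher for $I \in C$; hence within $s_1$ equivalence and $s_2$ membership queries $\A$ identifies $I$. The only way the simulation deviates from a genuine teacher is that an equivalence query might be answered ``yes'' for some $\theta$ that is an inductive invariant but not logically equivalent to $I$; but this is harmless, since at that moment we have already found a correct inductive invariant for the system and stop. I would also spell out, for the three flavours of inductiveness-check counterexample, that the embedded membership query assigns the state the correct sign (e.g.\ a state in $\Init\setminus\theta$ is classified positive since it lies in $\bmc{\tr}{\Init}{k_1}$; a state in $\theta\cap\Bad$ is classified negative since it lies in $\bmcback{\tr}{\Bad}{k_2}$; and for a counterexample to induction $(\sigma,\sigma')$ we query $\sigma$ and return $\sigma'$ if positive, $\sigma$ if negative).

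Then soundness in general: before returning any formula as the final answer, perform one additional inductiveness check, and report ``failure'' if it fails. This makes the algorithm sound regardless of whether the fence conditions hold, at the cost of one extra inductiveness check. Finally I would total the cost: each of the $\le s_1$ equivalence queries costs one inductiveness check plus (when it produces a counterexample) at most $n$ $k_1$-BMC and $n$ $k_2$-BMC checks from its embedded membership query; each of the $\le s_2$ membership queries costs at most $n$ $k_1$-BMC and $n$ $k_2$-BMC checks; and the closing soundness check adds one more inductiveness check. Summing yields $s_1+1$ inductiveness checks, $n(s_1+s_2)$ $k_1$-BMC checks, and $n(s_1+s_2)$ $k_2$-BMC checks.

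There is no deep obstacle here, since the substance was already carried by \Cref{lem:infer-membership-all} and \Cref{lem:infer-equivalence-all}; the only point that needs a careful sentence is justifying that ``mis-answering'' an equivalence query with ``yes'' on a $\theta\not\equiv I$ never harms the simulation, which is resolved by halting immediately, together with the bookkeeping that every counterexample returned to $\A$ is indeed a single state of the correct polarity.
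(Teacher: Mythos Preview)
Your proposal is correct and follows essentially the same approach as the paper: simulate $\A$ by answering membership queries via \Cref{lem:infer-membership-all} and equivalence queries via \Cref{lem:infer-equivalence-all}, observe that the only deviation from a faithful teacher is when an equivalence query reports ``yes'' on an inductive $\theta\not\equiv I$ (in which case we may simply halt with $\theta$), and add a final inductiveness check for soundness. Your cost bookkeeping and the explicit case analysis of the three flavours of inductiveness counterexample go beyond what the paper spells out, but are consistent with it.
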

\toolong{
\begin{proof}
We simulate $\A$ using the equivalence queries from~\Cref{lem:infer-equivalence-all} and the membership queries from~\Cref{lem:infer-membership-all}. We answer all queries correctly, perhaps except for an equivalence query on $\theta$ returning $\true$ although $\theta \not\equiv I$, but then we have already found an inductive invariant $\theta$ and can stop. For soundness, we always verify the result before returning using an additional inductiveness check.
\end{proof}
}

Next, we demonstrate an application of \Cref{lem:simulate-all-exact} to the inference of a larger class of invariants.

\subsubsection{Inference Beyond Almost-Monotone Invariants}
\label{sec:cdnf-inference}
Earlier, we have shown that almost-monotone \emph{DNF} invariants are efficiently inferrable when the \emph{backwards} fence condition holds, and similarly for almost-monotone \emph{CNF} when the \emph{forwards} fence condition holds (\Cref{cor:almost-monotone-dnf,cor:almost-monotone-cnf}).
We now utilize \Cref{lem:simulate-all-exact} to the CDNF algorithm by~\citet{DBLP:journals/iandc/Bshouty95} %
to show that the class of invariants that can be succinctly expressed \emph{both} in DNF and in CNF (not necessarily in an almost-monotone way) can be efficiently inferred when the fence condition holds in \emph{both} directions:
\begin{theorem}
\label{thm:bshouty-inference-cdnf}
	There is an algorithm $\A$ that for every input transition system $(\Init,\tr,\Bad)$ and $k \in \mathbb{N}$,
	if the system admits an inductive invariant $I$ such that $I \in \dnf{m_1}{}$, $I \in \cnf{m_2}{}$,
	and $I$ is both backwards- and forwards- $k$-fenced, then
	$\A$($\Init,\tr,\Bad,k$) converges to an inductive invariant in $\bigO(m_1 \cdot m_2)$ inductiveness checks, $\bigO(m_1 \cdot m_2 \cdot n^3)$ $k$-BMC checks, and $\bigO(m_1 \cdot m_2 \cdot n^3)$ time.
\end{theorem}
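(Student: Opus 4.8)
The plan is to derive $\A$ by instantiating the generic reduction of~\Cref{lem:simulate-all-exact} with the CDNF algorithm of~\citet{DBLP:journals/iandc/Bshouty95}. Recall that CDNF is an exact learning algorithm that identifies an arbitrary Boolean function $\varphi$ from equivalence and membership queries: it maintains a conjunction $\bigwedge_i H_i$ of $a_i$-monotone DNF hypotheses, where each translation $a_i$ is itself a negative counterexample it has received, so the number of hypotheses is $\bigO(\cnfsize{\varphi})$; each $H_i$ converges to $\monox{\varphi}{a_i}$, which by~\Cref{lem:bshouty-mon-mindnf} has at most $\dnfsize{\varphi}$ terms. Using the hill-climbing walk subroutine of~\Cref{alg:bshouty-known-basis} to extract each term (at most $n$ flips, each inspecting at most $n$ candidates), this amounts to $s_1 = \bigO(\dnfsize{\varphi}\cdot\cnfsize{\varphi})$ equivalence queries and $s_2 = \bigO(\dnfsize{\varphi}\cdot\cnfsize{\varphi}\cdot n^2)$ membership queries. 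Crucially, unlike the $\Lambda$-algorithm behind~\Cref{thm:bshouty-known-efficiency}, CDNF's equivalence queries may return either a positive or a negative counterexample, so the one-sided transformation of~\Cref{lem:simulate-positive-exact} does not apply and we must use the two-sided~\Cref{lem:simulate-all-exact}.

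I would then set $\A$ to be the inference algorithm produced by~\Cref{lem:simulate-all-exact} from the CDNF algorithm, with both fence bounds taken to be $k$. Suppose the input system admits an inductive invariant $I$ that is both backwards- and forwards-$k$-fenced with $I \in \dnf{m_1}{}$ and $I \in \cnf{m_2}{}$. Then, by~\Cref{lem:infer-equivalence-all,lem:infer-membership-all}, every equivalence query on $\theta$ can be answered by an inductiveness check plus one membership query, and every membership query on $\sigma$ by walking bit-by-bit from $\sigma$ to a fixed state of $\Init$ using $n$ $k$-BMC checks of each direction; all answers are consistent with $I$. Hence the simulated run of CDNF behaves exactly as a run with the \emph{ideal} teacher for $\varphi = I$ and converges to $I$ (the sole exception being an equivalence query that returns $\true$ on some $\theta \not\equiv I$, but then $\theta$ is already an inductive invariant and the algorithm stops). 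Substituting $s_1 = \bigO(m_1 m_2)$ and $s_2 = \bigO(m_1 m_2 n^2)$ into~\Cref{lem:simulate-all-exact} yields $\bigO(m_1 m_2)$ inductiveness checks, $\bigO(n(s_1 + s_2)) = \bigO(m_1 m_2 n^3)$ $k$-BMC checks, and $\bigO(m_1 m_2 n^3)$ further time, as claimed. Soundness in general (when no such $I$ exists, or $I$ is not both backwards- and forwards-$k$-fenced) is handled, as in~\Cref{lem:simulate-all-exact}, by the final inductiveness check that verifies the returned formula, reporting failure if that check does not pass. Since every function computed by a decision tree of size $s$ has a DNF and a CNF each of size at most $s$, the theorem immediately covers invariants expressible by small decision trees.

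The step I expect to demand the most care is pinning down the query complexity of the CDNF algorithm and tracking how it produces precisely the $n^3$ factor: the $\bigO(n^2)$ membership queries per learned term (from the monotone walk) combine with the extra factor of $n$ that~\Cref{lem:infer-membership-all} incurs when implementing each membership query by a bit-by-bit walk. Everything else is a direct application of~\Cref{lem:simulate-all-exact}, whose correctness already carries the conceptually delicate point---that the SAT solver, playing the teacher, can answer all of CDNF's queries about $I$ without knowing $I$, relying only on the two-sided fence condition.
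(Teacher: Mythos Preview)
Your proposal is correct and follows essentially the same approach as the paper: the theorem is obtained by applying \Cref{lem:simulate-all-exact} to the CDNF algorithm of \citet{DBLP:journals/iandc/Bshouty95}, and your accounting of the query complexity (yielding $s_1=\bigO(m_1 m_2)$ and $s_2=\bigO(m_1 m_2 n^2)$, hence $\bigO(m_1 m_2 n^3)$ $k$-BMC checks after the extra factor of $n$ from \Cref{lem:infer-membership-all}) matches the stated bounds.
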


As noted by~\citet{DBLP:journals/iandc/Bshouty95}, the class of formulas with short DNF and CNF includes the formulas that can be expressed by a small \emph{decision tree}%
\iflong
: a binary tree in which every internal node is labeled by a variable and a leaf by $\true$/$\false$, and $\sigma$ satisfies the formula if the path defined by starting from the root, turning left when the $\sigma$ assigns $\false$ to the variable labeling the node and right otherwise, reaches a leaf $\true$. The size of a decision tree is the number of leaves in the tree.
\begin{corollary}
\label{thm:bshouty-inference-decision-tree}
	There is an algorithm $\A$ that for every input transition system $(\Init,\tr,\Bad)$ and $k \in \mathbb{N}$,
	if the system admits an inductive invariant $I$ that can be expressed as a decision tree of size $m$,
	and $I$ is both backwards- and forwards- $k$-fenced, then
	$\A$($\Init,\tr,\Bad,k$) converges to an inductive invariant in $\bigO(m^2)$ inductiveness checks, $\bigO(m^2 \cdot n^3)$ $k$-BMC checks, and $\bigO(m^2 \cdot n^3)$ time.
\end{corollary}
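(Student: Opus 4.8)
The plan is to obtain \Cref{thm:bshouty-inference-decision-tree} as an immediate consequence of \Cref{thm:bshouty-inference-cdnf}, via the classical observation that a small decision tree has both a small DNF and a small CNF representation. First I would record the conversion: given a decision tree of size $m$ (i.e.\ with $m$ leaves) computing $I$, associate to each root-to-leaf path the term that conjoins, for each internal node on the path, the literal $p$ if the path turns right at the node labeled $p$ and $\neg p$ if it turns left. Paths to distinct leaves are mutually exclusive and jointly exhaustive over states, and a state satisfies $I$ iff its path ends at a $\true$-leaf; hence $I$ is equivalent to the disjunction of the path-terms of the $\true$-leaves, a DNF with at most (number of $\true$-leaves) $\le m$ terms, so $I \in \dnf{m}{}$. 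Dually, negating the path-term of each $\false$-leaf yields a clause, and $I$ is equivalent to the conjunction of these clauses, a CNF with at most (number of $\false$-leaves) $\le m$ clauses, so $I \in \cnf{m}{}$. (Since the $\true$- and $\false$-leaf counts sum to $m$, both representations have size at most $m$.)

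Next I would invoke the hypothesis that $I$ is both backwards- and forwards-$k$-fenced. This is a property of $I$ as a set of states, hence independent of the syntactic form in which $I$ is written; in particular it transfers verbatim to the DNF and CNF representations constructed above. We are therefore in the setting of \Cref{thm:bshouty-inference-cdnf} with $m_1 \le m$ and $m_2 \le m$: the algorithm $\A$ of that theorem, run on $(\Init,\tr,\Bad)$ and $k$, converges to an inductive invariant in $\bigO(m_1 m_2) = \bigO(m^2)$ inductiveness checks, $\bigO(m_1 m_2 n^3) = \bigO(m^2 n^3)$ $k$-BMC checks, and $\bigO(m^2 n^3)$ time. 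Since $\A$ does not need the parameters $m_1, m_2$ supplied in advance (the underlying CDNF learner is agnostic to the target size, or one can wrap it in the standard doubling search on the size budget without changing the asymptotics), the same algorithm $\A$ witnesses the corollary.

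I do not anticipate a genuine obstacle here: the only points requiring a little care are (i) stating the decision-tree-to-normal-form conversion correctly, in particular counting leaves so that a single parameter $m$ bounds both $m_1$ and $m_2$, and (ii) emphasizing that ``$k$-fenced'' is a semantic condition on $I$ and so is inherited by every representation of $I$, which is what lets the hypotheses of \Cref{thm:bshouty-inference-cdnf} be met with a single bound $m$. The decision-tree case for \Cref{thm:bshouty-inference-cdnf} itself is attributed to \citet{DBLP:journals/iandc/Bshouty95}, so no new learning-theoretic content is needed beyond assembling these facts.
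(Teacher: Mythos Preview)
Your proposal is correct and matches the paper's own proof essentially verbatim: convert the decision tree to a DNF (paths to $\true$-leaves) and a CNF (negated paths to $\false$-leaves), each of size at most $m$, then invoke \Cref{thm:bshouty-inference-cdnf} with $m_1,m_2 \le m$. Your additional remarks about the fence condition being semantic and the algorithm not needing $m_1,m_2$ in advance are sound clarifications, but the core argument is the same as the paper's.
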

\toolong{
\begin{proof}
A decision tree of size $m$ has a DNF representation of $m$ terms: a disjunction of terms representing the paths that reach a leaf $\true$, each is the conjunction of the variables on the path with polarity according to left/right branch. Similarly, it has a CNF representation of $m$ clauses: a conjunction of clauses which are the negations of paths that reach a leaf $\false$. Now apply~\Cref{thm:bshouty-inference-cdnf}.
\end{proof}

}
\else
; see the supplementary materials.
\fi
Similarly, when an $r$-almost-unate invariant with $\bigO(\log n)$ non-unate variables is fenced both backwards and forwards, it can be inferred by an adaptation of an algorithm by~\citet{DBLP:journals/cc/Bshouty97}. Whether this is possible based on the one-sided fence condition is an interesting question for future work.   
\section{Robustness and Non-Robustness of the Fence Condition}
\label{sec:robustness}
In this section we study the effect of program transformations on the %
$k$-fence condition, reflecting on the robustness of guaranteed convergence of the algorithms we study.

Suppose $(\Init,\tr,\Bad)$ is a transition system that admits an inductive invariant $I$ which is backwards or forwards $k$-fenced (\Cref{def:fence-backwards,def:fence-forwards}). %
Suppose that the system is modified in some way to form a new system $(\instr{\Init}, \instr{\tr}, \instr{\Bad})$ with an inductive invariant $\instr{I}$ which is derived from $I$ (often $\instr{I} = I$ itself).
The transformation $\instr{\cdot}$ %
is \emph{robust} if $\instr{I}$ is also backwards/forwards
$k$-fenced in $(\instr{\Init},\instr{\tr},\instr{\Bad})$. %
We consider several such program transformations and establish their (non)robustness.

\subsection{Robustness Under Simple Transformations}
\subsubsection{Isometries}
An isometry with respect to the Hamming distance is a permutation of the variables followed by variable translation per~\Cref{sec:a-monotone-background}~\cite[see e.g.][]{wiedemann1987hamming}.
The new $\instr{\Init},\instr{\tr},\instr{\Bad},\instr{I}$ are obtained by renaming the variables according to the permutation, and replacing variables with their negation according to the translation. The isometry preserves the Hamming distance between states. Therefore, if $I$ is backwards/forwards $k$-fenced, so is $\instr{I}$.

\subsubsection{Conjunctions and Disjunctions}
If $(\Init,\tr,\Bad_1)$ has a backwards (forwards) $k$-fenced inductive invariant $I_1$ and similarly $(\Init,\tr,\Bad_2)$ has $I_2$, then $I_1 \land I_2$ is an inductive invariant for $(\Init,\tr,\Bad_1 \lor \Bad_2)$, and it is also backwards (forwards) $k$-fenced.
Similarly,
if $(\Init_1,\tr,\Bad)$ has a backwards (forwards) $k$-fenced inductive invariant $I_1$ and similarly $(\Init_1,\tr,\Bad)$ has $I_2$, then $I_1 \lor I_2$ is an inductive invariant for $(\Init_1 \lor \Init_2,\tr,\Bad)$, and it is also backwards (forwards) $k$-fenced.
This is because for any two sets of states $S_1,S_2$, the boundary satisfies (much like in usual topology)
\begin{align*}
	&\boundarypos{S_1 \cap S_2} \subseteq \boundarypos{S_1} \cup \boundarypos{S_2}
	&\boundaryneg{S_1 \cap S_2} \subseteq \boundaryneg{S_1} \cup \boundaryneg{S_2}
	\\
	&\boundarypos{S_1 \cup S_2} \subseteq \boundarypos{S_1} \cup \boundarypos{S_2}
	&\boundaryneg{S_1 \cup S_2} \subseteq \boundaryneg{S_1} \cup \boundaryneg{S_2}
\end{align*}
and backwards (forwards) $k$-reachability is not reduced in either case.
\toolong{
\TODO{proof without lemma statement....}
\begin{proof}[Proof (boundary of union/intersection)]\sharon{didn't check}
Let $\sigma^{-} \in \boundaryneg{S_1 \cap S_2}$, so wlog.\ $\sigma^{-} \not \models S_1$ and it has a Hamming-neighbor $\sigma^{+} \models S_1 \cap S_2$. in particular $\sigma^{+} \models S_1$ and thus $\sigma^{-} \in \boundaryneg{S_1}$.
Let $\sigma^{-} \in \boundaryneg{S_1 \cup S_2}$, it has a Hamming-neighbor $\sigma^{+} \models S_1 \cup S_2$. Assume wlog.\ that $\sigma^{+} \models S_1$. Since $\sigma^{-} \not\models S_1 \cup S_2$, in particular $\sigma^{-} \not\models S_1$, and thus $\sigma^{-} \in \boundaryneg{S_1}$.
The rest of the cases are through the duality $\boundarypos{S_1 \cap S_2} = \boundaryneg{\bar{S}_1 \cup \bar{S}_2} \subseteq \boundaryneg{\bar{S}_1} \cup \boundaryneg{\bar{S}_2} = \boundarypos{S_1} \cup \boundarypos{S_2}$ and $\boundarypos{S_1 \cup S_2} = \boundaryneg{\bar{S}_1 \cap \bar{S}_2} \subseteq \boundaryneg{\bar{S}_1} \cup \boundaryneg{\bar{S}_2} = \boundarypos{S_1} \cup \boundarypos{S_2}$.
\end{proof}
}

\OMIT{
	\subsubsection{Fresh Uncorrelated Predicates}
	\label{sec:robustness-uncorrelated-predicates}
	We now consider introducing new variables, giving rise to a new vocabulary $\instr{\voc} = \voc \uplus \set{q_1,\ldots,q_s}$.
	The transformation we consider now is such that the new variables are not correlated with the old ones, nor with each other\sharon{actually I don't see what in the def below prevents correlations among the new vars}\yotamsmall{the initial state where they are all havoced}.
	In this scenario, we show that the algorithm cannot ``by mistake'' learn correlations that are not supposed to be part of the invariant.
	The new transition system does not constrain $q_1,\ldots,q_s$ on the initial state $\instr{\Init} = \Init$, nor in the bad states, $\instr{\Bad} = \Bad$. The transitions of the new system $\instr{\tr}$ may modify these variables, but allow the original transitions to proceed without altering the new variables, and do not introduce new behaviors over the old variables. %
	The transitions of the new system $\instr{\tr}$ may modify the new variables without modifying the old variables, and allow the original transitions to proceed without altering the new variables.\yotamsmall{used the suggestion} %
	Under this transformation, $\instr{I} = I$ is an inductive invariant for the new system.
	The new invariant satisfies the $k$-fenced condition, and thus inference is robust under this transformation:
	\begin{lemma}
	Let $(\Init,\instr{\tr},\Bad)$ be obtained from $(\Init,\tr,\Bad)$ by adding fresh uncorrelated variables as above. Then if $I$ is an inductive invariant that is backwards $k$-fenced for $(\Init,\tr,\Bad)$, it is also such for $(\Init,\instr{\tr},\Bad)$.
	\end{lemma}
	\begin{proof}
	In this proof we use the notation $\project{\instr{\sigma}}{\voc}$ for the projection restricting of $\instr{\sigma}$ to the vocabulary $\voc$.
	$I$ is an inductive invariant by the second condition, that guarantees that new behaviors over $\voc$ are not introduced.
	Suppose $\sigma^{+},\sigma^{-}$ are Hamming-neighbors such that $\sigma^{+} \models I, \sigma^{-} \not\models I$. Then they must differ on a variable in $\voc$, because $I$ does not mention variables from $\instr{\voc} \setminus \voc$. Therefore, $\project{\sigma^{+}}{\voc},\project{\sigma_o}{\voc}$ \sharon{if removed from above, move def of $\project{\instr{\sigma}}{\voc}$ to here}are Hamming-neighbors, and $\project{\sigma^{+}}{\voc} \models I$ while $\project{\sigma^{-}}{\voc} \not\models I$. From the premise that $I$ is backwards $k$-fenced, $\sigma^{-}$ reaches $\Bad$ in at most $k$ steps of $\tr$, let them be $\project{\sigma^{-}}{\voc}=\sigma_1,\ldots,\sigma_r \models \Bad$ where $\sigma_i,\sigma_{i+1} \models \tr$, $r \leq k$. Denote $\instr{\sigma}^i$ the state over $\instr{\voc}$ is obtained $\sigma_i$ by interpreting the variables in $\instr{\voc} \setminus \voc$ as they are interpreted in $\sigma^{-}$. From the construction, $\instr{\sigma}_i, \instr{\sigma}_{i+1} \models \instr{\tr}$\sharon{did you mean $\instr{\tr}$? why do you need the or}\yotamsmall{my bad, thanks}, so $\instr{\sigma}_1,\ldots,\instr{\sigma}_r$ is a trace of $\tr$ of length at most $k$, starting from $\sigma^{-}$ and ending in a state in $\Bad=\instr{\Bad}$. Thus $\sigma^{-}$ reaches $\instr{\Bad}$ in at most $k$ steps, as desired.
	\end{proof}

	\begin{example}
	\TODO{}\sharon{can we add something like a counter, without changing the property? i don't see how to do it without changing init, but actually, why can't we relax the transformation to allow changes to init?}
	\end{example}
}

\OMIT{
	\subsubsection{Initialization Step}
	\sharon{before the same transformation was robust for both conditions. Now there are two similar but different transformations, and each is robust for one condition}

	\sharon{I know initialization is more natural, but can we present it for termination, to save the confusion that we've now switched to the forward condition?}\yotamsmall{instrumentation with a derived relation makes sense largely in the forwards, so maybe do everything with forwards?}We now consider a transformation that starts executing the original system only after a step, and show that it is robust for the forwards fence condition.
	The dual version, adding a step before reaching a bad step, is robust for the backwards fence condition.

	The vocabulary is extended with a new variable $\textit{started}$. It is false initially, and becomes true after one step, when the original transition system starts to execute. The system reaches a bad state when it reaches a bad state of the original system and it has already started.
	Thus, $\instr{\Init} = \Init \land \neg\textit{started}$, $\instr{\tr} = \left(\neg \textit{started} \rightarrow \textit{started}' \land \Init' \right) \land \left(\textit{started} \rightarrow \textit{started} \land \tr\right)$\sharon{this means that you can "jump" to any init state, is that what you wanted? less natural than saying $\neg \textit{started} \rightarrow \textit{started}' \wedge id$, also I think  $\left(\textit{started} \rightarrow \tr\right)$ should be $\left(\textit{started} \rightarrow \textit{started}' \wedge \tr\right)$\yotamsmall{fixed bug. yes, I mean to jump. didn't understand the alternative?}}, and $\instr{\Bad} = \Bad \land \textit{started}$.

	The new invariant is $\instr{I} = \neg \textit{started} \lor I$. If $I \in \moncnf{m}{}$, then $\instr{I}$ is equivalent to a $\moncnf{m}{}$ formula (obtained by adding $\neg\textit{started}$ to each clause of $I$).
	The boundary $\boundarypos{\instr{I}}$ consists of all states in $\boundarypos{I}$ where $\textit{started}$ is $\true$, as well as all states in $\neg I$ with $\textit{started}$ being $\false$.\sharon{why doesn't the boundary include $\neg \textit{started} \wedge \neg I$?}\yotamsmall{right, thanks!} The latter are initial states, and if $I$ is forwards $k$-fenced, then the former are $k+1$ reachable. Thus, if $I$ is forwards $k$ fenced then $\instr{I}$ is forwards $k+1$ fenced, thereby guaranteeing efficient inference.

	It is interesting to note that this transformation is robust also with a different initial condition, $\instr{\Init} = \neg\textit{started} \land \psi$ where $\psi \implies \neg I$ (for example, $\psi$ can be the negation of a clause of $I$). Interestingly, in this case, $\instr{I}$ is forwards $k+1$ fenced,\footnote{
		Under this transformation, $\boundarypos{\instr{I}}$ includes, in addition to the aforementioned states, also the states where $\textit{started}$ is $\false$ and $\psi$ does not hold, but they are all $0$-reachable.
	}
	but \emph{not} all states in $\instr{I}$ are reachable when $\psi \not\equiv \true$ (the states satisfying $\neg \textit{started} \land \neg \psi$ are in the invariant but are unreachable)\sharon{isn't it enough to talk about $\neg \textit{started} \land \neg \psi$? these are also in the invariant and unreachable, no? and you don't need $\neg \textit{started} \land \neg \psi$ to be sat}\yotamsmall{corrected, thanks!}, even if this were true in the original system.

	\begin{example}
	\TODO{}
	\end{example}
}

\subsection{Non-Robustness Under Instrumentation}
\subsubsection{Instrumentation by a Derived Relation}
\label{sec:robustness-instrumentation}
Instrumentation by a derived relation introduces new \emph{ghost variables} that have a defined meaning over the program variables, so as to aid program analysis. (See~\Cref{sec:related} for a discussion of the role of instrumentation in inference.)

Formally, instrumentation by a derived relation works as follows:
The vocabulary is extended with a new variable, yielding $\instr{\voc} = \voc \uplus \set{q}$. The transition system is modified to update $q$ according to $\psi$, its intended meaning: $\instr{\Init} \equiv \Init \land q \leftrightarrow \psi$, and $\instr{\tr}$ satisfies $\tr \land q \leftrightarrow \psi \implies \instr{\tr} \land q' \leftrightarrow \psi'$ (reading: if $q$ has the correct interpretation in the pre-state, $\instr{\tr}$ correctly updates it). %
The bad states $\instr{\Bad}$ are all bad under the old definition ($\instr{\Bad} \implies \Bad$), and we expect the projection of $\instr{\Bad}$ to $\voc$ to be exactly $\Bad$.
The example instrumentation in~\Cref{sec:overview:robustness} introduces $q$ to capture $\psi = \oplus_{x_i \in J}{x_i}$; the transitions there do not modify the
truth value of $\psi$, hence in $\instr{\tr}$, they do not modify $q$.

We say that $\instr{\sigma}$ over $\instr{\voc}$ is \emph{consistent} if $\instr{\sigma} \models q \leftrightarrow \psi$. Every reachable state of $(\instr{\Init},\instr{\tr},\instr{\Bad})$ is consistent, but $\boundarypos{I}$ includes inconsistent (thus unreachable) states, and this leads to non-robustness.
\iflong
\begin{lemma}
Let $(\instr{\Init},\instr{\tr},\instr{\Bad})$ be obtained as the instrumentation of $(\Init,\tr,\Bad)$ by the variable $q$ capturing $\psi$.
Let $I$ be over $\voc$ for $(\Init,\tr,\Bad)$.
Then
even if $I$ is an inductive invariant, it is \emph{not} forwards $k$-fenced in $(\instr{\Init},\instr{\tr},\instr{\Bad})$ for any $k \in \mathbb{N}$.
\end{lemma}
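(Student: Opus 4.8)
The plan is to exhibit a single state that lies in $\boundarypos{I}$ yet is unreachable in the instrumented system, and hence is absent from $\bmc{\instr{\tr}}{\instr{\Init}}{k}$ for \emph{every} $k$, so the forwards $k$-fence inclusion $\boundarypos{I}\subseteq\bmc{\instr{\tr}}{\instr{\Init}}{k}$ must fail. The structural fact to exploit is that $I$ mentions only variables in $\voc$, so the truth of $I$ at a state over $\instr{\voc}=\voc\uplus\set{q}$ depends only on its restriction to $\voc$; in particular, flipping the value of $q$ never changes membership in $I$. It follows that a state $\instr{\sigma}$ over $\instr{\voc}$ lies in $\boundarypos{I}$ (the inner boundary taken in the instrumented system) precisely when $\instr{\sigma}\models I$ and some neighbor \emph{over $\voc$} falsifies $I$ --- i.e.\ precisely when $\proj{\instr{\sigma}}{\voc}$ lies in the inner boundary of $I$ computed over $\voc$.

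First I would check that this inner boundary is nonempty. Since $\Init\not\equiv\false$ and $\Init\implies I$, we have $I\not\equiv\false$; and since $\Bad\not\equiv\false$ and $I\implies\neg\Bad$, we have $I\not\equiv\true$. Thus $I$ is a non-constant Boolean function on the Boolean hypercube over $\voc$, which is connected, so some hypercube edge crosses between $I$ and $\neg I$: fix a state $\sigma^{+}\models I$ and a $\voc$-neighbor $\sigma^{-}\models\neg I$. Now consider the two extensions $\sigma^{+}[q\mapsto\true]$ and $\sigma^{+}[q\mapsto\false]$ of $\sigma^{+}$ to $\instr{\voc}$. Because $\psi$ is a formula over $\voc$, its truth value is already fixed by $\sigma^{+}$, so exactly one of these two extensions is \emph{inconsistent} (violates $q\leftrightarrow\psi$); call it $\instr{\sigma}$. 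Extending $\sigma^{+}$ and $\sigma^{-}$ with this same value of $q$ shows $\instr{\sigma}\models I$ and has a neighbor falsifying $I$, hence $\instr{\sigma}\in\boundarypos{I}$ in the instrumented system.

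To conclude, I would invoke the fact established in the paragraph preceding the lemma, that every reachable state of $(\instr{\Init},\instr{\tr},\instr{\Bad})$ is consistent: $\instr{\Init}$ is consistent by construction, and $\instr{\tr}$ preserves consistency. Hence the inconsistent state $\instr{\sigma}$ is not reachable from $\instr{\Init}$ in any number of steps, so $\instr{\sigma}\notin\bmc{\instr{\tr}}{\instr{\Init}}{k}$ for every $k\in\mathbb{N}$; since $\instr{\sigma}\in\boundarypos{I}$, the forwards $k$-fence condition fails for all $k$. The only points needing care are (i) the characterization of $\boundarypos{I}$ after the vocabulary extension --- handled by the observation that $I$ ignores $q$, so flipping $q$ never leaves $I$ --- and (ii) nonemptiness of the boundary, for which connectivity of the hypercube together with $\false\not\equiv I\not\equiv\true$ suffices; the unreachability of inconsistent states is already supplied, so beyond this bookkeeping there is no real obstacle.
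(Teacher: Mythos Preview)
Your proof is correct and follows essentially the same approach as the paper: pick $\sigma^{+}\in\boundarypos{I}$ over $\voc$, extend it inconsistently in $q$, and observe that this extension still lies in $\boundarypos{I}$ (via the same neighbor $\sigma^{-}$ extended with the same $q$-value) yet is unreachable because all reachable states are consistent. Your added observations---that flipping $q$ cannot exit $I$, and the explicit hypercube-connectivity argument for nonemptiness of the boundary---are exactly the justifications the paper leaves implicit or relegates to a footnote.
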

\toolong{
\begin{proof}
Let $\sigma^{+}$ over $\voc$ where $\sigma^{+} \in \boundarypos{I}$.\footnote{The boundaries $\boundarypos{I},\boundaryneg{I}$ are never empty, because $I,\neg I$ are both not empty, and the ``crossing point'' when walking from one set to the other by gradually flipping bits is in the boundary.}
We extend $\sigma^{+}$ to $\instr{\sigma^{+}}$ over $\instr{\voc}$ by interpreting $q$ according to whether $\sigma^{+} \models \neg \psi$, that is, $q$ is $\true$ iff $\psi$ is evaluated to $\false$ in $\sigma^{+}$. Note that $\instr{\sigma^{+}} \models I$ but $\instr{\sigma^{+}}$ is ``inconsistent'', namely, $\instr{\sigma^{+}} \not\models q \leftrightarrow \psi$. Therefore, it is unreachable from $\instr{\Init}$ through any number of transitions of $\instr{\tr}$.
It remains to show that $\instr{\sigma^{+}} \in \boundarypos{I}$.
Let $\sigma^{-}$ be a Hamming-neighbor of $\sigma^{+}$ such that $\sigma^{-} \models \neg I$, and extend it to $\instr{\sigma^{-}}$ over $\instr{\voc}$ by interpreting $q$ the same way it is interpreted in $\instr{\sigma^{+}}$. Then $\instr{\sigma^{+}},\instr{\sigma^{-}}$ are Hamming neighbors (w.r.t.\ $\instr{\voc}$), and $\instr{\sigma^{-}} \models \neg I$ as well, since $I$ does not mention $q$ and $\sigma^{-} \models \neg I$. The claim follows.
\end{proof}
}
\else
We show (in the extended version) that even if $I$ ranges over the original vocabulary, it cannot be forwards $k$-fenced in $(\instr{\Init},\instr{\tr},\instr{\Bad})$ for any $k$.
\fi
For example, the invariant of~\Cref{eq:overview-parity-example-inv} is forwards $1$-fenced before the transformation in \Cref{sec:overview:robustness} (see~\Cref{ex:two-sided-fence-overview-example}), but  %
\iflong
it is no longer forwards $k$-fenced after the instrumentation.
\else
not after the instrumentation.
\fi

\yotamsmall{omitting:}
Of course, there is another inductive invariant, $I \land q \leftrightarrow \psi$, that we may set as the target for convergence in our analysis of the algorithm.
Unfortunately, $\boundarypos{I \land q \leftrightarrow \psi} = I \land q \leftrightarrow \psi$ itself, because flipping $q$ moves to a state outside this invariant. Thus, to satisfy our condition, this entire invariant must be $k$-reachable, which is a much stronger requirement than $I$ being $k$-fenced ($I$ must be the least fixed-point and $k$ the diameter). Other $k$-fenced invariants could also exist.

\yotamsmall{lock example in comment}
The discussion in~\Cref{sec:overview:robustness} demonstrates a similar non-robustness of the backwards fence condition.

\subsubsection{Monotonization by Instrumentation}
We have shown that the pre-existing invariants are no longer forwards fenced after an instrumentation.
If we are interested in an invariant that uses the new variable, the effect is rather idiosyncratic and depends on the new invariant.
One class of instrumentations with a clear target for the new invariant is \emph{monotinization by instrumentation}, which is of special importance from the perspective of our results in earlier sections (e.g.~\Cref{thm:monotone-inference-efficient,thm:bshouty-known-efficiency}).
Inspired by the importance of the syntactic structure as displayed by our results, it would seem valuable to ``monotonize'' invariants: introduce new variables capturing the negation of existing variables, and using them to rewrite an invariant $I$ into a \emph{monotone} invariant $\instr{I}$.
For example, suppose that a system admits an inductive invariant $I = (\neg p_1 \lor \neg p_2) \land (p_1 \lor \neg p_3)$, in which the variable $p_1$ appears in $I$ both positively and negatively. We thus introduce a new variable $q$ to be maintained as $\neg p_1$, and write $\instr{I} = (\neg p_1 \lor \neg p_2) \land (\neg q \lor \neg p_3)$, which is (anti)monotone.

Alas, we show that this approach is in general unsuccessful, and does not reduce the inference of fenced invariants from the general case to the monotone case, because $\instr{I}$ is \emph{not} forwards/backwards $k$-fenced in $(\instr{\Init},\instr{\tr},\instr{\Bad})$, even if $I$ is forwards/backwards $k$-fenced in $(\Init,\tr,\Bad)$.
Formally,
let $q$ be a variable aiming to capture the negation of a variable $p \in \voc$. $\instr{I}$ is obtained from $I$ by replacing all occurrences of $\neg p$ by $q$.
We modify the system to maintain the relationship between $q,\neg p$: take $\instr{\Init} = \Init \land p \leftrightarrow \neg q$, and $\instr{\tr} = p \leftrightarrow \neg q \land \tr \land p' \leftrightarrow \neg q'$. Note that $\instr{\tr}$ does not make a step when $q$ is interpreted not as $\neg q$, hence $\instr{I}$ is indeed an inductive invariant.

\sloppy
We first describe the effect on the forwards fence condition when trying to achieve an (anti)monotone invariant (to match~\Cref{thm:monotone-inference-efficient-cnf}).
We show that the transformation above does not produce a forwards $k$-fenced invariant in the case that after replacing occurrences $p$ by $\neg q$, the negative form $\neg p$ must still appear in $\instr{I}$;
roughly, this is when $I$ is not unate w.r.t.\ $p$,
so using $\neg q$ alone cannot be enough
(such as in the example above, in which $\neg p_1$ still appears in $\instr{I}$).
\begin{lemma}
\label{lem:non-robustness-monotnonization}
Let $(\instr{\Init},\instr{\tr},\instr{\Bad})$ be obtained by instrumenting $(\Init,\tr,\Bad)$ with $q$ capturing $\neg p$, and let $I$ be an inductive invariant for $(\Init,\tr,\Bad)$ in negation normal form (negations appear only on atomic variables).
Let $\instr{I}$ be obtained by replacing every occurrence of $p$ in $I$ with $\neg q$.
Suppose that $I$ is not monotone w.r.t.\ the variable $p$:
there is $\sigma$ over $\voc$ such that $\sigma \models I, \sigma[p \mapsto \true] \models \neg I$.
Then $\instr{I}$ is not forwards $k$-fenced in $(\instr{\Init},\instr{\tr},\instr{\Bad})$ for \emph{any} $k \in \mathbb{N}$.
\end{lemma}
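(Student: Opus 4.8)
The plan is to exhibit an inconsistent state in the inner boundary $\boundarypos{\instr{I}}$ that is unreachable from $\instr{\Init}$, for every bound $k$. Recall that $q$ is meant to capture $\neg p$; call a state $\instr{\tau}$ over $\instr{\voc}$ \emph{consistent} if $\instr{\tau} \models p \leftrightarrow \neg q$. Every state reachable from $\instr{\Init}$ is consistent, since $\instr{\Init}$ enforces $p \leftrightarrow \neg q$ and $\instr{\tr}$ only steps from consistent pre-states and preserves consistency (indeed $\instr{\tr}$ makes no move at all from an inconsistent state). Hence an inconsistent state is not in $\bmc{\tr}{\instr{\Init}}{k}$ for any $k$, and to prove the lemma it suffices to find an inconsistent state $\instr{\sigma}^{+} \in \boundarypos{\instr{I}}$.

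First I would use the non-monotonicity hypothesis. We are given $\sigma$ over $\voc$ with $\sigma \models I$ and $\sigma[p \mapsto \true] \models \neg I$; in particular $\sigma[p] = \false$ (otherwise $\sigma = \sigma[p\mapsto\true]$ would satisfy both $I$ and $\neg I$), so $\sigma \models \neg p$. Since $I$ is in negation normal form and $\instr{I}$ results from replacing $\neg p$ by $q$, the pair $(p,q)$ occurs in $\instr{I}$ only through literals $p$ and $q$ (never $\neg p$ or $\neg q$). Now define a state $\instr{\sigma}^{+}$ over $\instr{\voc}$ agreeing with $\sigma$ on $\voc$ and setting $q = \false$. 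This state is \emph{inconsistent}, because $\sigma[p] = \false$ forces $\neg q = \true$ under consistency, but we chose $q = \false$. It remains to check two things: (i) $\instr{\sigma}^{+} \models \instr{I}$, and (ii) $\instr{\sigma}^{+}$ has a Hamming neighbor satisfying $\neg\instr{I}$.

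For (i): evaluating $\instr{I}$ at $\instr{\sigma}^{+}$ uses the literal $p$ (value $\false$) and the literal $q$ (value $\false$); evaluating $I$ at $\sigma$ uses the literal $p$ (value $\false$) and the literal $\neg p$ (value $\true$). The occurrences of $q$ in $\instr{I}$ are exactly the occurrences of $\neg p$ in $I$. But at $\instr{\sigma}^{+}$ the literal $q$ is $\false$ whereas at $\sigma$ the literal $\neg p$ is $\true$ — so this direct substitution argument is not immediate and I would instead route through the variable $p$: consider the Hamming neighbor $\instr{\sigma}^{-} = \instr{\sigma}^{+}[p \mapsto \true]$. Now $\instr{\sigma}^{-}$ is \emph{consistent} (it has $p = \true$, $q = \false$), and its projection to $\voc$ is exactly $\sigma[p\mapsto\true] \models \neg I$. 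Since on a consistent state $q \equiv \neg p$, the formula $\instr{I}$ agrees with $I$ (after undoing the substitution), so $\instr{\sigma}^{-} \models \neg \instr{I}$. Symmetrically, I would pick as the boundary witness the consistent extension of $\sigma$ itself: $\instr{\sigma}^{0}$ agreeing with $\sigma$ on $\voc$ and with $q = \true$ (so $p = \false$, $q = \true$, consistent), whose projection is $\sigma \models I$, hence $\instr{\sigma}^{0} \models \instr{I}$; and $\instr{\sigma}^{0}$ and $\instr{\sigma}^{-}$ differ in the two variables $p$ and $q$, which is Hamming distance $2$ — not adjacent. So the clean choice is: take $\instr{\sigma}^{+} = \instr{\sigma}^{0}$ (the consistent extension of $\sigma$, with $q=\true$) as the point \emph{inside} $\instr{I}$, but observe that its neighbor obtained by flipping $q$ alone, namely the inconsistent state with $p=\false, q=\false$, call it $\instr{\tau}$, satisfies $\neg\instr{I}$: on $\instr{\tau}$ the literal $q$ is $\false$, which is the value the literal $p$ takes at $\sigma[p\mapsto\true]$ — so $\instr{I}(\instr{\tau})$ computes the same Boolean as $I(\sigma[p\mapsto\true]) = \false$. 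Thus $\instr{\tau} \in \boundaryneg{\instr{I}}$ and $\instr{\tau}$ is inconsistent, hence unreachable, so $\instr{I}$ fails the \emph{backwards} fence; but for the forwards statement I flip the roles: the inconsistent state $\instr{\sigma}^{\flat}$ with the valuation of $\sigma[p\mapsto\true]$ on $\voc$ and $q = \true$ satisfies $\instr{I}$ (it evaluates $p$ at $\true$ — wait, $p$ occurs positively — and $q$ at $\true$, matching $I$'s literals $p=\true$? no, $\sigma[p\mapsto\true]\models\neg I$), so this needs the case split on which of $\sigma,\sigma[p\mapsto\true]$ we anchor.

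The cleanest packaging, which I would write up: among the two states $\sigma, \sigma[p\mapsto\true]$, one satisfies $I$ and one satisfies $\neg I$; in both consistent extensions $q$ is determined by $p$. Take the $\voc$-valuation $\rho$ among these two with $\rho \models I$, and form its \emph{inconsistent} extension $\instr{\rho}$ by setting $q := \rho[p]$ (rather than $\neg\rho[p]$). Because $I$ is in NNF and $\instr{I}$ replaces $\neg p$ by $q$, flipping the interpretation of $q$ from $\neg\rho[p]$ to $\rho[p]$ has the same effect on $\instr{I}$ as flipping the interpretation of $p$ has on $I$ — so $\instr{\rho} \models \instr{I}$ iff $\rho[p\mapsto\neg\rho[p]] \models I$. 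But $\set{\sigma,\sigma[p\mapsto\true]} = \set{\rho, \rho[p\mapsto\neg\rho[p]]}$ and the two have opposite $I$-membership, so $\instr{\rho} \models \instr{I}$. Now $\instr{\rho}$ differs from its \emph{consistent} sibling (with $q := \neg\rho[p]$) only in $q$; that sibling has $\voc$-projection $\rho$, and — going the other way — its consistent $p$-neighbor has projection $\rho[p\mapsto\neg\rho[p]] \models \neg I$, hence satisfies $\neg\instr{I}$. Tracing Hamming distances once more to land on an \emph{adjacent} pair witnessing $\instr{\rho} \in \boundarypos{\instr{I}}$: $\instr{\rho}$ with $q$ flipped is consistent with projection $\rho$, still $\models \instr{I}$, so that is no good; $\instr{\rho}$ with $p$ flipped is inconsistent with projection $\rho[p\mapsto\neg\rho[p]]\models\neg I$ — and since it is inconsistent we re-apply the substitution fact to get it $\models \neg\instr{I}$. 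Hence $\instr{\rho} \in \boundarypos{\instr{I}}$, $\instr{\rho}$ is inconsistent, therefore $\instr{\rho} \notin \bmc{\instr{\tr}}{\instr{\Init}}{k}$ for every $k$, so $\instr{I}$ is not forwards $k$-fenced.

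The main obstacle is precisely this bookkeeping of Hamming distances together with the substitution identity ``flipping $q$ on an inconsistent extension mirrors flipping $p$ under $I$'': one must choose the anchor state ($\sigma$ vs.\ $\sigma[p\mapsto\true]$) and the parity of $q$ so that the inconsistent witness lands \emph{inside} $\instr{I}$ while having an adjacent neighbor \emph{outside} $\instr{I}$, and NNF of $I$ is exactly what makes the substitution identity go through (no stray $\neg p$ occurrences to break the correspondence). Everything else — consistency is preserved by $\instr{\tr}$, and consistency fails on the witness, hence unreachability for all $k$ — is routine from the definitions of $\instr{\Init}$ and $\instr{\tr}$.
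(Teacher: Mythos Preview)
Your high-level strategy---exhibit an inconsistent state in $\boundarypos{\instr{I}}$, which is then unreachable from $\instr{\Init}$ for every $k$---is exactly the paper's, and you even land on the same witness: $\instr{\rho}$ extending $\rho = \sigma$ by $q = \false$ (the paper's $\instr{\sigma^{+}}$), together with its $p$-neighbor extending $\sigma[p\mapsto\true]$. But you have misread the substitution: the lemma replaces the \emph{positive} literal $p$ by $\neg q$ (leaving occurrences of $\neg p$ untouched), not $\neg p$ by $q$. Thus over $\{p,q\}$ the formula $\instr{I}$ contains only the literals $\neg p$ and $\neg q$; in particular $\instr{I}$ is \emph{anti}monotone in $q$.

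This misreading bites exactly at your main gap, the step $\instr{\rho} \models \instr{I}$. Your ``substitution identity'' does not hold: flipping $q$ in $\instr{I}$ toggles only the former $p$-positions, whereas flipping $p$ in $I$ toggles both the $p$- and the $\neg p$-positions, so the two operations are not interchangeable; and even granting your own iff, from $\rho[p\mapsto\neg\rho[p]] \not\models I$ you would obtain $\instr{\rho} \not\models \instr{I}$, the opposite of what you conclude. The paper's clean fix is to use the (anti)monotonicity directly: the \emph{consistent} extension $\instr{\rho}[q\mapsto\true]$ projects to $\rho \models I$, hence satisfies $\instr{I}$; since $\instr{I}$ is antimonotone in $q$, lowering $q$ from $\true$ to $\false$ preserves satisfaction, giving $\instr{\rho} \models \instr{I}$. (One more correction: $\instr{\rho}[p\mapsto\true]$ has $p=\true$, $q=\false$, hence is \emph{consistent}, not inconsistent as you state; consistency is precisely what lets you read off $\neg\instr{I}$ there directly from $\sigma[p\mapsto\true]\models\neg I$.)
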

\toolong{
\begin{proof}
\yotam{need to recheck the proof...}\sharon{I read it but there are many confusing points, so can't trust me}\yotamsmall{Quis custodiet ipsos custodes?}
The reasoning is similar in spirit to~\Cref{sec:robustness-instrumentation}, but with $\instr{I}$ which is different from $I$.
As before, a state is called consistent if it indeed interprets $q$ as the negation of the interpretation of $p$.
All reachable states are consistent; we shall show that $\boundarypos{\instr{I}}$ includes an inconsistent and hence unreachable state.

First observe that by the construction of $I$, if $\instr{\sigma}$ over $\instr{\voc}$ is consistent, then $\project{\instr{\sigma}}{\voc} \models I$ iff $\instr{\sigma} \models \instr{I}$.

Denote $\sigma^{+} = \sigma$, $\sigma^{-} = \sigma[p \mapsto \true]$, where $\sigma$ is a state as in the premise of the lemma.
Augment them to obtain $\instr{\sigma^{+}},\instr{\sigma^{-}}$ by interpreting $q$ to $\false$ -- the value of $p$ in $\sigma^{+}$; note that $\instr{\sigma^{+}}$ is inconsistent.
In contrast, $\instr{\sigma^{-}}$ is consistent, and thus $\sigma^{-} \models \neg I$, i.e., $\sigma^{-} \not \models I$, implies that $\instr{\sigma^{-}} \not \models \instr{I}$.

Consider now $\instr{\sigma^{+}}$. We know that the consistent $\instr{\sigma^{+}}[q \mapsto \true]$, which is exactly $\sigma^{+}$ with $q$ interpreted consistently, satisfies $I$, because $\sigma^{+}$ does. Since it is consistent, $\instr{\sigma^{+}}[q \mapsto \true] \models \instr{I}$ as well.
But $\instr{I}$ is (anti)monotone w.r.t.\ $q$, so also $\instr{\sigma^{+}} \models \instr{I}$.

Since both states interpret $q$ the same way, $\instr{\sigma^{+}},\instr{\sigma^{-}}$ are Hamming neighbors w.r.t.\ $\instr{\voc}$.
We have shown that $\instr{\sigma^{+}} \models I$ whereas $\instr{\sigma^{-}} \not\models I$ while $\instr{\sigma^{+}}$ is unreachable, and the proof is concluded.
\end{proof}
}

The case of the backwards fence condition is 
\iflong
completely
\fi 
dual when $\instr{\Bad} = \Bad \land p \leftrightarrow \neg q$. If $\instr{\Bad} = \Bad$ itself, a similar non-robustness occurs when the critical $\sigma$ (as in~\Cref{lem:non-robustness-monotnonization}) is not in $\Bad$---even though it may reach $\Bad$ 
\iflong
quickly
\fi 
under $\tr$---because an inconsistent extension of it cannot reach $\Bad$%
\iflong
(the system cannot make any step from an inconsistent state)
\fi
.  
\section{Related Work}
\label{sec:related}

\para{Interpolant generation}
Classically, interpolants are generated from unsatisfiability proofs of SAT/SMT solvers, a procedure which has been studied extensively~\cite[e.g.][]{DBLP:conf/cav/McMillan03,DBLP:journals/tcs/McMillan05,DBLP:conf/fmcad/KroeningW07,DBLP:journals/tocl/CimattiGS10,DBLP:conf/fmcad/McMillan11,DBLP:conf/cav/VizelGM15,DBLP:journals/fmsd/VizelNR15}.
A relatively recent alternative is to sample states that the interpolant must separate~\cite{DBLP:conf/cav/SharmaNA12}. By iteratively sampling and combining the resulting interpoloants, it is possible to compute an interpolant of the (entire) formulas~\cite{DBLP:conf/cav/AlbarghouthiM13,DBLP:conf/cav/DrewsA16,DBLP:conf/lpar/BjornerGKL13,DBLP:conf/hvc/ChocklerIM12}. The interpolation procedure in the algorithm we analyze is from~\citet{DBLP:conf/lpar/BjornerGKL13,DBLP:conf/hvc/ChocklerIM12}, and inspired by IC3/PDR~\cite{ic3,pdr}. It samples from one side---the post image of the previous candidate---and computes an interpolant of the sample and the set of $k$-backwards reachable states. The EPR interpolation method by~\citet{DBLP:conf/cav/DrewsA16} is related, with diagrams replacing cubes as the least abstraction of states. 

\para{Inference and learning}
Ideas from machine learning have inspired many works in invariant inference~\cite[e.g.][]{DBLP:conf/icse/JhaGST10,DBLP:conf/sas/0001GHAN13,DBLP:conf/cav/SharmaNA12,DBLP:conf/esop/0001GHALN13,ICELearning,DBLP:journals/fmsd/SharmaA16,DBLP:conf/pldi/KoenigPIA20}.
A primary concern has been to overcome the ambiguity of counterexamples to induction, or implication counterexamples, compared to the classical learning settings where examples are labeled positive and negative~\cite{ICELearning}.
We resolve this ambiguity through observations of one-sidedness of the algorithm (\Cref{sec:restricted-queries}) or special invariants to which membership can be implemented (\Cref{sec:bshouty-two-sided}).
\citet{DBLP:conf/popl/0001NMR16} target invariants expressed as decision trees over numerical and Boolean attributes, adapting an algorithm by~\citet{DBLP:journals/ml/Quinlan86}. The result in~\Cref{sec:cdnf-inference} applies to decision trees in the propositional setting, and leverages the CDNF algorithm~\cite{DBLP:journals/iandc/Bshouty95}, which admits a bound on the number of hypotheses the learner presents before converging.
The CDNF algorithm~\cite{DBLP:journals/iandc/Bshouty95} has been applied by~\citet{DBLP:journals/mscs/JungKDWY15} to infer quantified invariants through predicate abstraction. They resolve membership queries by over- and under-approximations to some invariants, and use a random guess when the these are not conclusive, potentially leading to failures and restarts. In this work we use bounded model checking to correctly resolve certain queries under the fence condition, and accomplish efficient overall complexity theorems.
\citet{DBLP:journals/pacmpl/FeldmanISS20} have shown that membership and equivalence queries cannot be implemented in general in black-box models of invariant inference. In~\Cref{sec:inference-from-positive-exact} we build on the fence condition to implement some forms of queries. Our results in~\Cref{sec:bshouty-two-sided} generalize the implementability of queries obtained by~\cite{DBLP:journals/pacmpl/FeldmanISS20} the (in three aspects as in~\Cref{sec:intro}). %
The CDNF algorithm has also been applied to generate contextual assumptions in assume-guarantee reasoning~\cite{DBLP:conf/cav/ChenCFTTW10}.

\para{Complexity of invariant inference}
Conjunctive invariants can be inferred in a linear number of SAT calls using Houdini~\cite{DBLP:conf/fm/FlanaganL01}, and similarly for disjunctive invariants using the dual algorithm~\cite{DBLP:conf/cade/LahiriQ09}.
In richer syntactical classes, which are the domain of algorithms such as interpolation~\cite{DBLP:conf/cav/McMillan03} and IC3/PDR~\cite{ic3,pdr}, invariant inference becomes much harder~\cite{DBLP:conf/cade/LahiriQ09}, and is $\NP$-hard with access to a SAT solver even when the problem is restricted to learning monotone CNF/DNF invariants with a linear number of clauses~\cite{DBLP:journals/pacmpl/FeldmanISS20}.
These results place no restrictions on the transition system. %
In this work, the fence condition identifies properties of the transition system and the invariant that evade these hardness results.
In the process of showing a separation between ICE learning and algorithms that use richer queries, \citet{DBLP:journals/pacmpl/FeldmanISS20} also show a polynomial upper bound on the inference of (anti)monotone CNF invariants. Our results go beyond the class they consider (see~\Cref{sec:intro}).
The algorithms we study in this paper fall into the black-box model of extended Hoare queries~\cite{DBLP:journals/pacmpl/FeldmanISS20}. Whether similar efficiency results can also be obtained in the (non-extended) Hoare-query model---without unrollings for BMC---is interesting for future work.

\para{Robustness of program analysis}
The drastic consequences of small program changes on verification tools is sometimes recognized as verification's ``butterfly effect''~\cite{leinotriggerCAV16}.
Many program analysis techniques exhibit brittle behaviors~\cite{DBLP:conf/pldi/LogozzoLFB14,DBLP:conf/vmcai/KarpenkovMW16}. 
This may be in line with the inherent hardness; for instance, the class of programs for which an abstract interpreter is complete is undecidable~\cite{DBLP:conf/popl/GiacobazziLR15}.
A notable exception is the Houdini algorithm, which efficiently finds the strongest conjunctive invariant without any assumptions on the program~\cite{DBLP:conf/fm/FlanaganL01}. In contrast, efficiently inferring invariants from richer syntactic classes is not possible in all programs, because the general case is hard~\cite{DBLP:journals/pacmpl/FeldmanISS20,DBLP:conf/cade/LahiriQ09}, but the fence condition allows efficient inference. As we show, the fence condition is sensitive to some program transformations.

Modifying the program for the sake of verification is emblemed by the practice of ghost code~\cite[see e.g.][]{DBLP:journals/fmsd/FilliatreGP16}.
Improving invariant inference through program transformations has been applied in many different settings with varying notions of the effect on the underlying inference technology, including improving accuracy, enriching the syntactic search space, and simplifying the quantification structure~\cite{TOPLAS:SRW02,DBLP:conf/cav/FeldmanWSS19,DBLP:conf/cav/SharmaDDA11,DBLP:conf/tacas/BorrallerasBLOR17,DBLP:conf/vmcai/Namjoshi07,DBLP:journals/pacmpl/CyphertBKR19}. Perhaps not surprisingly, there is some empirical evidence that instrumentation can also at times have negative impact on the inference algorithm~\cite{DBLP:conf/cav/FeldmanWSS19}.
Our results suggest that such transformations can have a profound effect on the inference algorithm, in the sense that pre-existing invariants can no longer be the cause of guaranteed convergence after instrumentation by a derived relation.

\para{Learning with errors in queries}
Exact learning has also been studied in models where the teacher might not respond ``unknown'' or answer incorrectly; see~\cite{DBLP:journals/tcs/Bshouty18} for a short survey. Usually learning is facilitated by bounding the probability of an error or the number of allowed errors. The fence condition gives rise in~\Cref{sec:restricted-queries} to a different model, which correctly answers  all membership queries to examples that are positive or on the concept's outer-boundary, yet can adversarially classify every other example. As we show, several existing algorithms are applicable in this model. 
\section{Conclusion}
This work applies ideas from exact concept learning for understanding the behavior of interpolation-based invariant inference, achieving the first polynomial complexity result \begin{changebar}with access to a SAT oracle\end{changebar} for such algorithms.
This connection and recent results in inferring quantified invariants and invariants over interesting theories~\cite[e.g.][]{DBLP:journals/jacm/KarbyshevBIRS17,DBLP:conf/sigsoft/GurfinkelSM16,DBLP:conf/popl/0001NMR16,DBLP:conf/cav/FeldmanWSS19,DBLP:conf/pldi/KoenigPIA20,DBLP:conf/oopsla/DilligDLM13,DBLP:journals/fmsd/SharmaA16} suggest that there may be an opportunity to develop exact concept learning for infinite domains, which, to the best of our knowledge, is less explored~\cite[see e.g.][]{DBLP:conf/pods/AbouziedAPHS13,arias200phd}.

\begin{changebar}
The focus of this work in on \emph{worst case} complexity of an interpolation-based invariant inference algorithm, showing conditions that guarantee efficient inference, which is an important step towards a theoretical understanding of invariant inference.
We strived for the weakest possible conditions, so that the results would be applicable to as many programs as possible. We believe that the fence condition is the weakest reasonable condition in this setting---we have not found a graceful way to ensure that the bad examples on the boundary are always avoided (see~\Cref{rem:fence-violation}). We thus believe that to achieve theoretical guarantees for programs that do not satisfy the fence condition, the analysis must go beyond worst case (for example, analyzing convergence with high probability).
\end{changebar}

\begin{acks}
\iflong
\else{\small 
\fi
We thank our shepherd and the anonymous reviewers for comments which improved the paper.
We thank Kalev Alpernas, Nader Bshouty, Shachar Itzhaky, Neil Immerman, Kenneth McMillan, Yishay Mansour, and Oded Padon for insightful discussions, and Sivan Gershon-Feldman for help with~\Cref{fig:fence}.
The research leading to these results has received funding from the
European Research Council under the European Union's Horizon 2020 research and
innovation programme (grant agreement No [759102-SVIS]).
This research was partially supported by the Blavatnik Interdisciplinary Cyber Research Center, Tel Aviv University, Pazy Foundation grant No.\
347853669, the United States-Israel Binational Science Foundation (BSF) grant No. 2016260, and the Israeli Science Foundation (ISF) grant No. 1810/18.
\iflong
\else
}
\fi
\end{acks} 

\clearpage
\bibliography{refs}

\iflong
\clearpage
\appendix

\section{Dual Interpolation With Clause Minimization}
\label{sec:dual-interpolation}
This algorithm computes invariants in CNF and is the dual of~\Cref{alg:itp-termmin}. It is guaranteed to converge to an overapproximation of $I$ when $I$ is forwards $k$-fenced, and converges efficiently when $I$ is a short (anti)monotone CNF formulas (see~\Cref{thm:monotone-inference-efficient-cnf}).
\begin{algorithm}[H]
\caption{Dual-Interpolation by Clause Minimization}
\label{alg:itp-termmin-dual}
\begin{algorithmic}[1]
\begin{footnotesize}
\Procedure{\dualourinterpolationalgname}{$\Init$, $\tr$, $\Bad$, $k$}
	\State $\varphi \gets \neg \Bad$
	\While{$\varphi$ not inductive} $\label{ln:dual-itp:ind-check}$
		\State \textbf{let} $\sigma, \sigma' \models \varphi \land \tr \land \neg \varphi'$ $\label{ln:dual-itp:cti}$
		\If{$\bmc{\tr}{\Init}{k} \cap \set{\sigma'} \neq \emptyset$} $\label{ln:dual-itp:check-no-overapprox}$
			\State \textbf{restart} with larger $k$ $\label{ln:dual-itp-termmin:fail}$
		\EndIf
		\State $d$ $\gets$ $\cube{\sigma'}$ $\label{ln:dual-itp:gen-start}$
		\For{$\ell$ in $d$}
			\If{$\bmc{\tr}{\Init}{k} \cap \left(d \setminus \set{\ell}\right) = \emptyset$} $\label{ln:dual-itp:bmc-gen}$
				\State $d$ $\gets$ $d \setminus \set{\ell}$
			\EndIf
		\EndFor $\label{ln:dual-itp:gen-end}$
		\State $\varphi \gets \varphi \land \neg d$ $\label{ln:dual-itp:add-conjunct}$
	\EndWhile
	\State \Return $I$
\EndProcedure
\end{footnotesize}
\end{algorithmic}
\end{algorithm} %
\section{Algorithms in Exact Concept Learning}
\label{sec:exact-learning-algs}
For an exposition of exact concept learning, see~\Cref{sec:exact-learning-background}.
In the code below,
$\membershipquery{x}$ returns $\true$ iff $x \models \varphi$ where $\varphi$ is the unknown target concept.
$\equivalencequery{H}$ returns $\bot$ if $H \equiv \varphi$ and a differentiating counterexample otherwise.

\begin{algorithm}[H]
\caption{Monotone DNF learning~\cite{DBLP:journals/ml/Angluin87}}
\label{alg:angluin-monotone}
\begin{algorithmic}[1]
\begin{footnotesize}
\Procedure{Learn-Monotone-DNF}{}
	\State $\varphi \gets \false$
	\While{$\sigma'$ $\gets$ $\equivalencequery{\varphi}$ is not $\bot$} $\label{ln:itp-exact:ind-check}$
		\If{$\membershipquery{x} \neq \true$} $\label{ln:itp-exact:check-no-overapprox}$
			\State \textbf{fail} (not monotone) $\label{ln:itp-exact:fail}$
		\EndIf
		\State $d$ $\gets$ $\cubemon{\sigma'}{\vec{0}}$ $\label{ln:itp-exact:gen-start}$
		\For{$\ell$ in $d$}
			\State \textbf{let} $x$ be the state such that $x[p_i]=\true$ iff $p_i \in d \setminus \set{\ell}$
			\If{$\membershipquery{x} = \true$} $\label{ln:itp-exact:bmc-gen}$
				\State $d$ $\gets$ $d \setminus \set{\ell}$
			\EndIf
		\EndFor $\label{ln:itp-exact:gen-end}$
		\State $\varphi \gets \varphi \lor d$ $\label{ln:itp-exact:add-disjunct}$
	\EndWhile
	\State \Return $I$
\EndProcedure
\end{footnotesize}
\end{algorithmic}
\end{algorithm}

\begin{algorithm}[H]
\caption{$\Lambda$-algorithm: exact learning with a known monotone basis~\cite{DBLP:journals/iandc/Bshouty95}}
\label{alg:bshouty-exact-known-basis}
\begin{algorithmic}[1]
\begin{footnotesize}
\State Assuming a known basis $\set{a_1,\ldots,a_t}$ (\Cref{def:monotone-basis})
\Procedure{$\Lambda$-algorithm}{}
	\State $H_1,\ldots,H_t \gets \false$
	\While{$\sigma'$ $\gets$ $\equivalencequery{\bigwedge_{i=1}^{t}{H_i}}$ is not $\bot$} $\label{$ln:bhsouty-exact:positive-example}$
		\For{$i=1,\ldots,t$}
			\If{$\sigma' \not\models H_i$}
				\State $d$ $\gets$ \Call{monotone-generalize-membership}{$\sigma'$, $a_i$, $\Init$, $\tr$, $k$}
				\State $H_i \gets H_i \lor d$
			\EndIf
		\EndFor
	\EndWhile
	\State \Return $\bigwedge_{i=1}^{t}{H_i}$
\EndProcedure

\Procedure{monotone-generalize-membership}{$\sigma$, $a$, $\tr$, $\Bad$, $k$}
	\If{$\membershipquery{x} \neq \true$} $\label{ln:bshouty-exact:check-no-overapprox}$
		\State \textbf{fail} (not a basis) $\label{ln:bshouty-exact:fail}$
	\EndIf
	\State $v \gets \sigma$
    \State walked $\gets$ $\true$
	\While{walked}
		\State walked $\gets$ $\false$
		\For{$j=1,\ldots,n$ such that $x[p_j] \neq v[p_j]$}
			\State $x \gets v[p_j \mapsto a[p_j]]$
			\If{$\membershipquery{x} = \true$} $\label{ln:bshouty-exact:check-membership}$
				\State $v \gets$ x
				\State walked $\gets$ $\true$
			\EndIf
		\EndFor
	\EndWhile
	\State \Return{$\cubemon{v}{a}$}
\EndProcedure
\end{footnotesize}
\end{algorithmic}
\end{algorithm} 
\fi

\end{document}